%% file: main.tex
\documentclass{article}

\usepackage[affil-it]{authblk}
\usepackage[dvipsnames]{xcolor}
\usepackage{amsfonts}
\usepackage{amsmath,amsthm,amssymb,dsfont}

\usepackage{enumerate}
\usepackage{graphicx}	
\usepackage{subcaption}
\usepackage[margin=3cm]{geometry}
\usepackage{url}
\usepackage{todonotes}
\usepackage{bbm}

\usepackage{tikz}

\usepackage{pifont}
\usepackage{multirow}
\usepackage{makecell}

\usepackage{epsfig}
\usetikzlibrary{shapes.symbols,patterns} 
\usepackage{pgfplots}
\pgfplotsset{compat=1.10}
\usepgfplotslibrary{fillbetween}

\definecolor{linkblue}{HTML}{001487}
\usepackage{hyperref}
\hypersetup{colorlinks=true,citecolor=linkblue,linkcolor=linkblue,filecolor=linkblue,urlcolor=linkblue,breaklinks=true}

\usepackage{nicefrac}
\usepackage{mathtools}

\usepackage{thmtools} 
\hypersetup{hypertexnames=false}

\usepackage{algorithm}
\usepackage{algorithmic}

\usepackage{mdframed}
\usepackage{aligned-overset}
\usepackage{circuitikz}

\theoremstyle{plain}
\newtheorem{theorem}{Theorem}[section]
\newtheorem{lemma}[theorem]{Lemma}

\newtheorem{corollary}[theorem]{Corollary}
\newtheorem{proposition}[theorem]{Proposition}

\theoremstyle{definition}
\newtheorem{definition}[theorem]{Definition}
\newtheorem{remark}[theorem]{Remark}
\newtheorem{example}[theorem]{Example}

\usepackage{tcolorbox}
\tcolorboxenvironment{definition}{}
\tcolorboxenvironment{theorem}{colback=pink!25!white,colframe=pink!100!black}
\tcolorboxenvironment{lemma}{colback=yellow!5!white,colframe=yellow!75!black}
\tcolorboxenvironment{corollary}{colback=Dandelion!5!white,colframe=Dandelion!75!black}
\tcolorboxenvironment{proposition}{colback=Emerald!5!white,colframe=Emerald!75!black}
\tcolorboxenvironment{claim}{colback=RoyalBlue!5!white,colframe=RoyalBlue!75!black}
\tcolorboxenvironment{conjecture}{colback=red!5!white,colframe=red!75!black}
\tcolorboxenvironment{example}{colback=white,colframe=lightgray}
\tcolorboxenvironment{remark}{colback=white,colframe=lightgray}
\tcolorboxenvironment{question}{colback=lightgray!5!white,colframe=lightgray!75!black}

\DeclareRobustCommand{\abbrevcrefs}{%
\Crefname{theorem}{Thm.}{Thms.}%
\Crefname{corollary}{Cor.}{Cors.}%
\Crefname{lemma}{Lem.}{Lems.}%
\Crefname{remark}{Rmk.}{Rmks.}%
\Crefname{proposition}{Prop.}{Props.}%
\Crefname{equation}{Eq.}{Eqs.}%
\Crefname{example}{Ex.}{Exs.}%
}

\DeclareMathOperator*{\argmin}{argmin}

\DeclareRobustCommand{\Cshref}[1]{{\abbrevcrefs\Cref{#1}}}

\newcommand*{\ee}{\mathrm{e}}

\newcommand*{\cA}{\mathcal{A}}
\newcommand*{\cB}{\mathcal{B}}

\newcommand*{\cE}{\mathcal{E}}
\newcommand*{\cF}{\mathcal{F}}

\newcommand*{\cH}{\mathcal{H}}

\newcommand*{\cI}{\mathcal{I}}
\newcommand*{\cK}{\mathcal{K}}

\newcommand*{\cM}{\mathcal{M}}
\newcommand*{\cP}{\mathcal{P}}

\newcommand*{\cR}{\mathcal{R}}
\newcommand*{\cS}{\mathcal{S}}
\newcommand*{\cT}{\mathcal{T}}

\newcommand*{\cV}{\mathcal{V}}

\newcommand*{\cX}{\mathcal{X}}

\newcommand*{\N}{\mathbb{N}}

\newcommand*{\R}{\mathbb{R}}

\newcommand*{\C}{\mathbb{C}}

\newcommand*{\St}{\mathrm{S}}

\newcommand*{\reg}{\infty}

\newcommand*{\eps}{\varepsilon}

\newcommand*{\id}{\mathds{1}}
\newcommand*{\poly}{\mathrm{poly}}

\newcommand*{\supp}{\mathrm{supp}}
\newcommand*{\tr}{\mathrm{tr}}
\newcommand*{\ket}[1]{| #1 \rangle}
\newcommand*{\bra}[1]{\langle #1 |}
\newcommand*{\spr}[2]{\langle #1 | #2 \rangle}
\newcommand{\proj}[1]{|#1\rangle\!\langle #1|}
\newcommand*{\braket}[1]{\langle #1 \rangle}

\newcommand*{\D}{\mathrm{D}}
\newcommand*{\conv}{\mathrm{conv}}

\newcommand*{\ci}{\mathrm{i}} 
\newcommand*{\di}{\mathrm{d}} 

\newcommand{\norm}[1]{\left\lVert#1\right\rVert}

\usepackage{float}
\usepackage[nameinlink,capitalize,noabbrev]{cleveref}

 \allowdisplaybreaks

\title{Robust generalized quantum Stein's lemma}

\author{\normalsize Giulia Mazzola$^{1}$, David Sutter$^{2}$, and Renato Renner$^{1}$}
 \affil{\small $^{1}$Institute for Theoretical Physics, ETH Zurich\\
 $^{2}$IBM Research Europe -- Zurich
}
 \date{}

\begin{document}

\maketitle

\begin{abstract}
The generalized quantum Stein's lemma provides an explicit expression for the optimal error exponent when distinguishing many independent and identically distributed (iid) copies of a given bipartite state from the set of separable bipartite states. Here we prove that this result is robust, in the sense that the iid assumption can be relaxed to almost-iid.  In particular, our result shows that the original argument of Brand\~ao and Plenio, which contains a logical gap, can be made rigorous. Our proof relies on a novel continuity bound for the relative entropy of entanglement with respect to the quantum Wasserstein distance. Combined with a recent insight that almost-iid states and their exact iid counterparts are asymptotically close in this distance, the bound implies that their relative entropies of entanglement coincide asymptotically.
\end{abstract}

\section{Introduction}
Suppose we are given an unknown source that produces either a quantum state $\rho$ (null hypothesis) or $\sigma$ (alternative hypothesis). The task is to perform an optimal measurement that allows us to determine whether the source produced $\rho$ or $\sigma$. Since the source is binary, the optimal measurement can be described by a two-outcome positive operator-valued measure (POVM) $\{M,\id - M\}$.
There are two types of errors that can occur: we guess $\sigma$ but the source produced $\rho$ (type I error), which occurs with probability $\alpha(M,\rho):=\tr[\rho (\id-M)]$, or we guess $\rho$ but the source produced $\sigma$ (type II error), which occurs with probability $\beta(M,\sigma):=\tr[\sigma M]$. 
\emph{Asymmetric hypothesis testing} aims to minimize the probability of a type II error for a fixed type I error probability of at most $\eps \in (0,1)$, which is captured by the quantity 
\begin{align} \label{eq_beta_function}
 \beta_{\eps}(\rho\|\sigma) := \min_{0 \leq M \leq \id} \big\{ \beta(M,\sigma): \alpha(M,\rho) \leq \eps \big\}  \, .
\end{align}

In the simplest setting, we assume that the source outputs independent and identically distributed (iid) copies of the states (see~\cref{fig_iid}). We then want to understand how fast the error probability $\beta_{\eps}(\rho^{\otimes n}\|\sigma^{\otimes n})$ decays in $n$, which can be rewritten as an error exponent $-\frac{1}{n} \log \beta_{\eps}(\rho^{\otimes n}\|\sigma^{\otimes n}) = \frac{1}{n} D^{\eps}_H(\rho^{\otimes n}\|\sigma^{\otimes n})$, where $D^\eps_H$ denotes the \emph{hypothesis testing relative entropy}~\cite{buscemi10,WR12}
 \begin{align} \label{eq_def_DH}
 D_H^\eps(\rho \| \sigma) := - \log \min_{0 \leq X \leq \id } \big\{\tr[\sigma X]:  \tr[\rho X] \geq 1- \eps \big\} \, .
 \end{align}
The \emph{quantum Stein's lemma}~\cite{PH91,ogawa00} proves that the error exponent for asymmetric hypothesis testing in an iid setting is given by the \emph{relative entropy}, i.e.
\begin{align} \label{eq_SL}
\lim_{n \to \infty} \frac{1}{n} D^{\eps}_H(\rho^{\otimes n} \| \sigma^{\otimes n}) = D(\rho \| \sigma) \, ,
\end{align}
where $D(\rho \| \sigma):=\tr[\rho (\log \rho - \log \sigma)]$ if the support of $\rho$ is included in the support of $\sigma$, and $+\infty$ otherwise.
\begin{figure}[!htb]
    \centering
    \begin{subfigure}[t]{0.49\textwidth}
        \centering
         \input{iid_HT}
        \caption{Distinguishing two iid states\\ \phantom{b}}
        \label{fig_iid}
    \end{subfigure}
        \hfill
    \begin{subfigure}[t]{0.49\textwidth}
            \centering
        \input{non_iid_HT}
        \caption{Distinguishing two almost-iid states}
        \label{fig_non_iid}
    \end{subfigure}
\caption{Hypothesis testing in an iid and an almost-iid setting. In~\cref{fig_iid} we have a source that outputs a state that is iid in $\rho$ or $\sigma$. In~\cref{fig_non_iid} the source generates a state that is almost-iid along $\rho$ or $\sigma$. After an optimized measurement, described by a POVM $\{M_n,\id - M_n\}$, the task is to guess whether the source produced a state corresponding to $\rho$ or $\sigma$. 
}
\label{fig_HT}
\end{figure}

The assumption of a perfect iid structure is hard to justify in practice. Therefore, it is desirable to relax this assumption, which leads to the notion of almost-iid states.  Concretely, we consider a superposition of iid states that contain a small number of defects, denoted by $r$, which usually scales sublinearly in $n$. We say $\rho_n$ is a $\binom{n}{r}$-almost-iid state along $\rho$ and denote the set of all such states by $\St^n(\cH, \rho^{\otimes n-r})$.\footnote{See~\cref{def_almost_product_state_mixed} for a precise definition of almost-iid states.} 
We refer to~\cite{MSR26} for a more detailed discussion on the physical relevance of almost-iid states and their mathematical properties.

This motivates the study of a robust hypothesis testing setting, where we do not want to distinguish between $\rho^{\otimes n}$ and $\sigma^{\otimes n}$, but rather between the two sets of \smash{$\binom{n}{r_1}$}- and \smash{$\binom{n}{r_2}$}-almost-iid states along $\rho$ and $\sigma$, respectively (see~\cref{fig_non_iid}).
Such a setting generalizes the iid case mentioned in~\cref{eq_SL}, because the iid state corresponds to an almost-iid set with zero defects ($r=0$), i.e.~$\{\rho^{\otimes n}\} = \St^n(\cH, \rho^{\otimes n})$.
We quantify the optimal error exponent for this task, which measures the ``worst-case error'' in the sense that it upper bounds the error for any possible almost-iid states along $\rho$ and $\sigma$, with at most $r_1$ and $r_2$ defects, respectively. 
In mathematical terms, we show that for $r_1=o(n)$ and $r_2=o(n)$
\begin{align} \label{eq_ASL}
\lim_{n \to \infty} \frac{1}{n} \min_{\rho_n \in \St^n(\cH, \rho^{\otimes n-r_1})} \min_{\sigma_n \in \St^n(\cH, \sigma^{\otimes n-r_2})}  D^{\eps}_H(\rho_n \| \sigma_n) = D(\rho \| \sigma) \, .
\end{align}  
We refer to \Cref{eq_ASL} as a \emph{robust quantum Stein’s lemma}: it states that, for almost-iid states, the optimal error exponent coincides with that of perfect iid states in the original quantum Stein’s lemma. The latter is recovered as a special case of the robust version by setting the numbers of defects to $r_1=r_2=0$, so that \cref{eq_ASL} reduces to \cref{eq_SL}. The precise statement is given in \cref{thm_ASL}.

In~\cite{brandao_Stein_10}, the setting has been generalized such that the alternative hypothesis consists of a family of states, which does not need to have an iid or almost-iid structure. For example, we may consider a bipartite Hilbert space $\cH_{AB} = \cH_A \otimes \cH_B$ and density matrices $\rho_{AB}$ and $\sigma_{AB}$ where $\sigma_{AB}$ is from the set of separable density operators, which is defined as $\mathrm{SEP}(A:B):=\conv\{ \proj{\phi}_A \otimes \proj{\varphi}_B : \ket{\phi}_A \in \cH_A, \ket{\varphi}_B \in \cH_B, \braket{\phi|\phi}=\braket{\varphi|\varphi}=1\}$. We then want to distinguish between $\rho_{AB}^{\otimes n}$ and any arbitrary $\sigma_n \in \mathrm{SEP}(A^n:B^n)$, which need not be an iid or an almost-iid state.
Similarly to before, the goal is to understand how quickly $\max_{\sigma_n \in \mathrm{SEP}(A^n:B^n)} \beta_{\eps}(\rho^{\otimes n}\|\sigma_n)$ decays to zero as $n$ grows large, or equivalently, the asymptotic behavior of the error exponent \smash{$\frac{1}{n} \min_{\sigma_n \in \mathrm{SEP}(A^n:B^n)} D^{\eps}_H(\rho^{\otimes n} \| \sigma_n)$}. This setting is sometimes referred to as \emph{entanglement testing} (see~\cref{fig_iid_entag}).

A popular measure to quantify the amount of entanglement is the \emph{relative entropy of entanglement}~\cite{VPRK97} defined as
\begin{align}  \label{eq_def_RE}
D(\rho_{AB} \| \mathrm{SEP}):= \min_{\sigma_{AB} \in \mathrm{SEP}(A:B)} D(\rho_{AB} \| \sigma_{AB}) \, .
\end{align}
The relative entropy of entanglement is not additive under the tensor product~\cite{VW01}, which justifies the definition of a regularized version $D^{\reg}(\rho_{AB} \| \mathrm{SEP}):=\lim_{n \to \infty} \frac{1}{n} D(\rho_{AB}^{\otimes n} \| \mathrm{SEP}(A^n:B^n))$. The limit in the regularization exists due to Fekete's subadditivity lemma, as explained in~\cref{lem_fekete}.
\begin{figure}[!htb]
    \centering
    \begin{subfigure}[b]{0.45\textwidth}
        \centering
         \input{iid_HT_ET}
        \caption{Distinguishing an iid from a separable state}
        \label{fig_iid_entag}
    \end{subfigure}
        \hfill
    \begin{subfigure}[b]{0.53\textwidth}
            \centering
        \input{non_iid_HT_ET}
        \caption{Distinguishing an almost-iid from a separable state}
        \label{fig_non_iid_entag}
    \end{subfigure}
\caption{Entanglement testing in an iid and an almost-iid setting. In~\cref{fig_iid_entag} we have a source that outputs an iid state in $\rho_{AB}$ or a separable state. In~\cref{fig_non_iid_entag} the source generates a state that is almost-iid along $\rho_{AB}$ or a separable state.
After an optimized measurement, described by a POVM $\{M_n,\id - M_n\}$, the task is to guess if the source produced a state corresponding to $\rho_{AB}$ or a separable state.}
\label{fig_HT_entag}
\end{figure}

The \emph{generalized quantum Stein's lemma}~\cite{brandao_Stein_10,haya_stein_25,ludo25} states that the error exponent for entanglement testing is given by the regularized relative entropy of entanglement, i.e.,
\begin{align} \label{eq_generalized_quantum_stein_intro}
\lim_{n \to \infty} \frac{1}{n} \min_{\sigma_n \in \mathrm{SEP}(A^n:B^n)} D^{\eps}_H(\rho_{AB}^{\otimes n} \| \sigma_n) 
= D^{\reg}(\rho_{AB} \| \mathrm{SEP}) \, .
\end{align}
Analogously to the robust version of the quantum Stein's lemma discussed above, we also prove a robust version of this generalization, i.e., we quantify the error exponent for entanglement testing of almost-iid states --- again in a ``worst-case'' setting in the sense that we upper bound the error for any almost-iid state with $r$ defects (see~\cref{fig_non_iid_entag}). In mathematical terms, we show that for $r=o(n)$
\begin{align} \label{eq_AGSL}
\lim_{n \to \infty} \frac{1}{n} \min_{\rho_n \in \St^n(\cH_{AB}, \rho_{AB}^{\otimes n-r})}  \min_{\sigma_n \in \mathrm{SEP}(A^n:B^n)} D^{\eps}_H(\rho_n \| \sigma_n) 
=  D^{\reg}(\rho_{AB} \| \mathrm{SEP}) \, .
\end{align}
We call \Cref{eq_AGSL} a \emph{robust generalized quantum Stein's lemma} and refer to~\cref{thm_AGSL} for the precise statement. Choosing $r=0$ reveals that~\cref{eq_AGSL} implies~\cref{eq_generalized_quantum_stein_intro}. We note that special cases of \cref{eq_AGSL} have been known before. In particular, the case where the defect parameter $r$ is constant in $n$ was established in~\cite[Theorem~32]{ludo25}. Furthermore, the case where all states correspond to classical probability distributions while $r=o(n)$ was proved in~\cite[Section~5.3]{Ludo25_2}.	
Recently, the optimal error exponent was determined for a related setting, where the null hypothesis is composed of iid copies of an unknown state~\cite{LRT26}.

To prove our results, we leverage the mathematical structure of almost-iid states. The quantum Wasserstein distance introduced in~\cite{PMTL21} provides a useful tool for their characterization, as it is inherently robust under local changes. Recently, it was observed in~\cite[Corollary 18]{Giacomo_private} that almost-iid states (see~\cref{def_almost_product_state_mixed}) are close in quantum Wasserstein distance to their corresponding iid states. This naturally raises the question of whether the relative entropy of entanglement is continuous with respect to the quantum Wasserstein distance. We answer this question in the affirmative. Combined with the insight from~\cite{Giacomo_private}, this allows us to derive~\cref{eq_AGSL}.

\paragraph{Main results} The main contributions of this paper are summarized as follows:
\begin{enumerate}[(a)]
\item We prove a robust generalized quantum Stein's lemma (see~\cref{eq_AGSL}), which shows that the optimal error exponent for entanglement testing with respect to almost-iid states is given by the regularized relative entropy of entanglement of the corresponding iid state. We refer to~\cref{thm_AGSL} for a rigorous statement. \label{res_AGSL}
\item Based on~\eqref{res_AGSL}, we prove a robust version of the quantum Stein's lemma (see~\cref{eq_ASL}), which shows that the relative entropy is not only the optimal error exponent between distinguishing between two iid sources but also for two almost-iid sources. We refer to~\cref{thm_ASL} for a rigorous statement. 
\item We prove an explicit continuity bound for the relative entropy of entanglement with respect to the quantum Wasserstein distance~\cite{PMTL21} (see~\cref{prop_continuity_ER_Wasserstein}). Since the Wasserstein distance is weaker than the trace distance, this bound is stronger than previously known continuity bounds formulated in terms of the trace distance. 
Moreover, as almost-iid states are close to iid states with respect to the Wasserstein distance~\cite{Giacomo_private}, the relative entropy of entanglement asymptotically coincides for almost-iid and iid states (see~\cref{cor_asymptotic_continuity_ER}).
\item  Our results show that the original proof of the generalized quantum Stein's lemma~\cite{brandao_Stein_10}, which contains a lemma with a logical gap~\cite{Berta2023gapinproofof,stein_nature_24}, is correct. This yields a third independent proof (based on the exponential de Finetti theorem and properties of almost-iid states) for the generalized quantum Stein's lemma after a proof~\cite{haya_stein_25} (based on the spectral pinching method) and another proof~\cite{ludo25} (based on the quantum blurring technique).
\end{enumerate}
\Cref{fig_results} shows how the new results relate to the existing (generalized) quantum Stein's lemma.
The technical results given in~\cref{sec_robustness_ER,sec_robust_GSL,sec_superadditivity} are stated in a general resource-theoretic framework as done in~\cite{brandao_Stein_10,haya_stein_25,ludo25}, which goes beyond entanglement testing. 
\begin{figure}[!htb]
    \centering
\input{overview_results}
\caption{Overview of the results. Solid arrows denote straightforward implications, while dashed arrows are nontrivial ones.}
\label{fig_results}
\end{figure}

\paragraph{Paper organization}
\Cref{sec_intro} reviews the precise definition of almost-iid states and of relevant entropic quantities. In~\cref{sec_robustness_ER} we show that the relative entropy of entanglement is continuous with respect to the Wasserstein distance, implying that it exhibits the same asymptotic behavior for almost-iid and iid states. This technical result serves as a key ingredient in the proof of the robust (generalized) quantum Stein's lemma, presented in~\cref{sec_robust_GSL}. We conclude in~\cref{sec_superadditivity} showing that the relative entropy of entanglement is asymptotically superadditive for states with equal marginals. This result may be of independent interest and follows from the novel continuity statement presented in~\cref{sec_robustness_ER}. 

\section{Preliminaries} \label{sec_intro}
\subsection{Almost-iid states} \label{subsec_almost_iid}
Let $\St(\cH)$ denote the set of density matrices on $\cH$, $\cS_n$ be the set of permutations on $[n]:=\{1,\ldots, n\}$, and define the set $\cV(\cH^{\otimes n}, \ket{\theta}^{\otimes m}):=\{\pi(\ket{\theta}^{\otimes m} \otimes \ket{\Omega^{(n -m)}}): \pi \in \cS_n, \ket{\Omega^{(n -m)}} \in \cH^{\otimes n -m} \}$.
For an arbitrary matrix $X$ we denote its trace-norm by $\norm{X}_1 := \tr[\sqrt{X^\dagger X}]$. 

\begin{definition}[Almost-iid states~\cite{MSR26}] \label{def_almost_product_state_mixed}
Let $\cH_A$ be a Hilbert space, $\sigma_A \in \St(\cH_A)$, and $n,r \in \N$ such that $r \leq n$. Then, $\rho_{A^n} \in \St(\cH_A^{\otimes n})$ is called a $\binom{n}{r}$-\emph{almost-iid state along} $\sigma_A $ if there exists a purification $\ket{\theta}_{AE}$ of $\sigma_A$ and an extension $\rho_{A^n E^n}$ of $\rho_{A^n}$ such that
\begin{enumerate}[(i)]
\item $\rho_{A^n E^n}$ is permutation invariant under any transition $(A_i,E_i) \leftrightarrow (A_j,E_j)$;\label{it_first_def}
\item $\supp(\rho_{A^n E^n}) \subseteq  \mathrm{span}\,\cV(\cH_{AE}^{\otimes n},\ket{\theta}_{AE}^{\otimes n-r})$. \label{it_second_def}
\end{enumerate}
The set of $\binom{n}{r}$-almost-iid states along $\sigma_A$ is denoted by $\St^n(\cH_A, \sigma_A^{\otimes n-r})$.
\end{definition}
We call $\rho_{A^n}$ a \smash{$\binom{n}{r}$}-\emph{generalized-almost-iid-state along} $\sigma_A$ if it satisfies~\eqref{it_second_def} but not necessarily~\eqref{it_first_def}. We then write  \smash{$\rho_{A^n} \in \bar \St^n(\cH_A, \sigma_A^{\otimes n-r})$}. Trivially, we have $\St^n(\cH_A, \sigma_A^{\otimes n-r}) \subseteq \bar \St^n(\cH_A, \sigma_A^{\otimes n-r})$.
We refer the interested reader to~\cite{MSR26} for a detailed discussion about the relevance and mathematical properties of almost-iid states.

Almost-iid states are not necessarily close to iid states in trace distance. A simple example that demonstrates this is $\rho_{A^n}:= \sigma_{A}^{\otimes n-1} \otimes \omega_A$, which is a (generalized) $\binom{n}{1}$-almost-iid state along $\sigma_A$ for some arbitrary ``defect" term $\omega_A \in \St(\cH_A)$. However, $\| \rho_{A^n} - \sigma_A^{\otimes n}\|_1 = \| \sigma_{A}^{\otimes n-1} \otimes (\omega_A - \sigma_A)\|_1 = \norm{\omega_A - \sigma_A}_1$, which can be large.
Interestingly, almost-iid states are (asymptotically) close to iid states in a weaker distance called \emph{quantum Wasserstein distance of order 1}~\cite{PMTL21}. This is made rigorous in~\cref{lem_amost_iid_close_to_iid_W1}. 

Informally~\cite[Definition~4]{PMTL21}, two quantum states on $n$ subsystems are called neighboring if they coincide after a suitable subsystem is discarded. The quantum Wasserstein distance of order 1 is the maximum distance induced by a norm that assigns a distance at most one to any couple of neighboring states.

\begin{definition}[Quantum Wasserstein distance of order 1~\cite{PMTL21}] \label{def_wasserstein}
Let $\omega, \tau \in \St(\cH^{\otimes n})$. Then, 
\begin{align*}
\norm{\omega - \tau}_{W_1} : = \frac{1}{2} \min\bigg\{  \sum_{i=1}^n \norm{X^{(i)}}_1 :\, \tr_i[X^{(i)}] =0 ,\ \omega-\tau = \sum_{i=1}^n X^{(i)} \bigg\} \, .
\end{align*}
\end{definition}
\noindent
Since \smash{$\norm{\omega - \tau}_{W_1} \in [0,n]$}~\cite[Proposition~2]{PMTL21}, it is natural to consider the normalized Wasserstein distance $\frac{1}{n}\norm{\omega - \tau}_{W_1}$.
The normalized Wasserstein distance is a weaker metric than the trace distance in the sense that for any $\omega, \tau \in \St(\cH^{\otimes n})$ we have
\begin{align} \label{eq_W1_to_TraceDist}
\frac{1}{n} \norm{\omega - \tau}_{W_1} 
\overset{\textnormal{\cite[Prop.~2]{PMTL21}}}{\leq} \frac{1}{2} \norm{\omega - \tau}_{1} \, . 
\end{align}
It is well-known that there exist states which are close in (normalized) Wasserstein distance but not close in trace distance.
A simple example is the one mentioned above, namely $\rho_{A^n}:= \sigma_{A}^{\otimes n-1} \otimes \omega_A$. In this case $\norm{\rho_{A^n} - \sigma_A^{\otimes n}}_1 = \norm{\omega_A - \sigma_A}_1$, which can be large. At the same time~\cite[Propositions~2 and~4]{PMTL21} show that $\frac{1}{n}\norm{\rho_{A^n} - \sigma_A^{\otimes n}}_{W_1} = \frac{1}{n} \norm{\omega_A - \sigma_A}_{W_1}= \frac{1}{2n} \norm{\omega_A - \sigma_A}_1$ which is small for large $n$.

 It has first been realized by~\cite{Giacomo_private} that almost-iid and iid states are close in (normalized) Wasserstein distance. The next lemma is a variant of~\cite[Corollary~18]{Giacomo_private} that avoids the dimension-dependence and has a better asymptotic scaling. Its proof uses techniques developed in~\cite[Proposition~2.6]{MSR26} and~\cite{PMTL21}.
\begin{lemma}[Almost-iid states are close to iid states in terms of Wasserstein distance] \label{lem_amost_iid_close_to_iid_W1}
Let $n,r \in \N$ be such that $r \leq n$, $\sigma_{A} \in \St(\cH_{A})$, and $\rho_{A^n} \in \bar \St^n(\cH_{A}, \sigma_{A}^{\otimes n -r})$. Then
\begin{align}
\frac{1}{n} \norm{\rho_{A^n} - \sigma_A^{\otimes n}}_{W_1} \leq  2 \sqrt{\frac{r}{n}} \, .
\end{align}
\end{lemma}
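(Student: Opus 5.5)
The plan is to work on the purified system $AE$, bound the Wasserstein distance there by a telescoping (hybrid) argument, and then trace out $E^n$.

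\textbf{Reduction to the purified system.} Let $\ket{\theta}_{AE}$ and $\rho_{A^nE^n}$ be as in~\cref{def_almost_product_state_mixed}~\eqref{it_second_def}. First I would check that $\norm{\cdot}_{W_1}$ does not increase under $\tr_{E^n}$. Take any admissible decomposition $\rho_{A^nE^n}-\theta^{\otimes n}=\sum_i X^{(i)}$ with $\tr_{A_iE_i}X^{(i)}=0$. Then $Y^{(i)}:=\tr_{E^n}X^{(i)}$ satisfies $\tr_{A_i}Y^{(i)}=0$, because partial traces over different subsystems commute. Moreover $\norm{Y^{(i)}}_1\le\norm{X^{(i)}}_1$. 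Since $\tr_{E^n}\theta^{\otimes n}=\sigma_A^{\otimes n}$, it therefore suffices to show
\begin{align*}
\norm{\rho_{A^nE^n}-\theta_{AE}^{\otimes n}}_{W_1}\le 2\sqrt{nr}\,,
\end{align*}
where $\theta:=\proj{\theta}$ and the sites are the pairs $(A_iE_i)$.

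\textbf{Using the support condition.} Let $P_i$ be the projector $\theta$ acting on site $i$, and set $N:=\sum_i(\id-P_i)$. The operators $P_i$ commute, so $N$ is diagonal in their joint eigenbasis and counts the sites that are ``not in $\theta$''. Every vector $\pi(\ket{\theta}^{\otimes n-r}\otimes\ket{\Omega})$ has at least $n-r$ sites exactly equal to $\ket\theta$. Hence it lies in the range of the spectral projector $Q_{\le r}$ of $N$ onto eigenvalues at most $r$. This range is a subspace, so $\supp(\rho_{A^nE^n})$ lies in it as well, and therefore $\sum_k\eps_k=\tr[N\rho_{A^nE^n}]\le r$, where $\eps_k:=\tr[(\id-P_k)\rho_{A^nE^n}]$.

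\textbf{Telescoping.} Define the hybrids
\begin{align*}
\tau_k:=\theta^{\otimes k}\otimes\rho_{(AE)_{k+1}\cdots(AE)_n}\,,
\end{align*}
so that $\tau_0=\rho$ and $\tau_n=\theta^{\otimes n}$. In the difference $X^{(k)}:=\tau_{k-1}-\tau_k$ the tensor factor $\theta^{\otimes k-1}$ can be pulled out. The remaining factor is $\rho_{k\cdots n}-\theta\otimes\rho_{k+1\cdots n}$, so $\tr_k X^{(k)}=0$. It remains to bound the trace norm of $\rho_{k\cdots n}-\theta\otimes\rho_{k+1\cdots n}$. Because $\theta$ is pure, $P_k\rho_{k\cdots n}P_k=\theta\otimes\omega$ for some $\omega\ge0$. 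The gentle measurement lemma gives
\begin{align*}
\norm{\rho_{k\cdots n}-P_k\rho_{k\cdots n}P_k}_1\le 2\sqrt{\eps_k}\,.
\end{align*}
Furthermore,
\begin{align*}
\rho_{k+1\cdots n}-\omega=\tr_k\!\big[(\id-P_k)\rho_{k\cdots n}(\id-P_k)\big]\,,
\end{align*}
since the cross terms vanish under $\tr_k$; this operator has trace norm $\eps_k$. Together, $\norm{X^{(k)}}_1\le 2\sqrt{\eps_k}+\eps_k\le3\sqrt{\eps_k}$.

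\textbf{Conclusion.} Summing over $k$ and applying Cauchy--Schwarz,
\begin{align*}
\norm{\rho_{A^nE^n}-\theta^{\otimes n}}_{W_1}\le\tfrac12\sum_k 3\sqrt{\eps_k}\le\tfrac32\sqrt{n\textstyle\sum_k\eps_k}\le\tfrac32\sqrt{nr}\,.
\end{align*}
Dividing by $n$ gives $\frac32\sqrt{r/n}$, which is at most $2\sqrt{r/n}$, as claimed.

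The main obstacle is the local trace-norm estimate for $X^{(k)}$, in particular getting a $\sqrt{\eps_k}$ dependence rather than something worse. The identity $\sum_k\eps_k\le r$, which follows from the support condition, is the other key point. Permutation invariance is never used, consistent with the lemma being stated for $\bar\St^n$.
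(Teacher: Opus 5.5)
Your proof is correct and follows essentially the same route as the paper: purify to $A^nE^n$, telescope through hybrid states that differ on a single site, apply the gentle measurement lemma to the projector onto $\ket{\theta}$ at that site, use Cauchy--Schwarz together with $\sum_k \eps_k \le r$, and finish with monotonicity of $\norm{\cdot}_{W_1}$ under $\tr_{E^n}$. The differences are minor refinements. You derive $\sum_k \eps_k \le r$ from the number operator $N=\sum_i(\id-P_i)$ instead of from an explicit orthonormal decomposition of $\rho_{A^nE^n}$, and you bound the marginal mismatch exactly by $\eps_k$ instead of by a second gentle-measurement term $2\sqrt{w_i}$ as the paper does, which improves the paper's constant $2$ to $\tfrac32$.
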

\begin{proof}
The argument follows a strategy similar to that of~\cite[Proposition~2.6]{MSR26}.
Let $\ket{\theta}_{AE}$ be a purification of $\sigma_A$.
By~\cref{def_almost_product_state_mixed} there exists an extension $\rho_{A^{n} E^n}$ of $\rho_{A^n}$ that can be written as
\begin{align} \label{eq_ONB_dec}
\rho_{A^{n} E^n }= \sum_{t,t' \in \cT} \beta_{t,t'} \ket{ \Psi_t} \bra{\Psi_{t'}} \, ,
\end{align}
for a family $\{\ket{\Psi_t} \}_{t \in \cT}$ of orthonormal vectors from $\cV(\cH_{AE}^{\otimes n},\ket{\theta}_{AE}^{\otimes n-r})$ and  $\sum_{t \in \cT } \beta_{t,t} =1$.
For any $t \in \cT$ let $\cF_t \subseteq [n]$ denote the set of sites on which $\ket{\Psi_t}$ is not of the form $\ket{\theta}_{AE}$ and note that
\begin{align} 
| \cF_t | \leq r \, . \label{eq_size_set_of_defects}    
\end{align}
For any site $i \in [n]$, the total weight of the vectors $\ket{\Psi_t}$ that deviate from $\ket{\theta}_{AE}$ is given by
\begin{align}
w_i := \sum_{t \in \cT} \beta_{t,t} \id_{\{i \in \cF_t\}} \,, \label{eq_def_weight}
\end{align}
where $\id_{\{\textnormal{statement} \}}= 1$ if statement is true and $0$, otherwise.
Summing over all sites yields
\begin{align}
\sum_{i=1}^n w_i 
\overset{\textnormal{\Cshref{eq_def_weight}}}{=} \sum_{i=1}^n\sum_{t \in \cT} \beta_{t,t} \id_{\{ i \in \cF_t \}}  
= \sum_{t \in \cT}  \beta_{t,t}  \sum_{i=1}^n \id_{ \{ i \in \cF_t \} }  
\overset{\textnormal{\Cshref{eq_size_set_of_defects}}}{\leq}  \sum_{t \in \cT}  \beta_{t,t}  r 
=r \, . \label{eq_sum_weights}
\end{align}

For any site $i \in [n]$, let $\Pi_i$ be the projector onto the subspace spanned by $\{\ket{ \Psi_t} : i \not \in \cF_t \}$, i.e. 
\begin{align} \label{eq_def_PI}
\Pi_i = \sum_{t \in \cT : i \not \in \cF_t } \proj{\Psi_t}   \qquad \textnormal{and} \qquad \Pi_i^\perp = \sum_{t \in \cT : i \in \cF_t }\proj{\Psi_t} \, .
\end{align}
Note that $\Pi_i \, \Pi_i^\perp = 0$ and $\Pi_i + \Pi_i^\perp =\id_{\mathrm{span}\, \cV(\cH_{AE}^{\otimes n},\ket{\theta}_{AE}^{\otimes n-r})}$. 
We next observe that
\begin{align}
\tr[\Pi_i \rho_{A^n E^n} ]
\!=\!1\!-\!\tr[\Pi_i^\perp \rho_{A^n E^n} ]
\overset{\textnormal{\Cshref{eq_def_PI}}}{=} 1 \!-\!\!\!\!\!\!\! \sum_{t \in \cT: i \in \cF_t}\!\! \bra{\Psi_t} \rho_{A^n E^n} \ket{\Psi_t}
\overset{\textnormal{\Cshref{eq_ONB_dec}}}{=} 1\!-\!\!\!\!\!\!\! \sum_{t \in \cT :i \in \cF_t}\!\! \beta_{t,t}
\!\overset{\textnormal{\Cshref{eq_def_weight}}}{=}\!1 \!-\! w_i \, .  \label{eq_renato_dist4}
\end{align}
For
\begin{align} \label{eq_bar_rho}
\bar \rho^{(i)}_{A^n E^n} := \frac{\Pi_i \rho_{A^n E^n} \Pi_i }{\tr[\Pi_i \rho_{A^n E^n}]} 
\end{align}
the ``gentle measurement lemma"~\cite{winter99} implies that
\begin{align}
\norm{\bar \rho^{(i)}_{A^n E^n}  - \rho_{A^n E^n}}_1 
\overset{\textnormal{\cite[Lem.~9.4.1]{wilde_book} \& \Cshref{eq_renato_dist4,eq_bar_rho}}}{\leq} 2 \sqrt{w_i} \, . \label{eq_gentle_meas_lemma}
\end{align}
Due to the structure of $\rho_{A^n E^n}$ and the projector $\Pi_i$ we have
\begin{align} \label{eq_product_structure_after_projection}
\bar \rho^{(i)}_{A^n E^n}  = \proj{\theta}_{A_i E_i} \otimes \eta^{(i)}_{\bar A_i \bar E_i} \, ,
\end{align}
for some  $\eta^{(i)}_{\bar A_i \bar E_i}  \in \St(\cH_{AE}^{\otimes n-1})$ and  using the notation $\bar A_i := (A_1, \ldots, A_{i-1}, A_{i+1}, \ldots A_n)$.
For any $i \in [n]$, we define the density matrix
\begin{align} \label{eq_def_tau}
\tau^{(i)}_{A^n E^n}: = \rho_{A^i E^i} \otimes \proj{\theta}_{AE}^{\otimes n-i} \, ,
\end{align}
with the convention $\tau^{(0)}_{A^n E^n} = \proj{\theta}_{AE}^{\otimes n}$. Performing a telescoping sum yields
\begin{align} \label{eq_telescoping_sum}
\rho_{A^n E^n} - \proj{\theta}_{AE}^{\otimes n} = \sum_{i=1}^n \big(\tau^{(i)}_{A^n E^n} - \tau^{(i-1)}_{A^n E^n} \big) 
\end{align} 
and
\begin{align} \label{eq_prop_tau}
\tau^{(i)}_{A^n E^n} - \tau^{(i-1)}_{A^n E^n} 
\overset{\textnormal{\Cshref{eq_def_tau}}}{=} \big( \rho_{A^i E^i} -  \rho_{A^{i-1} E^{i-1}} \otimes \proj{\theta}_{A_i E_i}  \big) \otimes \proj{\theta}_{AE}^{\otimes n -i} \, .
\end{align}
Therefore, we have
\begin{align}
\tr_{A_i E_i}[\tau^{(i)}_{A^n E^n} - \tau^{(i-1)}_{A^n E^n} ]
\overset{\textnormal{\Cshref{eq_prop_tau}}}{=} 0 \, . \label{eq_tau_are_neighboring}
\end{align}

We are now equipped with all the preliminaries to make the link to the quantum Wasserstein distance by observing that
\begin{align}
\frac{1}{n} \norm{\rho_{A^n E^n} - \proj{\theta}_{AE}^{\otimes n}}_{W_1}
\overset{\textnormal{\cite[Def.~6]{PMTL21} \& \Cshref{eq_telescoping_sum,eq_tau_are_neighboring}}}&{\leq} \frac{1}{2n} \sum_{i=1}^n \norm{\tau^{(i)}_{A^n E^n} - \tau^{(i-1)}_{A^n E^n} }_1 \\
\overset{\textnormal{\Cshref{eq_prop_tau}}}&{=} \frac{1}{2n} \sum_{i=1}^n \norm{ \rho_{A^i E^i} -  \rho_{A^{i-1} E^{i-1}} \otimes \proj{\theta}_{A_i E_i} }_1 \, . \label{eq_Wasserstein_step1_newLem}
\end{align}
To further bound this term, note that for any $i \in [n]$ we have
\begin{align}
&\hspace{-20mm}\norm{ \rho_{A^i E^i} -  \rho_{A^{i-1} E^{i-1}} \otimes \proj{\theta}_{A_i E_i} }_1 \nonumber \\
\overset{\textnormal{triangle ineq.}}&{\leq}\! \norm{ \rho_{A^i E^i} \!-\!  \eta^{(i)}_{A^{i-1} E^{i-1}} \otimes \proj{\theta}_{A_i E_i} }_1 \!+\! \norm{ (\eta^{(i)}_{A^{i-1} E^{i-1}} \!-\! \rho_{A^{i-1} E^{i-1}} ) \!\otimes\! \proj{\theta}_{A_i E_i} }_1 \\
&= \norm{ \rho_{A^i E^i} -  \eta^{(i)}_{A^{i-1} E^{i-1}} \otimes \proj{\theta}_{A_i E_i} }_1 + \norm{ \eta^{(i)}_{A^{i-1} E^{i-1}} - \rho_{A^{i-1} E^{i-1}} }_1 \\
\overset{\textnormal{contractivity~\cite[Thm.~8.16]{wolf_notes} \& \Cshref{eq_product_structure_after_projection} }}&{\leq} \norm{\rho_{A^n E^n} - \bar \rho^{(i)}_{A^n E^n}}_1 + \norm{\bar \rho^{(i)}_{A^n E^n} - \rho_{A^n E^n} }_1 \\
\overset{\textnormal{\Cshref{eq_gentle_meas_lemma}}}&{ \leq } 4 \sqrt{w_i} \, . \label{eq_Wasserstein_step2_newLem}
\end{align}
Hence, we have
\begin{align}
\frac{1}{n} \norm{\rho_{A^n E^n} - \proj{\theta}_{AE}^{\otimes n}}_{W_1}
\overset{\textnormal{\Cshref{eq_Wasserstein_step1_newLem,eq_Wasserstein_step2_newLem}}}{\leq} \frac{2}{n} \sum_{i=1}^n \sqrt{w_i} 
\overset{\textnormal{Cauchy-Schwarz}}{\leq} \frac{2}{\sqrt{n}} \sqrt{\sum_{i=1}^n w_i} 
\overset{\textnormal{\Cshref{eq_sum_weights}}}{\leq} 2 \sqrt{\frac{r}{n}} \, . \label{eq_purified_done_Wasserstein}
\end{align}

To complete the proof, we recall that the Wasserstein distance is non-increasing under taking the partial trace and therefore
\begin{align}
\frac{1}{n} \norm{\rho_{A^n} - \sigma_A^{\otimes n}}_{W_1}
\overset{\textnormal{\cite[Section~IX.A]{PMTL21}}}{\leq} \frac{1}{n} \norm{\rho_{A^n E^n} - \proj{\theta}_{AE}^{\otimes n}}_{W_1} 
\overset{\textnormal{\Cshref{eq_purified_done_Wasserstein}}}{\leq} 2 \sqrt{\frac{r}{n}} \, .
\end{align}
\end{proof}

\subsection{Entropies}
For $x \in [0,1]$ let $h(x):=-x \log x - (1-x) \log (1-x)$ denote the binary entropy function. We write $\log(\cdot)$ to denote the natural logarithm. 
Let $\cA, \cB$ be two sets of quantum states on a Hilbert space $\cH$. For an arbitrary divergence $\mathbb{D}$, we use the notation 
\begin{align}
\mathbb{D}(\cA \| \cB) := \inf_{\rho \in \cA} \inf_{\sigma \in \cB} \mathbb{D}(\rho \| \sigma) \, .
\end{align}
For $\rho, \sigma \in \St(\cH)$, the fidelity is denoted by $F(\rho,\sigma):= \norm{\sqrt{\rho} \sqrt{\sigma}}^2_1$.
The max-relative entropy~\cite{renner_phd,datta09} is defined as
\begin{align}
D_{\max}(\rho \| \sigma) :=\inf\{\lambda \in \R: \rho \leq \ee^\lambda \sigma \} \, .
\end{align} 
The min-relative entropy~\cite{dupuis12} is given by
\begin{align}
D_{\min}(\rho \| \sigma) =  - \log F(\rho, \sigma) \, .
\end{align}
For $\eps\in(0,1)$ we define the $\eps$-ball around $\rho$ by $\cB_{\eps}(\rho):=\{\rho' \in \St(\cH): P(\rho,\rho') \leq \eps \}$, where \smash{$P(\rho,\sigma):=\sqrt{1-F(\rho,\sigma)}$} is the purified distance~\cite{marco_book}. 
The smooth max-relative entropy is defined as
\begin{align}
D_{\max}^\eps(\rho  \| \sigma ) :=  \min_{\rho' \in \cB_{\eps}(\rho)} D_{\max}(\rho' \| \sigma) \, . 
\end{align} 
Analogously, the smooth min-relative entropy is given by
\begin{align}
D^\eps_{\min} (\rho \| \sigma) = \max_{\rho' \in \cB_{\eps}(\rho)} D_{\min}(\rho' \| \sigma) \, .
\end{align}
For $\alpha \in [1/2,1) \cup (1,\infty)$ the sandwiched R\'enyi relative entropy~\cite{MLDSFT13,WWY14} is given by
\begin{align}
D_{\alpha}(\rho \| \sigma) :=\frac{1}{\alpha -1 } \log \tr\big[(\sigma^{\frac{1-\alpha}{2\alpha}} \rho \, \sigma^{\frac{1-\alpha}{2\alpha}})^{\alpha}\big] \, .
\end{align}
In the limits $\alpha \to 1$, $\alpha \to \infty$, and $\alpha=\frac{1}{2}$, the sandwiched R\'enyi relative entropy converges to the relative entropy, the max-relative entropy and the min-relative entropy, respectively.

It is known that the hypothesis testing relative entropy is related to the smooth max-relative entropy. For $\rho, \sigma \in \St(\cH)$ and $\eps \in (0,1)$, $\nu \in (0,1-\eps)$ we have~\cite[Theorem~4]{anshu19}
\begin{align} \label{eq_buscemi}
D^{1-\eps}_H(\rho \| \sigma) \geq D_{\max}^{\sqrt{\eps}}(\rho \| \sigma) - \log \frac{1}{1-\eps}   \geq  D^{1-\eps-\nu}_H(\rho \| \sigma) - \log \frac{4}{\nu^2} \, .
\end{align}
This relation was originally observed by~\cite{dupuis12}.\footnote{Note that~\cite{dupuis12} uses a different convention for the hypothesis testing relative entropy where $\eps \leftrightarrow 1-\eps$.} A tighter relation was later obtained in~\cite[Theorem~12]{BLD26}, though it will not be needed in the present work.
\section{Robustness of relative entropy of resource} \label{sec_robustness_ER}
Let $\rho \in \St(\cH)$ and $\cM \subseteq \St(\cH)$ be a closed convex set of finite-dimensional states that contains a full-rank state $\omega \in \cM$  with $\omega > 0$. The \emph{relative entropy of resource}~\cite{VPRK97} is defined as
\begin{align} \label{eq_def_ER}
D(\rho\| \cM) :=  \min_{\sigma \in \cM} D(\rho \| \sigma) \, .
\end{align}
An important example is the case where $\cH = \cH_{AB}$ and $\cM= \mathrm{SEP}(A:B)$, in which case the relative entropy of resource reduces to the relative entropy of entanglement.

We have seen in~\cref{eq_W1_to_TraceDist} that the Wasserstein distance is a weaker metric than the trace distance. This can be relevant because almost-iid states and iid states are close in Wasserstein distance but not close in trace distance (as explained in~\cref{subsec_almost_iid}). 
Many entropic quantities of interest (e.g.~the entropy or the relative entropy of resource) have been shown to be continuous with respect to the trace distance~\cite{win16}. Interestingly, the entropy is even continuous with respect to the weaker Wasserstein distance~\cite[Theorem~9.1]{PT_23}. This raises the question whether the relative entropy of resource is also continuous with respect to the Wasserstein distance. We next show that this is indeed the case.    

For some density matrix $\omega \in \St(\cH)$, let $\cR^{(\omega)}(X) := \tr[X] \omega$ denote the replacer channel. For $i \in [n]$ we use the notation \smash{$\cR^{(\omega)}_{i} := \cI_{1} \otimes \ldots \cI_{i-1} \otimes \cR \otimes \cI_{i+1} \otimes \ldots \otimes \cI_n$}. For $n \in \N$, we consider a set $\cM_n$ that satisfies the following two axioms:
\begin{enumerate}[(a)]
\item $\cM_n$ is a convex closed subset of $\St(\cH^{\otimes n})$; \label{axiom_Giulia_1}
\item $\exists \, \omega \in \St(\cH)$ with full support (i.e.~$\lambda_{\min}(\omega) >0$) such that $\sigma_n \in \cM_n$ implies $\cR^{(\omega)}_i(\sigma_n) \in \cM_n$.\label{axiom_Giulia_3}
\end{enumerate}
Note that the set of separable states, i.e.~$\cM_n=\mathrm{SEP}(A^n:B^n)$ satisfies Axioms~\eqref{axiom_Giulia_1}-\eqref{axiom_Giulia_3}.\footnote{By choosing $\omega = \frac{\id_{AB}}{d_{AB}}$.}

\begin{proposition}[Continuity with respect to Wasserstein distance] \label{prop_continuity_ER_Wasserstein}
Let $n \in \N$, $\rho_n, \rho'_n \in \St(\cH^{\otimes n})$ with $d = \dim(\cH)$, and $\cM_n$ be a set that satisfies Axioms~\eqref{axiom_Giulia_1}-\eqref{axiom_Giulia_3}.
If $\frac{1}{n} \norm{\rho_n - \rho'_n}_{W_1}\leq \eps_n \leq \frac{1}{2}$, then 
\begin{align}
\Big| \frac{1}{n} D(\rho_n \| \cM_n)  - \frac{1}{n} D(\rho'_n \| \cM_n) \Big| 
\leq 3 h(\eps_n) + 6 \eps_n  \log \frac{d}{\lambda_{\min}(\omega)} \, .
\end{align}
\end{proposition}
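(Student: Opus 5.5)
The plan is to compare $\rho_n$ and $\rho'_n$ through a common \emph{coupling state}, and to combine three facts about $D(\cdot\|\cM_n)$: convexity, almost-concavity, and a one-site replacement bound that uses Axiom~\eqref{axiom_Giulia_3}. By symmetry it suffices to bound $\frac1n D(\rho'_n\|\cM_n)-\frac1n D(\rho_n\|\cM_n)$ from above.

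\textbf{Decomposition.} By \cref{def_wasserstein} there exist $X^{(i)}$ with $\tr_i[X^{(i)}]=0$, $\rho_n-\rho'_n=\sum_i X^{(i)}$ and $\frac12\sum_i\norm{X^{(i)}}_1\le n\eps_n$. Since $\tr X^{(i)}=0$, we can write $X^{(i)}=c_i(\alpha_i-\beta_i)$ with states $\alpha_i,\beta_i$ (positive and negative parts) and $c_i=\frac12\norm{X^{(i)}}_1$. Because $\tr_i X^{(i)}=0$, these states satisfy $\tr_i\alpha_i=\tr_i\beta_i$. Set $C=\sum_i c_i\le n\eps_n$. Then
\begin{align*}
\gamma:=\frac{\rho_n+\sum_i c_i\beta_i}{1+C}=\frac{\rho'_n+\sum_i c_i\alpha_i}{1+C}.
\end{align*}

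\textbf{One-site replacement lemma.} For states $\alpha,\beta$ on $n$ sites with $\tr_i\alpha=\tr_i\beta$, I claim
\begin{align*}
D(\beta\|\cM_n)\le D(\alpha\|\cM_n)+\log\frac{d}{\lambda_{\min}(\omega)}.
\end{align*}
To prove it, take $\sigma\in\cM_n$ optimal for $\alpha$. By Axiom~\eqref{axiom_Giulia_3}, $\cR^{(\omega)}_i(\sigma)=\omega_i\otimes\sigma_{\bar i}\in\cM_n$. Write
\begin{align*}
D(\beta\|\omega\otimes\sigma_{\bar i})=-H(\beta)-\tr[\beta_i\log\omega]-\tr[\beta_{\bar i}\log\sigma_{\bar i}].
\end{align*}
Use $H(\beta)\ge H(\beta_{\bar i})-\log d$ (Araki--Lieb/subadditivity) and $-\log\omega\le\log\frac1{\lambda_{\min}(\omega)}$. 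This gives $D(\beta_{\bar i}\|\sigma_{\bar i})+\log\frac{d}{\lambda_{\min}}$. Since $\beta_{\bar i}=\alpha_{\bar i}$, data processing bounds this by $D(\alpha\|\sigma)+\log\frac{d}{\lambda_{\min}}$.

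\textbf{Combining.} Let $p:=C/(1+C)$ and write $\bar\alpha,\bar\beta$ for the $c_i/C$-mixtures of the $\alpha_i,\beta_i$. Almost-concavity of $D(\cdot\|\cM_n)$ applied to the two-component mixture gives
\begin{align*}
D(\gamma\|\cM_n)\ge(1-p)D(\rho'_n\|\cM_n)+pD(\bar\alpha\|\cM_n)-h(p).
\end{align*}
Almost-concavity follows from $D(\sum_j p_j\rho_j\|\sigma)=\sum_j p_j D(\rho_j\|\sigma)-\chi$ with $\chi\le H(p)$. Convexity gives
\begin{align*}
D(\gamma\|\cM_n)\le(1-p)D(\rho_n\|\cM_n)+pD(\bar\beta\|\cM_n).
\end{align*}
Multiplying by $1+C$, the entropy term becomes $(1+C)h(p)=(1+C)\log(1+C)-C\log C$. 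This is $O(\log(1+C))$, hence $o(n)$, and is in any case bounded by $n h(\eps_n)$-type terms using $C\le n\eps_n\le n/2$. It remains to show
\begin{align*}
C\big[D(\bar\beta\|\cM_n)-D(\bar\alpha\|\cM_n)\big]\lesssim C\log\frac{d}{\lambda_{\min}}+n\,h(\eps_n).
\end{align*}

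\textbf{Main obstacle.} This last comparison of the mixtures is the hard step. Applying the one-site lemma termwise and then using convexity on the $\beta$ side and almost-concavity on the $\alpha$ side loses the Shannon entropy of $(c_i/C)_i$, which can be of order $\log n$. I would first try a single $\sigma\in\cM_n$ optimal for $\bar\alpha$ and the candidate $\tilde\sigma=\sum_i\frac{c_i}{C}\cR^{(\omega)}_i(\sigma)\in\cM_n$, which lies in $\cM_n$ by Axioms~\eqref{axiom_Giulia_1}--\eqref{axiom_Giulia_3}. By joint convexity,
\begin{align*}
D(\bar\beta\|\tilde\sigma)\le\sum_i\frac{c_i}{C}D(\beta_i\|\cR_i^{(\omega)}(\sigma)).
\end{align*}
The goal is then to control $\sum_i\frac{c_i}{C}D(\alpha_{i,\bar i}\|\sigma_{\bar i})$ by $D(\bar\alpha\|\sigma)$. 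If the Holevo-type gap here cannot be bounded without a $\log n$ term, the fallback is to avoid this comparison altogether. One would first replace $\rho_n,\rho'_n$ by a telescoping chain of intermediate states, each differing by a small-weight mixture. Each step would then use the two-state continuity argument with weight of order $\eps_n$ rather than $p$. This should produce the constants $3h(\eps_n)$ and $6\eps_n\log\frac{d}{\lambda_{\min}(\omega)}$, and the condition $\eps_n\le\frac12$ is what keeps the mixing weights in the monotone range of $h$.
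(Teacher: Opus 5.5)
Your individual ingredients are correct: the splitting $X^{(i)}=c_i(\alpha_i-\beta_i)$ with $\tr_i\alpha_i=\tr_i\beta_i$, the one-site replacement lemma, and the convexity/almost-concavity reduction to bounding $C[D(\bar\beta\|\cM_n)-D(\bar\alpha\|\cM_n)]$. The gap is the step you flag yourself, and it is not a technicality: the pairwise bound does not transfer to the mixtures.

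Here is an example. Take qubits, $\omega=\id/2$, $\alpha_i=\proj{1}_i\otimes\proj{0}^{\otimes n-1}$, $\beta_i=\proj{0}^{\otimes n}$, and uniform weights. Let $\cM_n$ be the convex hull of $\{\cR^{(\omega)}_S(\bar\alpha)\}_{S\subseteq[n]}$, where $\cR^{(\omega)}_S:=\prod_{i\in S}\cR^{(\omega)}_i$; this set satisfies Axioms~\eqref{axiom_Giulia_1}-\eqref{axiom_Giulia_3}. Then $D(\bar\alpha\|\cM_n)=0$ and every $D(\alpha_i\|\cM_n)=\log n$, so the Holevo-type gap you worried about is really there. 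Moreover $D(\bar\beta\|\cM_n)=\log(2n)$, although each pair satisfies your lemma with $\log\frac{d}{\lambda_{\min}(\omega)}=\log 4$. So any bound on the mixture difference that uses only $\tr_i\alpha_i=\tr_i\beta_i$ must allow a $\log n$, which costs $C\log n\approx n\eps_n\log n$ in your reduction. To do better you would have to exploit how the $\alpha_i,\beta_i$ arise from $\rho_n,\rho'_n$, and the proposal does not say how.

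The fallback does not rescue this. The partial sums $\rho'_n+\sum_{i\le k}X^{(i)}$ need not be positive, so the $W_1$ decomposition does not provide a chain of states. You can force positivity by staying inside the mixture: go from $\frac{\rho_n+C\bar\alpha}{1+C}$ to $\frac{\rho'_n+C\bar\alpha}{1+C}$ by exchanging one $c_k\alpha_k$ for $c_k\beta_k$ at a time. Then $D(\bar\alpha\|\cM_n)$ cancels and the argument closes, but it only gives
\[
D(\rho'_n\|\cM_n)-D(\rho_n\|\cM_n)\le(1+C)\sum_{k=1}^n h\Big(\frac{c_k}{1+C}\Big)+C\log\frac{d}{\lambda_{\min}(\omega)}+(1+C)h\Big(\frac{C}{1+C}\Big).
\]
The first term is of order $n\eps_n\log n$ when $C\approx n\eps_n\gg 1$ and the $c_k$ are spread out. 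Your route therefore proves the statement only with an extra $\eps_n\log n$ per site. That is not harmless: in \cref{cor_asymptotic_continuity_ER}, $\eps_n=2\sqrt{r/n}$ with $r=o(n)$ can make $\eps_n\log n$ diverge.

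The paper never mixes $\rho_n$ with $\rho'_n$. It fixes an optimizer $\hat\sigma_n$ for $\rho_n$ and smooths it with $\bar\cE^{(t)}=(\ee^{-t}\cI+(1-\ee^{-t})\cR^{(\omega)})^{\otimes n}$, which stays in $\cM_n$ by the axioms. It then splits the difference into three pieces:
\begin{itemize}
\item the entropy difference $H(\rho_n)-H(\rho'_n)$, handled by the $n$-independent Wasserstein continuity bound of~\cite[Theorem~9.1]{PT_23}, which is exactly the $\log n$-free estimate your mixture bookkeeping does not deliver;
\item the linear term $\tr[(\rho_n-\rho'_n)\log\hat\sigma^{(t)}_n]$, bounded by $\norm{\rho_n-\rho'_n}_{W_1}\norm{\log\hat\sigma^{(t)}_n}_L$ via~\cite[Proposition~9]{PMTL21};
\item the smoothing cost $\tr[\rho_n(\log\hat\sigma_n-\log\hat\sigma^{(t)}_n)]\le nt$.
\end{itemize}
Your one-site lemma reappears there as the sandwich $(1-\ee^{-t})\cR^{(\omega)}_i(\tilde\sigma^{(t)}_n)\le\hat\sigma^{(t)}_n\le\frac{d}{\lambda_{\min}(\omega)}\cR^{(\omega)}_i(\tilde\sigma^{(t)}_n)$. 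There it enters a Lipschitz constant, which is a maximum over sites rather than an average over a mixture, so no entropy of the site index is ever paid.

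Your tools do suffice whenever the two states are joined by a chain of genuine states whose consecutive members are neighbors, for example $\rho_{A^i}\otimes\sigma_A^{\otimes n-i}$ from the proof of \cref{lem_amost_iid_close_to_iid_W1}. The per-step cost is $g(q_i)+q_i\log\frac{d}{\lambda_{\min}(\omega)}$, with $g(q)=(1+q)h(\frac{q}{1+q})$ concave, and these sum to an $n$-independent bound. That would give \cref{cor_asymptotic_continuity_ER} directly, but not the proposition for general Wasserstein-close pairs.
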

\begin{proof}
For $t\geq 0$ define the channel
\begin{align} \label{eq_cE_t_channel}
\cE^{(t)}(X) := \ee^{-t} X + \big(1-\ee^{-t}\big) \tr[X] \omega= \ee^{-t} \cI(X) + \big(1-\ee^{-t}\big) \cR^{(\omega)}(X)\, ,
\end{align}
and consider
\begin{align} \label{eq_channel_E}
\bar \cE^{(t)} := \bigotimes_{i=1}^n \cE^{(t)} \, .
\end{align}
Let $\hat \sigma_n \in \arg\min_{\sigma_{n} \in \cM_n} D(\rho_{n} \| \sigma_{n})$ be such that
\begin{align} \label{eq_ER1}
D(\rho_n \| \cM_n)  =- H(\rho_n) - \tr[\rho_n \log \hat \sigma_n] \ .
\end{align}
Note that clearly we have $\supp(\rho_{n}) \subseteq \supp(\hat \sigma_{n})$ as otherwise the relative entropy is unbounded.
For any $i \in [n]$ and any $\tau_n \in \cM_n$ we have
\begin{align}
(\cI_1 \otimes \ldots \otimes \cI_{i-1} \otimes \cE^{(t)} \otimes \cI_{i+1} \otimes \ldots \otimes \cI_n)(\tau_n)  
= \ee^{-t} \tau_n + (1-\ee^{-t}) \cR_i^{(\omega)}(\tau_n) \in \cM_n \, ,
\end{align}
where we used Axioms~\eqref{axiom_Giulia_1}-\eqref{axiom_Giulia_3}. Hence, we have
\begin{align}
\hat \sigma^{(t)}_n:= \bar \cE^{(t)}(\hat \sigma_n) = (\cE^{(t)} \otimes \cI_{2} \otimes \ldots \otimes \cI_{n}) \circ \ldots \circ (\cI_{1} \otimes \ldots \otimes \cI_{n-1} \otimes \cE^{(t)}) (\hat \sigma_n) \in \cM_n \, . \label{eq_sigma_in_set}
\end{align}
Therefore, we have
\begin{align}  \label{eq_ER2}
D(\rho'_n \| \cM_n)  
\overset{\textnormal{\Cshref{eq_sigma_in_set}}}&{\leq} D(\rho'_n \| \hat \sigma^{(t)}_n  )
= - H(\rho'_n) - \tr[\rho'_n \log \hat \sigma^{(t)}_n]
\end{align}
and 
\begin{align} \label{eq_3_terms_to_bound}
&\hspace{-20mm}D(\rho'_n \| \cM_n)   - D(\rho_n \| \cM_n)   \nonumber \\
\overset{\textnormal{\Cshref{eq_ER1}\&\eqref{eq_ER2}}}&{\leq} \Big( H(\rho_n) - H(\rho'_n) \Big) +\tr[(\rho_n - \rho'_n) \log \hat \sigma^{(t)}_n] + \tr[\rho_n (\log \hat\sigma_n - \log \hat \sigma^{(t)}_n)] \, .
\end{align}
We next bound all three terms on the right-hand side of~\cref{eq_3_terms_to_bound} separately.

We start by recalling that the von Neumann entropy is continuous with respect to the Wasserstein distance~\cite[Theorem~9.1]{PT_23} and hence
\begin{align}
 H(\rho_n) -  H(\rho'_n) 
 \overset{\textnormal{\cite[Theorem~9.1]{PT_23}}}{\leq}  n h(\eps_n) + n \eps_n \log(d^2 -1)
 \leq  n h(\eps_n) + 2 n \eps_n \log d \, . \label{eq_STEP1_done}
\end{align}

We next bound the second term in~\cref{eq_3_terms_to_bound}. To do so, we first note that
\begin{align}
\tr[(\rho_n - \rho'_n) \log \hat \sigma^{(t)}_n]
\overset{\textnormal{\cite[Prop.~9]{PMTL21}}}&{\leq} \norm{\rho_n - \rho'_n}_{W_1} \norm{\log \hat \sigma^{(t)}_n}_L
\leq n \eps_n \norm{\log \hat \sigma^{(t)}_n}_L \, , \label{eq_second_term_1}
\end{align}
where $\| \log \hat \sigma^{(t)}_n \|_L $ is the quantum Lipschitz constant of the operator $\log \hat \sigma^{(t)}_n$ defined in~\cite[Definition~8]{PMTL21}. According to~\cite[Proposition~8]{PMTL21} this constant can be expressed as 
\begin{align}
\norm{\log \hat \sigma^{(t)}_n}_L  
\overset{\textnormal{\cite[Prop.~8]{PMTL21}}}{=} 2 \max_{i \in [n]} \min_{X_{n-1} \in \mathrm{Herm}(\cH^{\otimes n-1})} \norm{\log \hat \sigma^{(t)}_n - X_{n-1} \otimes \id_i}_{\infty} \, , \label{eq_Lipschitz}
\end{align}
where $X_{n-1}$ is an arbitrary Hermitian operator that does not act on the $i$-th subsystem. For any $i \in [n]$ we can write
\begin{align}
\hat \sigma^{(t)}_n
&= \bar \cE^{(t)} (\hat \sigma_n) \\
\overset{\textnormal{\Cshref{eq_channel_E}}}&{=} \cE_i^{(t)} \circ \underbrace{(\cE^{(t)} \otimes \ldots \otimes \cE^{(t)} \otimes \cI_{i} \otimes \cE^{(t)} \otimes \ldots \otimes \cE^{(t)}) (\hat \sigma)}_{=:\tilde \sigma^{(t)}_n} \\
&=\ee^{-t} \tilde \sigma^{(t)}_n + (1-\ee^{-t}) \cR^{(\omega)}_{i}(\tilde \sigma^{(t)}_n) \, .\label{eq_step1_cont_ER}
\end{align}
 Noting that $\ee^{-t} \tilde \sigma^{(t)}_n  \geq 0$ yields
\begin{align}  \label{eq_ds1_1}
\hat \sigma^{(t)}_n   \overset{\textnormal{\Cshref{eq_step1_cont_ER}}}{\geq}(1-\ee^{-t}) \cR^{(\omega)}_{i}(\tilde \sigma^{(t)}_n) \, .
\end{align}
We next observe that for any $\tau_{AB} \in \St(\cH_{AB})$ we have $\tau_{AB} \leq d_A (\id_A \otimes \tau_B)$.\footnote{To see this inequality, first note that due to convexity, it suffices to prove the inequality for pure states $\tau_{AB}$, which can always be written as $\tau_{A B} = d \sqrt{\tau_B} \proj{\Phi} \sqrt{\tau_B}$, where $\ket{\Phi}$ is a maximally entangled state between $A$ and $B$ and $d=\min\{d_A,d_B\}$. The inequality then follows by sandwiching the operator inequality $\proj{\Phi} \leq \id_A \otimes \id_B$ with $\sqrt{\tau_B}$.} Therefore, 
\begin{align}
\tilde \sigma^{(t)}_n
\leq d (\id_i \otimes \tr_{i}[\tilde \sigma^{(t)}_n]) 
\leq \frac{d}{\lambda_{\min}(\omega)} \cR^{(\omega)}_i(\tilde \sigma^{(t)}_n) \,  \label{eq_partial_trace_ineq}
\end{align}
which gives
\begin{align}
\hat \sigma_n^{(t)} 
\overset{\textnormal{\Cshref{eq_step1_cont_ER}}}&{=} \ee^{-t} \tilde \sigma^{(t)}_n + (1-\ee^{-t}) \cR^{(\omega)}_i(\tilde \sigma^{(t)}_n)  \\
\overset{\textnormal{\Cshref{eq_partial_trace_ineq}}}&{\leq} \Big(1+ \ee^{-t}\big(\frac{d}{\lambda_{\min}(\omega)}  -1\big) \Big)  \cR^{(\omega)}_i(\tilde \sigma^{(t)}_n) \\
&\leq \frac{d}{\lambda_{\min}(\omega)}  \cR^{(\omega)}_i(\tilde \sigma^{(t)}_n)  \, .   \label{eq_ds1_2}  
\end{align}
Using that $x \to \log x$ is operator monotone~\cite[Table~2.2]{Sutter_book} yields 
\begin{align}
\log(1-\ee^{-t}) + \log \cR^{(\omega)}_i(\tilde \sigma^{(t)}_n)
\overset{\textnormal{\Cshref{eq_ds1_1}}}{\leq} \log \hat \sigma^{(t)}_n 
\overset{\textnormal{\Cshref{eq_ds1_2}}}{\leq} \log \cR^{(\omega)}_i(\tilde \sigma^{(t)}_n) + \log \frac{d}{\lambda_{\min}(\omega)} 
\end{align}
which can be rewritten as
\begin{align}
\log(1-\ee^{-t})  \leq \log \hat \sigma^{(t)}_n - \log \cR^{(\omega)}_i(\tilde \sigma^{(t)}_n) \leq  \log \frac{d}{\lambda_{\min}(\omega)}  \, . \label{eq_ds1_3}  
\end{align}
Thus, we find
\begin{align}
\norm{\log \hat \sigma^{(t)}_n - \log \cR^{(\omega)}_i(\tilde \sigma^{(t)}_n)}_{\infty}
 \overset{\textnormal{\Cshref{eq_ds1_3}}}&{\leq} \max \Big\{ \big|\log(1-\ee^{-t})  \big|, \log\frac{d}{\lambda_{\min}(\omega)}  \Big\} \\
 &\leq \max\Big\{  \frac{t}{2}-\log t, \log \frac{d}{\lambda_{\min}(\omega)}  \Big\} \, . \label{eq_second_term_2}
\end{align}
Define the Hermitian matrix on the $n-1$ subsystems as
\begin{align} \label{eq_guess_X}
\bar X_{n-1} := \log \tr_{i}[\tilde \sigma^{(t)}_n] 
\end{align}
and note that
\begin{align} \label{eq_operator_ineq_new_resource}
\bar X_{n-1} \otimes \id_i + \log(\lambda_{\min}(\omega)) \id 
\leq \log \cR^{(\omega)}_i(\tilde \sigma^{(t)}_n)
\leq \bar X_{n-1} \otimes \id_i \, ,
\end{align}
which can be rewritten as
\begin{align} \label{eq_operator_ineq_new_resource_2}
 \log(\lambda_{\min}(\omega)) \id 
\leq \log \cR^{(\omega)}_i(\tilde \sigma^{(t)}_n) - \bar X_{n-1} \otimes \id_i
\leq 0\, .
\end{align}
To see this recall that $\cR^{(\omega)}_i(\tilde \sigma^{(t)}_n) = \tr_{i}[\tilde\sigma^{(t)}_n] \otimes \omega = \bar X_{n-1} \otimes \omega$ and hence~\cref{eq_operator_ineq_new_resource} follows from $\lambda_{\min}(\omega) \id \leq \omega \leq \id $ together with the fact that $\log(Y \otimes Z) = \log(Y) \otimes \id + \id \otimes \log(Z)$ and the operator monotonicity of the logarithm~\cite[Table~2.2]{Sutter_book}.
Using that $\bar X_{n-1}$ defined in~\cref{eq_guess_X} is Hermitian, we find
\begin{align}
\norm{\log \hat \sigma^{(t)}_n}_L 
\overset{\textnormal{\Cshref{eq_Lipschitz}}}&{=} 2 \max_{i \in [n]} \min_{X_{n-1} \in \mathrm{Herm}(\cH^{\otimes n-1})} \norm{\log \hat \sigma^{(t)}_n - X_{n-1} \otimes \id_i}_{\infty}  \\
&\leq  2 \max_{i \in [n]} \norm{\log \hat \sigma^{(t)}_n - \bar X_{n-1} \otimes \id_i}_{\infty} \\
\overset{\textnormal{triangle ineq.}}&{\leq} 2 \max_{i \in [n]} \norm{\log \hat \sigma^{(t)}_n - \log \cR^{(\omega)}_i(\tilde \sigma^{(t)}_n)}_{\infty} + 2 \max_{i \in [n]} \norm{ \log \cR^{(\omega)}_i(\tilde \sigma^{(t)}_n) - \bar X_{n-1} \otimes \id_i}_{\infty}\\
\overset{\textnormal{\Cshref{eq_second_term_2,eq_operator_ineq_new_resource_2}}}&{\leq}  \max\Big\{  t - 2 \log t, 2 \log \frac{d}{\lambda_{\min}(\omega)} \Big\} -  2\log \lambda_{\min}(\omega) \\
&\leq  t - 2 \log t + 2 \log \frac{d}{\lambda_{\min}(\omega)}  -  2\log \lambda_{\min}(\omega) \\
&= t - 2 \log t + 2 \log d - 4 \log \lambda_{\min}(\omega) \, . \label{eq_second_term_3}
\end{align}
We thus conclude the following bound
\begin{align} \label{eq_STEP2_done}
\tr[(\rho_n - \rho'_n) \log \hat \sigma_n^{(t)}]  
\overset{\textnormal{\Cshref{eq_second_term_1,eq_second_term_3}}}{\leq} n \eps_n \big(t - 2 \log t + 2 \log d - 4 \log \lambda_{\min}(\omega) \big)  \, .
\end{align}

It remains to bound the third and final term in~\cref{eq_3_terms_to_bound}. To do so note that by binomial expansion we can write
\begin{align}
\bar \cE^{(t)} 
\overset{\textnormal{\Cshref{eq_channel_E}}}{=} \bigotimes_{i=1}^n \cE^{(t)} 
\overset{\textnormal{\Cshref{eq_cE_t_channel}}}{=} \bigotimes_{i=1}^n \left( \ee^{-t} \cI + \big(1-\ee^{-t}\big) \cR^{(\omega)} \right)
=:\ee^{-n t} \cI_1 \otimes \ldots \otimes \cI_n + (1-\ee^{-n t} ) \cF^{(t)} \, , \label{eq_channel_E_dec}
\end{align}
for some trace-preserving and completely positive map $\cF^{(t)}$ acting on all $n$ subsystems. Hence,
\begin{align}
\hat \sigma_n^{(t)} 
= \bar \cE^{(t)} (\hat \sigma_n)
\overset{\textnormal{\Cshref{eq_channel_E_dec}}}{=} \ee^{-nt} \hat \sigma_n +  (1-\ee^{-n t} ) \cF^{(t)}(\hat \sigma_n)
\geq \ee^{-nt} \hat \sigma_n \, . \label{eq_almost_done_step3}
\end{align}
Using that $x \mapsto \log x$ is operator monotone~\cite[Table~2.2]{Sutter_book} yields
\begin{align} \label{eq_log_support}
\log \hat \sigma_n - \log \hat \sigma_n^{(t)} \leq \id nt   \, ,
\end{align}
hence taking an expectation value with respect to $\rho_n$ yields
\begin{align} \label{eq_STEP3_done}
\tr[\rho_n (\log \hat\sigma_n - \log \hat \sigma_n^{(t)})]  \leq n t \, .
\end{align}
Recall here that, as explained above, we have $\supp(\rho_{n}) \subseteq \supp(\hat \sigma_{n})$ and $\hat \sigma^{(t)}_n$ has full support. In the case where $\hat \sigma_{n}$ does not have full support, we view~\eqref{eq_log_support} restricted to the support of $\hat \sigma_{n}$.

Choosing $t=\eps_n$ and  putting everything together yields
\begin{align}
&\hspace{-23mm}\frac{1}{n} D(\rho'_n \| \cM_n)   -  \frac{1}{n} D(\rho_n \| \cM_n) \nonumber    \\
\overset{\textnormal{\Cshref{eq_3_terms_to_bound,eq_STEP1_done,eq_STEP2_done,eq_STEP3_done}}}&{\leq}  h(\eps_n) + 2 \eps_n \log d + \eps_n \big(\eps_n - 2 \log \eps_n + 2 \log d - 4 \log \lambda_{\min}(\omega) \big) + \eps_n \\
&\leq 3 h(\eps_n) + 6 \eps_n  \log \frac{d}{\lambda_{\min}(\omega)} \, ,
\end{align}
where in the final step, we used $-\eps_n \log \eps_n \leq h(\eps_n)$ and $\eps_n \leq \frac{1}{2}$. 
We can swap the roles of $\rho$ and $\rho'$ and repeat the entire argument above, which yields
\begin{align}
\frac{1}{n} D(\rho_n \| \cM_n)   -  \frac{1}{n} D(\rho'_n \| \cM_n) 
\leq 3 h(\eps_n) + 6 \eps_n  \log \frac{d}{\lambda_{\min}(\omega)}  \, ,
\end{align}
and thus completes the proof.
\end{proof}

\begin{example}[Relative entropy of entanglement]
Consider the case $\cH = \cH_{AB}$ with $d_{AB} = \dim \cH_{AB}$, $\cM_n= \mathrm{SEP}(A^n:B^n)$ and $\omega = \frac{\id_{AB}}{d_{AB}}$. Then for $\rho_{A^n B^n}, \rho'_{A^n B^n} \in \St(\cH_{AB}^{\otimes n})$ such that $\frac{1}{n} \norm{\rho_n - \rho'_n}_{W_1}\leq \eps_n \leq \frac{1}{2}$ we have
\begin{align}
\Big| \frac{1}{n} D(\rho \| \mathrm{SEP})  - \frac{1}{n} D(\rho' \| \mathrm{SEP}) \Big| 
\overset{\textnormal{\Cshref{prop_continuity_ER_Wasserstein}}}{\leq} 3 h(\eps_n) + 12 \eps_n  \log d_{AB} \, .
\end{align}
\end{example}

\begin{corollary} \label{cor_asymptotic_continuity_ER}
Let $n,r \in \N$ be such that $16r \leq n$, $\sigma \in \St(\cH)$, $\rho_n \in \St^n(\cH, \sigma^{\otimes n -r})$, $d = \dim(\cH)$, and let $\cM_n$ be a set that satisfies Axioms~\eqref{axiom_Giulia_1}-\eqref{axiom_Giulia_3}. Then 
\begin{align}
\Big| \frac{1}{n} D(\rho_n \| \cM_n)  - \frac{1}{n} D(\sigma^{\otimes n} \| \cM_n) \Big|  \leq 3 h\Big(2 \sqrt{\frac{r}{n}}\Big) + 12 \sqrt{\frac{r}{n}}  \log \frac{d}{\lambda_{\min}(\omega)}  \, .
\end{align}
 \end{corollary}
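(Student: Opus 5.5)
The plan is to combine \cref{lem_amost_iid_close_to_iid_W1} with \cref{prop_continuity_ER_Wasserstein}, applied to the pair $\rho_n$ and $\rho'_n := \sigma^{\otimes n}$.

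First, $\St^n(\cH, \sigma^{\otimes n-r}) \subseteq \bar \St^n(\cH, \sigma^{\otimes n-r})$, so \cref{lem_amost_iid_close_to_iid_W1} applies to $\rho_n$ and gives
\begin{align*}
\frac{1}{n}\norm{\rho_n - \sigma^{\otimes n}}_{W_1} \leq 2\sqrt{\frac{r}{n}} =: \eps_n \, .
\end{align*}

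Second, we check that this $\eps_n$ is admissible in \cref{prop_continuity_ER_Wasserstein}, which requires $\eps_n \leq \frac{1}{2}$. The hypothesis $16 r \leq n$ gives $\sqrt{r/n} \leq \frac14$, hence $\eps_n = 2\sqrt{r/n} \leq \frac12$. This is exactly why the factor $16$ appears in the statement.

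Third, we invoke \cref{prop_continuity_ER_Wasserstein} with this $\eps_n$ and the same $\cM_n$ and $\omega$. It yields
\begin{align*}
\Big| \frac{1}{n} D(\rho_n \| \cM_n) - \frac{1}{n} D(\sigma^{\otimes n} \| \cM_n) \Big| \leq 3h\Big(2\sqrt{\frac{r}{n}}\Big) + 6\cdot 2\sqrt{\frac{r}{n}}\,\log\frac{d}{\lambda_{\min}(\omega)} \, ,
\end{align*}
which is the claimed bound with the constant $12$.

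No step is a genuine obstacle; the argument is a direct substitution. The only points to verify are that the lemma is stated for generalized almost-iid states, which contain the almost-iid set, and that $\eps_n \leq \frac12$ holds under $16r \leq n$.
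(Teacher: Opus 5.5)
Your proposal is correct and follows the same route as the paper: you bound $\frac{1}{n}\norm{\rho_n - \sigma^{\otimes n}}_{W_1} \leq 2\sqrt{r/n}$ via \cref{lem_amost_iid_close_to_iid_W1} and then substitute $\eps_n = 2\sqrt{r/n}$ into \cref{prop_continuity_ER_Wasserstein}. You also spell out the admissibility check $\eps_n \leq \frac{1}{2}$ from $16r \leq n$, which the paper leaves implicit.
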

\begin{proof}
Using the fact that almost-iid states are close to iid states in terms of Wasserstein distance, we find
\begin{align}
\frac{1}{n} \norm{\rho_{n} - \sigma^{\otimes n}}_{W_1}
\overset{\textnormal{\Cshref{lem_amost_iid_close_to_iid_W1}}}{\leq}  2 \sqrt{\frac{r}{n}} \, .
\end{align}
Then
\begin{align}
\Big| \frac{1}{n} D(\rho_n \| \cM_n)  - \frac{1}{n} D(\sigma^{\otimes n} \| \cM_n)  \Big| 
\overset{\textnormal{\Cshref{prop_continuity_ER_Wasserstein}}}{\leq} 3 h\Big( 2 \sqrt{\frac{r}{n}} \Big) + 12 \sqrt{\frac{r}{n}}  \log \frac{d}{\lambda_{\min}(\omega)} \, .
\end{align}
\end{proof}

\begin{remark}
For $\omega>0$, the set of generalized almost-iid states $\cM_n = \bar \St^n(\cH,\omega^{\otimes n-r})$ satisfies  Axioms~\eqref{axiom_Giulia_1}-\eqref{axiom_Giulia_3}. This is proven in~\cref{lem_generalized_almost_iid_satisfies_axioms}.
The set of almost-iid states, i.e.~$\cM_n = \St^n(\cH,\sigma^{\otimes n-r})$ does not satisfy Axiom~\eqref{axiom_Giulia_3} because the map \smash{$\cR_i^{(\omega)}$} can break the permutation invariance. However, the statements of~\cref{prop_continuity_ER_Wasserstein,cor_asymptotic_continuity_ER} still hold. The reason is that the map $\bar \cE^{(t)}$ used in the proof of~\cref{prop_continuity_ER_Wasserstein} preserves the permutation invariance.
\end{remark}

\section{Robust (generalized) quantum Stein's lemma} \label{sec_robust_GSL}
In this section, we state and prove a \emph{robust generalized quantum Stein's lemma} (see~\cref{thm_AGSL}) and a \emph{robust quantum Stein's lemma} (see~\cref{thm_ASL}).

\subsection{Robust generalized quantum Stein's lemma}
Let $\cS_n$ be the set of permutations on $\{ 1,... ,n\}$ and consider the trace-preserving completely positive map
\begin{align}
\mathrm{SYM}_n: X \mapsto \frac{1}{n!}\sum_{\pi \in \cS_n} \pi X \pi^\dagger \, ,
\end{align}
 which symmetrizes the input. 
Let $\{\cM_n\}_{n \in \N}$ be a family of sets that satisfy the following five axioms~\cite{brandao_Stein_10}:
\begin{enumerate}[(i)]
\item each $\cM_n$ is a convex and closed subset of $\St(\cH^{\otimes n})$; \label{axiom_convexity}
\item $\cM_1$ contains a full-rank state $\omega$, i.e. $\cM_1 \ni \omega > 0$; \label{axiom_full_rank}
\item $\rho_n \in \cM_n$ implies that $\pi \rho_n \pi^\dagger \in \cM_n$ for any permutation $\pi \in \cS_n$; \label{axiom_PI}
\item $\rho_{n+1} \in \cM_{n+1}$ implies $\rho_n \in  \cM_n$; \label{axiom_partial_trace}
\item $\rho_n \in \cM_n$ and $\sigma_m \in \cM_m$ imply $\rho_n \otimes \sigma_m \in \cM_{n+m}$. \label{axiom_tensor_product}
\end{enumerate}
\begin{example} \label{ex_axioms}
The following two families of sets satisfy Axioms~\eqref{axiom_convexity}-\eqref{axiom_tensor_product}:
\begin{enumerate}
\item Set of separable states, i.e.~$\cM_n = \mathrm{SEP}(A^n:B^n)$;
\item Set of a single iid state, i.e.~$\cM_n = \{\sigma^{\otimes n} \}$ for $\sigma>0$.
\end{enumerate}
\end{example}
Any family of sets $\{\cM_n\}_{n \in \N}$ that satisfies Axioms~\eqref{axiom_convexity}-\eqref{axiom_tensor_product} also satisfies Axioms~\eqref{axiom_Giulia_1}-\eqref{axiom_Giulia_3}. To see this, note that the first axioms coincide. In addition, we have $\omega \in \cM_1$ (by Axiom~\eqref{axiom_full_rank}) and and $\tr_i[\sigma_n] \in \cM_{n-1}$ for $\sigma_n \in \cM_n$ (by Axiom~\eqref{axiom_partial_trace}). Therefore, we conclude
\begin{align}
\cR_i^{(\omega)}(\sigma_ n) = \omega \otimes \tr_i[\sigma_n] \quad  \overset{\textnormal{Axiom~\eqref{axiom_tensor_product}}}{\implies}  \quad \cR_i^{(\omega)}(\sigma_ n) \in \cM_n \, .
\end{align}

For $\rho \in \St(\cH)$, recall the definition of the relative entropy of resource~\cite{VPRK97} 
\begin{align}
D(\rho\| \cM) :=  \min_{\sigma \in \cM} D(\rho \| \sigma) \, .
\end{align}
Its regularized version is given by
\begin{align} \label{eq_def_regul_2}
D^\reg(\rho \| \cM) 
:=\lim_{n \to \infty} \frac{1}{n} D(\rho^{\otimes n} \| \cM_n)
=\inf_{n \in \N} \frac{1}{n} D(\rho^{\otimes n} \| \cM_n)\, ,
\end{align}
which is well-defined as ensured by Fekete's lemma (see~\cref{lem_fekete}).
For $r=o(n)$, following the notation introduced above, we define the \emph{almost-iid regularized relative entropy of resource} as
\begin{align} \label{eq_def_A_RE}
\lim_{n \to \infty} \frac{1}{n} D\big( \St^n(\cH, \rho^{\otimes n-r}) \| \cM_n \big) = \lim_{n \to \infty} \frac{1}{n} \min_{\rho_n \in \St^n(\cH, \rho^{\otimes n-r})} D(\rho_n \| \cM_n) \, .
\end{align}
We show in~\cref{lem_regul_almost_iid_irrelevant_Stein} below that the quantity in~\cref{eq_def_A_RE} is well-defined in the sense that the limit exists. Furthermore, the quantity is independent of $r$ as long as $r=o(n)$.

\begin{lemma} \label{lem_regul_almost_iid_irrelevant_Stein}
Let $\rho \in \St(\cH)$ and $\{\cM_n\}_{n \in \N}$  be a family of sets satisfying properties~\eqref{axiom_convexity}-\eqref{axiom_tensor_product}, and $r=o(n)$. Then
\begin{align}
\lim_{n \to \infty} \frac{1}{n} \D\big( \St^n(\cH, \rho^{\otimes n-r}) \| \cM_n \big) = D^\reg(\rho \| \cM)  \, .
\end{align}
\end{lemma}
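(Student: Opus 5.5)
Since $\rho^{\otimes n} \in \St^n(\cH, \rho^{\otimes n-r})$ (take the purification $\ket{\theta}^{\otimes n}$, which lies in the span of $\cV(\cH^{\otimes n},\ket{\theta}^{\otimes n-r})$ and is permutation invariant), we have the trivial upper bound
\begin{align*}
\frac{1}{n} D\big(\St^n(\cH,\rho^{\otimes n-r}) \| \cM_n\big) \leq \frac{1}{n} D(\rho^{\otimes n} \| \cM_n)\, .
\end{align*}
By Fekete's lemma (\cref{lem_fekete}, using Axiom~\eqref{axiom_tensor_product} for subadditivity), the right-hand side converges to $D^\reg(\rho\|\cM)$. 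This gives $\limsup_{n} \frac{1}{n} D(\St^n(\cH,\rho^{\otimes n-r})\|\cM_n) \leq D^\reg(\rho\|\cM)$.

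For the matching lower bound, we use \cref{cor_asymptotic_continuity_ER} uniformly over the almost-iid set. As observed above, any family satisfying Axioms~\eqref{axiom_convexity}-\eqref{axiom_tensor_product} also satisfies Axioms~\eqref{axiom_Giulia_1}-\eqref{axiom_Giulia_3}, with the full-rank state $\omega \in \cM_1$. Since $r = o(n)$, we have $16 r \leq n$ for all sufficiently large $n$. Then for every $\rho_n \in \St^n(\cH,\rho^{\otimes n-r})$,
\begin{align*}
\frac{1}{n} D(\rho_n \| \cM_n) \geq \frac{1}{n} D(\rho^{\otimes n}\|\cM_n) - 3 h\Big(2\sqrt{\tfrac{r}{n}}\Big) - 12\sqrt{\tfrac{r}{n}} \log \frac{d}{\lambda_{\min}(\omega)}\, .
\end{align*}
The error term does not depend on $\rho_n$, so the same bound holds after taking the minimum over the almost-iid set. (The minimum is attained, since the set is compact and $D(\cdot\|\cM_n)$ is lower semicontinuous; an infimum would serve equally well.)

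Letting $n\to\infty$, we have $r/n \to 0$, so $h(2\sqrt{r/n}) \to 0$ and $\sqrt{r/n}\,\log(d/\lambda_{\min}(\omega)) \to 0$, while $\frac{1}{n} D(\rho^{\otimes n}\|\cM_n) \to D^\reg(\rho\|\cM)$. Hence the $\liminf$ is at least $D^\reg(\rho\|\cM)$. Together with the upper bound, the limit exists and equals $D^\reg(\rho\|\cM)$, independently of the particular $r = o(n)$.

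The main point needing care is the uniformity in the lower bound, which \cref{cor_asymptotic_continuity_ER} supplies because its error term depends only on $r$, $n$, $d$, and $\omega$, not on $\rho_n$. The finiteness of $D(\rho_n\|\cM_n)$ is guaranteed because $\cM_n$ contains the full-rank state $\omega^{\otimes n}$. Everything else is bookkeeping.
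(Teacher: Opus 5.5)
Your proof is correct and follows the paper's argument. The paper's proof is a one-line application of \cref{cor_asymptotic_continuity_ER}, whose error term is uniform over the almost-iid set, combined with the Fekete limit that defines $D^\reg(\rho\|\cM)$; your split into a trivial upper bound (via $\rho^{\otimes n}\in\St^n(\cH,\rho^{\otimes n-r})$) and a corollary-based lower bound, along with checking that $16r\le n$ holds eventually, only makes explicit what the paper leaves implicit.
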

\begin{proof}
Note that
\begin{align}
\lim_{n \to \infty} \frac{1}{n} \D\big( \St^n(\cH, \rho^{\otimes n-r}) \| \cM_n \big)
\overset{\textnormal{\Cshref{eq_def_A_RE}}}&{=}\lim_{n \to \infty} \frac{1}{n} \min_{\rho_n \in \St^n(\cH, \rho^{\otimes n-r})}  D(\rho_n \| \cM_n) \\
\overset{\textnormal{\Cshref{cor_asymptotic_continuity_ER}}}&{=}\lim_{n \to \infty} \frac{1}{n} D(\rho^{\otimes n} \| \cM_n) \\
\overset{\textnormal{\Cshref{eq_def_regul_2}}}&{=}D^\reg(\rho \| \cM)  \, .
\end{align}
\end{proof}

We next prove in~\cref{thm_AGSL} a robust and hence stronger version of the generalized quantum Stein's lemma, where the null hypothesis is only almost-iid with defects of size $r=o(n)$ instead of iid. Choosing $r=0$ yields the traditional generalized quantum Stein's lemma $\lim_{n \to \infty}\frac{1}{n}   D_H^{\eps}(\rho^{\otimes n} \| \cM_n)
 = D^\reg(\rho \| \cM)$ from~\cite{brandao_Stein_10,haya_stein_25,ludo25}.
Our proof follows the original proof construction given in~\cite{brandao_Stein_10}.  
As discussed in~\cite{Berta2023gapinproofof,stein_nature_24}, the proof of~\cite[Lemma~III.9]{brandao_Stein_10} contains a logical error that invalidates~\cite[Lemma~III.7]{brandao_Stein_10}. However, \Cref{cor_asymptotic_continuity_ER} is a stronger version of~\cite[Lemma~III.7]{brandao_Stein_10} and thus fixes the logical gap in the original proof of the generalized quantum Stein's lemma.
In addition, our proof generalizes some steps which are necessary, as our null hypothesis is not iid. Furthermore, we polish and simplify some steps of~\cite{brandao_Stein_10} using modern tools from one-shot information theory that were not developed at the time of~\cite{brandao_Stein_10}.

\begin{theorem}[Robust generalized quantum Stein's lemma] \label{thm_AGSL}
Let $\eps \in (0,1)$, $\rho \in \St(\cH)$, $\{\cM_n\}_{n \in \N}$ be a family of sets that satisfy the Axioms~\eqref{axiom_convexity}-\eqref{axiom_tensor_product}, and $r=o(n)$. Then
\begin{align} \label{eq_RGQSL}
 \lim_{n \to \infty}\frac{1}{n}   D_H^{\eps}\big( \St^n(\cH, \rho^{\otimes n -r}) \| \cM_n \big)
 = D^\reg(\rho \| \cM) \, .
\end{align}
\end{theorem}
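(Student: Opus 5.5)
We split \cref{eq_RGQSL} into an achievability bound (``$\geq$'') and a converse (``$\leq$''), and follow the architecture of~\cite{brandao_Stein_10}. The one step that proof gets wrong,~\cite[Lemma~III.7]{brandao_Stein_10}, is replaced by \cref{cor_asymptotic_continuity_ER}, or equivalently \cref{lem_regul_almost_iid_irrelevant_Stein}.

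\emph{Converse.} Here it suffices to exhibit the single state $\rho^{\otimes n}\in\St^n(\cH,\rho^{\otimes n-r})$, so
\begin{align*}
\tfrac1n D_H^\eps\big(\St^n(\cH,\rho^{\otimes n-r})\|\cM_n\big)\leq \tfrac1n D_H^\eps(\rho^{\otimes n}\|\cM_n).
\end{align*}
We then bound the right-hand side by $D^\reg(\rho\|\cM)+o(1)$ through the standard strong-converse step. Fix $k$ and $\sigma_k\in\cM_k$ with $\tfrac1k D(\rho^{\otimes k}\|\sigma_k)\le \tfrac1k D(\rho^{\otimes k}\|\cM_k)+\delta$. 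Block $n=mk+\ell$ and use $\sigma_k^{\otimes m}\otimes\omega^{\otimes \ell}\in\cM_n$, which holds by Axioms~\eqref{axiom_full_rank} and~\eqref{axiom_tensor_product}. The bound $D_H^\eps\le \frac{1}{1-\eps}\big(D+h(\eps)\big)$ alone is not enough. Instead we use $D_H^\eps(\tau\|\sigma)\le D_\alpha(\tau\|\sigma)+\frac{\alpha}{\alpha-1}\log\frac1{1-\eps}$ for $\alpha>1$, together with additivity of $D_\alpha$ on tensor products. Letting $m\to\infty$, then $\alpha\to1$, then $k\to\infty$ gives the converse.

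\emph{Achievability.} This is the main work. Fix $\eta>0$ and $\sigma_n\in\cM_n$. We must show that for every $\rho_n\in\St^n(\cH,\rho^{\otimes n-r})$, a test $M_n$ exists with $\tr[\rho_n M_n]\ge1-\eps$ and $\tr[\sigma_n M_n]\le \ee^{-n(D^\reg-\eta)}$, uniformly in $\sigma_n$ and $\rho_n$. Equivalently, by the duality \eqref{eq_buscemi}, we must show $D_{\max}^{\sqrt{1-\eps'}}(\rho_n\|\sigma_n)\ge n(D^\reg-\eta)$. We argue by contradiction, as in~\cite{brandao_Stein_10}. Suppose that along a subsequence some $\rho_n$ and $\sigma_n$ satisfy $D_H^\eps(\rho_n\|\sigma_n)\le n(D^\reg-\eta)$. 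After symmetrizing with $\mathrm{SYM}_n$, which preserves $\cM_n$ by Axiom~\eqref{axiom_PI} and preserves $\St^n$ because it is a permutation-invariant set, \eqref{eq_buscemi} yields a state $\tilde\rho_n$ with $P(\tilde\rho_n,\rho_n)\le\sqrt{1-\eps''}$ (a large smoothing, not a small one) and $\tilde\rho_n\le \ee^{n(D^\reg-\eta)}\sigma_n$. Brandão--Plenio then apply the exponential de Finetti theorem in its post-selection and almost-iid form~\cite{MSR26}. Tracing out $o(n)$ systems, they decompose $\tilde\rho_n$ into a mixture of almost-iid states along various $\tau^{\otimes}$; the component along $\rho$ carries weight at least roughly $\eps''$, since $\tilde\rho_n$ overlaps non-negligibly with $\rho_n$. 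From $\tilde\rho_n\le \ee^{n(D^\reg-\eta)}\sigma_n$ and $D_{\max}\ge D$, we would then get some almost-iid state $\rho'_{n}$ along $\rho$ with defects $r'=o(n)$ such that
\begin{align*}
\tfrac1n D(\rho'_{n}\|\cM_{n})\le D^\reg-\eta+\tfrac1n\log\tfrac{1}{\eps''}+o(1).
\end{align*}
This contradicts \cref{lem_regul_almost_iid_irrelevant_Stein}, which shows that every almost-iid state along $\rho$ has asymptotic rate equal to $D^\reg(\rho\|\cM)$. That lemma is exactly the corrected replacement for~\cite[Lemma~III.7]{brandao_Stein_10}.

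\emph{Main obstacle.} The hardest step is turning ``$\tilde\rho_n$ has constant overlap with an almost-iid state'' into ``some almost-iid state along $\rho$ has small relative entropy to $\cM_n$.'' The cost of conditioning on the overlap, and of projecting onto the almost-iid subspace, must be only $O(\log\frac1{\eps})+o(n)$. We would first try the gentle projection onto $\mathrm{span}\,\cV(\cH^{\otimes n},\ket{\theta}^{\otimes n-r'})$, with $r'$ chosen as in the de Finetti theorem. Once this projection is applied, the operator inequality $\Pi\tilde\rho_n\Pi\le \ee^{\lambda}\Pi\sigma_n\Pi$ must be compared back to $\cM_n$. For that we expect to use the pinching/measurement map composed with a replacer, with \cref{lem_generalized_almost_iid_satisfies_axioms} ensuring the resulting states stay in the sets to which \cref{prop_continuity_ER_Wasserstein} applies.
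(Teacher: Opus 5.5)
Your converse is correct and is essentially the paper's. Both use $\rho^{\otimes n}\in\St^n(\cH,\rho^{\otimes n-r})$ and a sandwiched-R\'enyi strong converse; you bound $D_H^\eps$ by $D_\alpha$ directly and block explicitly, while the paper goes through $D_{\max}^{\sqrt{(1-\eps)/2}}$, \cite[Thm.~3]{anshu19} and Fekete's lemma.

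The achievability has a genuine gap exactly at the step you call the main obstacle, and the mechanisms you suggest would not work. You need a state $\upsilon$ with two properties: (a) it is \emph{exactly} dominated, $\upsilon\le\ee^{yn+o(n)}\sigma'$ for some $\sigma'$ in the free set, which is what bounds $D(\upsilon\|\cM)$; and (b) it is trace-close to an almost-iid state along $\rho$ itself.

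A de Finetti mixture $\tr_k[\tilde\rho_n]\approx\sum_j p_j\rho^{(j)}$ gives neither property.
\begin{itemize}
\item The approximation holds only in trace norm, so $\tilde\rho_n\le\ee^{yn}\bar\sigma_n$ does not pass to the components $p_j\rho^{(j)}$.
\item The measure need not have an atom at $\rho$, so there is no ``component along $\rho$'' of weight $\approx\eps$.
\item Averaging via almost-convexity only controls $\sum_j p_j D^\reg(\tau_j\|\cM)$. This is useless, because $\tilde\rho_n$ may put weight $1-\eps$ on less resourceful states.
\end{itemize}
A projection $\Pi$ onto $\mathrm{span}\,\cV(\cH_{AE}^{\otimes n},\ket{\theta}^{\otimes n-r'})$ of the \emph{retained} systems destroys (a).
\begin{itemize}
\item $\Pi\tilde\rho_n\Pi$ is not dominated by any multiple of $\tilde\rho_n$.
\item $\Pi$ acts on a purification, where $\tilde\rho_n\le\ee^{yn}\bar\sigma_n$ has no counterpart.
\item Neither projections nor pinchings onto almost-iid subspaces along a resourceful $\theta$ map $\cM_n$ into itself.
\end{itemize}
\Cref{lem_generalized_almost_iid_satisfies_axioms} concerns the set $\bar\St^n$ of almost-iid states, not $\cM_n$, so it does not help here. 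Finally, your overlap is with an almost-iid $\rho_n$, not with $\rho^{\otimes n}$, so the relevant weight is not a constant.

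The paper closes the gap as follows.
\begin{itemize}
\item \Cref{lem_new_fidelity_almost_iid} turns $F(\tilde\rho_n,\hat\rho_n)\ge\eps$ into $F(\tilde\rho_{n-r},\rho^{\otimes n-r})\ge\ee^{-\nu_n}$, where $\nu_n=\log\frac1\eps+2nh(\frac rn)+4r\log d=o(n)$.
\item \Cref{lem_uhlmann_PI} gives a permutation-invariant purification $\ket{\Psi_{n-r}}$ with $|\spr{\Psi_{n-r}}{\theta^{\otimes n-r}}|^2\ge\ee^{-\nu_n}$.
\item Brand\~ao--Plenio's \cref{lem_exp_deFinetti_brandao} conditions on the $m$ \emph{discarded} systems. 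It therefore gives the exact bound $\proj{\Upsilon}\le\ee^{\nu_n}\tr_m[\proj{\Psi_{n-r}}]$. Hence $\upsilon\le\ee^{yn+\nu_n}\tr_m[\bar\sigma_{n-r}]$, with $\tr_m[\bar\sigma_{n-r}]\in\cM_{n-r-m}$ by Axiom~\eqref{axiom_partial_trace}.
\item The same lemma gives $\norm{\proj{\Upsilon}-\proj{\Phi}}_1\le 2\sqrt{2}\,\ee^{-\frac{mr'}{2(n-r)}+\frac{\nu_n}{2}}$ for some $\Phi$ almost-iid along $\theta$.
\item Winter's trace-distance continuity \cite{win16} moves the bound from $\upsilon$ to $\phi$.
\item \Cref{lem_regul_almost_iid_irrelevant_Stein}, applied with $r'$ defects, then gives $y\ge D^\reg(\rho\|\cM)-o(1)$.
\end{itemize}
Because $\nu_n\sim r\log(n/r)$ grows, you must also choose $m,r'=o(n)$ with $mr'/n\gg r\log(n/r)$. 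The paper takes $m=\sqrt{rn}$ and $r'=\sqrt{rn}\log(n/r)\gamma_n$. Your sketch hides this balancing behind ``weight at least roughly $\eps''$''.
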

We prove the two directions of~\cref{eq_RGQSL} separately. Recalling the operational interpretation of the hypothesis testing relative entropy defined in~\cref{eq_def_DH}, the two directions correspond to the following two statements:
\begin{enumerate}[(i)]
\item there exists a sequence $\{M_n,\id-M_n \}_{n \in \N}$ of POVMs that has an arbitrarily small type I error and achieves an error exponent $D^\reg(\rho \| \cM)$; \label{it_achievability}
\item the error exponent $D^\reg(\rho \| \cM)$ is optimal.  \label{it_strong_converse}
\end{enumerate}
Part~\eqref{it_achievability} is called ``achievability" whereas part~\eqref{it_strong_converse} is usually referred to as ``converse".\footnote{Here we even prove a ``strong converse" result as~\cref{eq_RGQSL} is true for any fixed $\eps \in (0,1)$.}
\begin{proof}
\textbf{Strong converse:}
We first note that $\rho^{\otimes n} \in \St^n(\cH, \rho^{\otimes n -r})$ and hence
\begin{align} \label{eq_strong_converse_step0}
D_H^{\eps}\big( \St^n(\cH, \rho^{\otimes n -r}) \| \cM_n \big) 
\leq D_H^{\eps}\big(  \rho^{\otimes n} \| \cM_n \big) \, .
\end{align}
As a result, the strong converse for the robust generalized quantum Stein's lemma follows from the strong converse for the generalized quantum Stein's lemma~\cite{brandao_Stein_10}. We repeat the proof here for completeness and readability, but note that it follows from standard techniques in one-shot information theory~\cite{marco_book}. Furthermore, the strong converse proof given in~\cite{haya_stein_25} works similarly. 
For any $\alpha > 1$ and choosing $\nu = \frac{1-\eps}{2}$ in~\cref{eq_buscemi}, we can write
\begin{align}
 \lim_{n \to \infty}\frac{1}{n}   D_H^{\eps}( \rho^{\otimes n} \| \cM_n )
\overset{\textnormal{\Cshref{eq_buscemi}}}&{\leq} \lim_{n \to \infty}  \frac{1}{n}  D_{\max}^{\sqrt{\frac{1-\eps}{2}}}( \rho^{\otimes n} \| \cM_n)   \\
\overset{\textnormal{\cite[Thm.~3]{anshu19}}}&{\leq}\!\!\! \lim_{n \to \infty}  \frac{1}{n}   D_{\alpha}( \rho^{\otimes n} \| \cM_n)   \\
\overset{\textnormal{\Cshref{lem_fekete}}}&{=} \inf_{n \in \N}  \frac{1}{n}  D_{\alpha}( \rho^{\otimes n} \| \cM_n)  \, .\label{eq_AGSL_1}
\end{align} 
Since the above steps hold for all $\alpha > 1$, we obtain
\begin{align}
 \lim_{n \to \infty} \frac{1}{n} D_H^{\eps}\big( \St^n(\cH, \rho^{\otimes n -r}) \| \cM_n \big) 
\overset{\textnormal{\Cshref{eq_AGSL_1,eq_strong_converse_step0}}}&{\leq} \inf_{\alpha>1}  \inf_{n \in \N}  \frac{1}{n}     D_{\alpha}( \rho^{\otimes n} \| \cM_n)   \\
&=   \inf_{n \in \N}   \min_{\sigma_n \in \cM_n}   \inf_{\alpha>1}  \frac{1}{n}  D_{\alpha}(\rho^{\otimes n} \| \sigma_n )  \\
\overset{\textnormal{\cite[Cor.~4.13]{marco_book}}}&{=}   \inf_{n \in \N}  \frac{1}{n}  \min_{\sigma_n \in \cM_n}   \lim_{\alpha \searrow 1}  D_{\alpha}(\rho^{\otimes n} \| \sigma_n ) \\
\overset{\textnormal{\cite[Prop.~4.15]{marco_book}}}&{=}   \inf_{n \in \N}  \frac{1}{n} \min_{\sigma_n \in \cM_n}   D(\rho^{\otimes n} \| \sigma_n ) \\
\overset{\textnormal{\Cshref{eq_def_regul_2}}}&{=} D^\reg(\rho \| \cM) \, , 
\end{align}
which completes the proof of the strong converse part.

\textbf{Achievability:} 
Our argument follows the original proof strategy developed in~\cite{brandao_Stein_10}. As in~\cite{ludo25}, we phrase it in terms of modern tools from one-shot information theory, such as the smooth max-relative entropy, which were not yet fully developed at the time of~\cite{brandao_Stein_10}. In addition, since we prove a robust generalized quantum Stein’s lemma with an almost-iid null hypothesis, several steps from~\cite{brandao_Stein_10} require nontrivial generalization, as the original argument only treats the iid case.

First, observe that
\begin{align}
\frac{1}{n}   D_H^{\eps}\big( \St^n(\cH, \rho^{\otimes n -r}) \| \cM_n \big)
\overset{\textnormal{\Cshref{eq_buscemi}}}&{\geq}  \frac{1}{n} D_{\max}^{\sqrt{1-\eps}}\big( \St^n(\cH, \rho^{\otimes n -r}) \| \cM_n \big) - \frac{1}{n} \log \frac{1}{\eps} \\
&=  \min_{\rho_n \in \St^n(\cH, \rho^{\otimes n -r}) } \min_{\sigma_n \in  \cM_n}  \frac{1}{n} D_{\max}^{\sqrt{1-\eps}}\big( \rho_n \|\sigma_n \big) - o(1)  \, . \label{eq_relate_Dmax_achievability}
\end{align} 
Let $\hat \rho_n$ and $\hat \sigma_n$ denote the optimizers of the right-hand side of~\cref{eq_relate_Dmax_achievability}.
 The DPI for the smooth max-relative entropy (see~\cref{lem_DPI_Dmax_eps}) implies that
 \begin{align}
D_{\max}^{\sqrt{1-\eps}}(\hat \rho_n \| \hat \sigma_n) 
\overset{\textnormal{\Cshref{lem_DPI_Dmax_eps}}}&{\geq} D_{\max}^{\sqrt{1-\eps}}\big(\mathrm{SYM}_n(\hat \rho_n) \| \mathrm{SYM}_n(\hat \sigma_n) \big)
=D_{\max}^{\sqrt{1-\eps}}\big(\hat \rho_n \| \mathrm{SYM}_n(\hat \sigma_n) \big) \, ,
 \end{align}
 where the final step uses that almost-iid states are permutation invariant (see~\cref{def_almost_product_state_mixed}).
Axioms~\eqref{axiom_convexity} and~\eqref{axiom_PI} imply that for $\hat \sigma_n \in  \cM_n$ we have $\mathrm{SYM}_n(\hat \sigma_n) \in \cM_n$. This shows that the optimizer over $\cM_n$ on the right-hand side of~\cref{eq_relate_Dmax_achievability} can be assumed to be permutation invariant. In the following, we denote it by $\bar \sigma_n$. Let $\bar \rho_n \in \cB_{\sqrt{1-\eps}}(\hat \rho_n)$ be such that 
\begin{align}
\min_{\rho_n \in \St^n(\cH, \rho^{\otimes n -r}) } \min_{\sigma_n \in  \cM_n}  \frac{1}{n} D_{\max}^{\sqrt{1-\eps}}\big( \rho_n \|\sigma_n \big)
= \frac{1}{n} D_{\max}^{\sqrt{1-\eps}}\big( \hat \rho_n \| \bar \sigma_n \big)
= \frac{1}{n} D_{\max}\big( \bar \rho_n \| \bar \sigma_n \big)
=:y \, . \label{eq_DEF_y}
\end{align}
By definition of the max-relative entropy, we have
\begin{align}\label{eq_not_yet_PI}
\bar \rho_n \leq \ee^{yn} \bar \sigma_n \qquad \textnormal{and} \qquad F(\bar \rho_n,\rho_n) \geq \eps \, .
\end{align}
Next, we show that $\tilde \rho_n := \mathrm{SYM}_n(\bar \rho_n)$ also satisfies~\cref{eq_not_yet_PI}.
To see this note $\mathrm{SYM}_n$ is trace-preserving and completely positive, and hence the data-processing inequality (DPI) for the fidelity~\cite{marco_book} implies 
 \begin{align}
F(\tilde \rho_n , \rho_n)
\overset{\textnormal{perm.~inv.}}{=}F\big( \mathrm{SYM}_n(\bar \rho_n) , \mathrm{SYM}_n(\rho_n) \big)
\overset{\textnormal{DPI}}{\geq} F( \bar \rho_n , \rho_n) 
\overset{\textnormal{\Cshref{eq_not_yet_PI}}}{\geq} \eps \, . \label{eq_fidelity_ds_stein_a}
 \end{align}
Furthermore, we have
\begin{align}
\tilde \rho_n 
=\mathrm{SYM}_n(\bar \rho_n)
\overset{\textnormal{\Cshref{eq_not_yet_PI}}}{\leq}\ee^{yn} \mathrm{SYM}_n(\bar \sigma_n)
\overset{\textnormal{perm.~inv.}}{=} \ee^{yn} \bar \sigma_n \, . \label{eq_op_Stein_PI}
\end{align}
Let $\ket{\theta}_{AE} \in \cH_{A} \otimes \cH_E$ be a purification of $\rho_{A}=\rho$ where $E$ is the purifying system. \Cref{lem_uhlmann_PI} ensures the existence of a permutation-invariant purification $\ket{\Psi_{n-r}}_{A^{n-r} E^{n-r}} \in \cH_{AE}^{\otimes n-r}$ of $\tilde \rho_{n-r}:=\tr_{r}[\tilde \rho_n]$ such that for $d =\dim(\cH)$ we have
\begin{align}
\log | \spr{\Psi_{n-r}}{\theta^{\otimes n-r}}|^2 
\overset{\textnormal{\Cshref{lem_uhlmann_PI}}}&{=} \log F(\tilde \rho_{n-r} , \rho^{\otimes n-r}) \\
\overset{\textnormal{\Cshref{lem_new_fidelity_almost_iid}}}&{\geq} \log F(\tilde \rho_{n} , \rho_{n}) - 2n h\Big(\frac{r}{n}\Big) - 4r \log d\\
\overset{\textnormal{\Cshref{eq_fidelity_ds_stein_a}}}&{\geq} \log \eps - 2n h\Big(\frac{r}{n}\Big) - 4r \log d  \\
&=- \left(\log \frac{1}{\eps} +  2n h\Big(\frac{r}{n}\Big) + 4r \log d \right) =: - \nu_{n}\, . \label{eq_fidelity_stein_fb_0}
\end{align}

We can now apply a variant of the exponential de Finetti theorem~\cite{renner_phd,renner07} stated in~\cref{lem_exp_deFinetti_brandao}, which proves the existence of pure states $\ket{\Phi_{n-r-m}} \in \St^{n-r-m}(\cH_{AE},\ket{\theta}_{AE}^{\otimes n-r-m- r'})$ and $\ket{\Upsilon_{n-r-m}} \in \cH_{AE}^{\otimes n-r -m}$ for any $m, r' \in \N$ such that $m\leq n-r$ and $r'\leq n-r-m$ (which we will choose later) such that
\begin{align}
&\hspace{-40mm}\norm{\tr_{E^{n-r-m}}[\proj{\Upsilon_{n-r-m}}]-\tr_{E^{n-r-m}}[\proj{\Phi_{n-r-m}}]}_1 \nonumber \\
\overset{\textnormal{contractivity~\cite[Thm.~8.16]{wolf_notes}}}&{\leq} \norm{\proj{\Upsilon_{n-r-m}}-\proj{\Phi_{n-r-m}}}_1  \\
\overset{\textnormal{\Cshref{lem_exp_deFinetti_brandao}}}&{\leq} \frac{2 \sqrt{2}}{|\spr{\Psi_{n-r}}{\theta^{\otimes n-r}}|} \ee^{- \frac{m  r'}{2(n-r)}}\\
\overset{\textnormal{\Cshref{eq_fidelity_stein_fb_0}}}&{\leq}2 \sqrt{2} \exp \Big( - \frac{m r'}{2(n-r)} + \frac{\nu_n}{2} \Big)=:2\xi_{n} \, , \label{eq_stein_exp_dF1}
\end{align}
and
\begin{align}
\tr_{E^{n-r-m}}[ \proj{\Upsilon_{n-r-m}}] 
\overset{\textnormal{\Cshref{lem_exp_deFinetti_brandao}}}&{\leq} \frac{1}{|\spr{\Psi_{n-r}}{\theta^{\otimes n-r}}|^2} \tr_{m}[\tilde \rho_{n-r}] \\
\overset{\textnormal{\Cshref{eq_fidelity_stein_fb_0}}}&{\leq} \tr_{m}[\tilde \rho_{n-r}] \ee^{\nu_n}  \\
\overset{\textnormal{\Cshref{{eq_op_Stein_PI}}}}&{\leq} \ee^{y n + \nu_n} \tr_{m }[\bar \sigma_{n-r}] \, . \label{eq_stein_exp_dF2}
\end{align}

By definition of the relative entropy of resource and recalling that the operator logarithm is monotone~\cite[Table~2.2]{Sutter_book}, we find for $\upsilon_{n-r-m}:=\tr_{E^{n-r-m}}[\proj{\Upsilon_{n-r-m}}]$
\begin{align}
\frac{1}{n}  D(\upsilon_{n-r-m} \| \cM_{n-r-m})
\overset{\textnormal{Axiom}~\eqref{axiom_partial_trace}}&{\leq} \frac{1}{n}D\big(\upsilon_{n-r-m} \| \tr_{m }[\bar \sigma_{n-r}] \big) \label{eq_need_ass_here} \\
&= \frac{1}{n}\tr\big[\upsilon_{n-r-m} \big(\log(\upsilon_{n-r-m}) - \log(\tr_{m}[\bar \sigma_{n-r}])   \big)\big] \\
\overset{\textnormal{\Cshref{eq_stein_exp_dF2}}}&{\leq} \frac{1}{n}\log\left(\ee^{yn+\nu_n} \right) +\frac{1}{n}\tr\big[\upsilon_{n-r-m} \big(\log(\tr_{m}[\bar \sigma_{n-r}]) - \log(\tr_{m}[\bar \sigma_{n-r}])   \big)\big]  \\
&=y + \frac{\nu_n}{n} \\
\overset{\textnormal{\Cshref{eq_fidelity_stein_fb_0}}}&{=} y + o(1)  \, , \label{eq_stein_almost_done_step_ds1}
\end{align}
where in the final step, we used~\cref{eq_fidelity_stein_fb_0}, which shows that $\nu_n =o(n)$ since $r=o(n)$.
The continuity of the relative entropy of resource~\cite[Lemma~7]{win16} yields for $\phi_{n-r-m}:=\tr_{E^{n-r-m}}[\proj{\Phi_{n-r-m}}]$
\begin{align}
&\frac{1}{n}  D(\upsilon_{n-r-m} \| \cM_{n-r-m} )
\overset{\textnormal{\Cshref{eq_stein_exp_dF1}}}{\geq} \frac{1}{n}  D(\phi_{n-r-m} \| \cM_{n-r-m}) - f(\xi_{n}) \, , \label{eq_stein_almost_done_step_ds2}
\end{align}
where
\begin{align}
f(\xi_{n}) = \xi_n \log \frac{1}{\lambda_{\min}(\omega)} + \frac{1+\xi_{n}}{n} h\Big(\frac{\xi_{n}}{1+\xi_{n}} \Big) \, , 
\end{align}
where $\lambda_{\min}(\omega)>0$ is a constant given via Axiom~\eqref{axiom_full_rank}.

Next, we choose the parameters $m$ and $r'$, which are functions of $n$, such that
\begin{enumerate}[(a)]
\item $m=o(n)$ and $r'=o(n)$; \label{it_requirement1_sublinear}
\item $\xi_n = o(1)$ where $\xi_n$ is defined in~\cref{eq_stein_exp_dF1}.  \label{it_requirement2_vanishing_xi}
\end{enumerate}
Since $r=o(n)$, the property stated in~\eqref{it_requirement1_sublinear} ensures that we can choose $n$ large enough such that $m\leq n-r$ and $r'\leq n-r-m$.
A possible choice for $m$ and $r'$ is given as follows: let 
\begin{align} \label{eq_gamma_def}
\gamma_n:= \sqrt{\frac{\sqrt{n/r}}{\log(n/r)}} \, ,
\end{align}
where we note that $\gamma_n$ goes to $+\infty$ as $n \to \infty$. 
We next define
\begin{align} \label{eq_choice_params}
m = \sqrt{r n} \qquad \textnormal{and} \qquad  r'= \sqrt{rn} \log\left(\frac{n}{r}\right) \gamma_n \, .
\end{align}
It is straightforward to see that $m=o(n)$ because $r=o(n)$. In addition, we have $r' =o(n)$, because
\begin{align}
 \frac{r'}{n} 
 \overset{\textnormal{\Cshref{eq_choice_params}}}{=} \sqrt{\frac{r}{n}} \log \left( \frac{n}{r} \right) \gamma_n
 \overset{\textnormal{\Cshref{eq_gamma_def}}}{=}  \sqrt{\frac{\log(n/r)}{\sqrt{n/r}}} \, ,
\end{align}
which goes to $0$ as $n \to \infty$. Hence~\eqref{it_requirement1_sublinear} is satisfied.
To see why~\eqref{it_requirement2_vanishing_xi} is fulfilled, we need to convince ourselves that \smash{$\frac{m r'}{n-r}$} grows faster than $\nu_n$. Recall that
\begin{align}
\nu_n 
\overset{\textnormal{\Cshref{eq_fidelity_stein_fb_0}}}&{=}  \log \frac{1}{\eps} +  2n h\Big(\frac{r}{n}\Big) + 4r \log d \, , 
\end{align} 
where the first term is constant in $n$, the second grows as $O(r \log(n/r))$, and the third as $O(r)$. 
From this we see that the second term is the dominant one. Hence, there exists a constant $c>0$ such that $\nu_n  \leq c\, r \log(n/r)$.
Using that $r=o(n)$ we find
\begin{align}
\lim_{n \to \infty} \left( -\frac{m r'}{n-r} + \nu_n \right)
\overset{\textnormal{\Cshref{eq_choice_params}}}&{\leq}\lim_{n \to \infty} \left( - r \log \Big(\frac{n}{r} \Big) \gamma_n + c r \log \Big(\frac{n}{r} \Big) \right) \\
&= \lim_{n \to \infty} r \log \Big(\frac{n}{r} \Big) \left(- \gamma_n +c \right) \\
&= - \infty \, ,
\end{align}
and thus $\xi_n = o(1)$.

Since $D(\phi_{n-r-m} \| \cM_{n-r-m} ) \leq (n-r-m) \log \frac{1}{\lambda_{\min}(\omega)}$, which follows by~\cref{lem_RER_bounded}, we obtain 
\begin{align}
\frac{1}{n} D(\phi_{n-r-m} \| \cM_{n-r-m} )
&\geq \frac{1}{n-r-m} D(\phi_{n-r-m} \| \cM_{n-r-m}) -  \frac{r+m}{n} \log \frac{1}{\lambda_{\min}(\omega)}\\
&= \frac{1}{n-r-m} D(\phi_{n-r-m} \| \cM_{n-r-m}) - o(1) \, . \label{eq_stein_almost_done_step_ds3}
\end{align}
Thus, we find
\begin{align}
\frac{1}{n-r-m} D(\phi_{n-r-m} \| \cM_{n-r-m})\overset{\textnormal{\Cshref{eq_stein_almost_done_step_ds1,eq_stein_almost_done_step_ds2,eq_stein_almost_done_step_ds3}}}{\leq} y + o(1) \, . \label{eq_almost_done_dssssssss}
\end{align}
Furthermore, we have
\begin{align}
\frac{1}{n-r-m} D(\phi_{n-r-m} \| \cM_{n-r-m})
&\geq  \frac{1}{n-r-m}  D\big( \St^{n-r-m}(\cH,\rho^{\otimes n-r-m-r'}) \| \cM_{n-r-m} \big)\\
\overset{\textnormal{\Cshref{lem_limits}}}&{=} \frac{1}{n}  D\big( \St^n(\cH,\rho^{\otimes n-r'}) \| \cM_{n} \big) + o(1) \, . \label{eq_dsssssss1}
\end{align}
Combining the steps above yields
\begin{align}
\frac{1}{n}  D\big( \St^n(\cH,\rho^{\otimes n-r'}) \| \cM_{n} \big)
\overset{\textnormal{\Cshref{eq_relate_Dmax_achievability,eq_DEF_y,eq_almost_done_dssssssss,eq_dsssssss1}}}{\leq}  \frac{1}{n}   D_H^{\eps}\big( \St^n(\cH, \rho^{\otimes n -r}) \| \cM_n \big) + o(1) \, . \label{eq_dddddoooonnneeee}
\end{align}
Taking the limit $n \to \infty$ gives
\begin{align}
D^\reg(\rho \| \cM)  
\overset{\textnormal{\Cshref{lem_regul_almost_iid_irrelevant_Stein}}}{=} \lim_{n \to \infty } \frac{1}{n} D\big( \St^n(\cH,\rho^{\otimes n-r'}) \| \cM_{n} \big) 
\overset{\textnormal{\Cshref{eq_dddddoooonnneeee}}}{\leq}  \lim_{n \to \infty}\frac{1}{n}   D_H^{\eps}\big( \St^n(\cH, \rho^{\otimes n -r}) \| \cM_n \big) \, ,
\end{align}
which proves the achievability part.
\end{proof}

The smooth max-relative entropy and the hypothesis testing relative entropy are closely related by~\cref{eq_buscemi}. Hence, \cref{thm_AGSL} can be rephrased in terms of the smooth max-relative entropy.
\begin{corollary} \label{cor_RGQSL_Dmax}
Let $\eps \in (0,1)$, $\rho \in \St(\cH)$, $\{\cM_n\}_{n \in \N}$ be a family of sets that satisfy the Axioms~\eqref{axiom_convexity}-\eqref{axiom_tensor_product}, and $r=o(n)$. Then
\begin{align}
  \lim_{n \to \infty} \frac{1}{n}  D_{\max}^{\eps}\big(\St^n(\cH, \rho^{\otimes n-r})\| \cM_n \big)
 = D^\reg(\rho \| \cM) \, .
\end{align}
\end{corollary}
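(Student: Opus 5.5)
We derive the corollary from \cref{thm_AGSL} by sandwiching the smooth max-relative entropy between hypothesis testing relative entropies via \cref{eq_buscemi}. Both sides of \cref{eq_buscemi} carry only constant correction terms, $\log\frac{1}{1-\eps'}$ and $\log\frac{4}{\nu^2}$, which disappear after dividing by $n$. The only care needed is matching the smoothing parameters. In \cref{eq_buscemi} the smoothing of $D_{\max}$ is $\sqrt{\eps'}$, whereas the hypothesis testing quantities carry $1-\eps'$ and $1-\eps'-\nu$. Since \cref{thm_AGSL} holds for \emph{every} fixed parameter in $(0,1)$, the strong converse character of the theorem absorbs this mismatch.

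Concretely, fix $\eps\in(0,1)$ and set $\eps':=\eps^2\in(0,1)$, so that $\sqrt{\eps'}=\eps$. Choose $\nu:=\frac{1-\eps^2}{2}\in(0,1-\eps^2)$. Fix $\rho_n\in\St^n(\cH,\rho^{\otimes n-r})$ and $\sigma_n\in\cM_n$. Applying \cref{eq_buscemi} to this pair gives
\begin{align}
D_H^{1-\eps^2}(\rho_n\|\sigma_n)+\log\frac{1}{1-\eps^2}\;\geq\; D_{\max}^{\eps}(\rho_n\|\sigma_n)\;\geq\; D_H^{\frac{1-\eps^2}{2}}(\rho_n\|\sigma_n)-\log\frac{16}{(1-\eps^2)^2}\,.
\end{align}
These inequalities hold pointwise, so we may take the infimum over $\rho_n$ and $\sigma_n$ on all three sides. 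This yields the same sandwich for the set-valued quantities $D_{\max}^{\eps}(\St^n(\cH,\rho^{\otimes n-r})\|\cM_n)$ and $D_H^{\cdot}(\St^n(\cH,\rho^{\otimes n-r})\|\cM_n)$.

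Next, divide by $n$ and let $n\to\infty$. The additive constants vanish. By \cref{thm_AGSL}, applied once with parameter $1-\eps^2\in(0,1)$ and once with $\frac{1-\eps^2}{2}\in(0,1)$, both outer limits exist and equal $D^\reg(\rho\|\cM)$. The squeeze then shows that the middle limit exists and equals $D^\reg(\rho\|\cM)$.

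The only subtle point is the parameter bookkeeping. Both hypothesis-testing parameters must lie strictly in $(0,1)$, which holds because $\eps\in(0,1)$. We also must not need a $D_H$ bound at a parameter tied to $\eps$ in a way that could leave $(0,1)$. Because \cref{thm_AGSL} is a strong-converse statement valid for all $\eps$, this step is routine.
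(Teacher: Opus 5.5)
Your proof is correct and follows the paper's argument essentially verbatim: both proofs use \cref{eq_buscemi} with smoothing parameter $\eps^2$ to sandwich $D_{\max}^{\eps}$ between $D_H^{1-\eps^2}$ and $D_H^{1-\eps^2-\nu}$, pass to the infimum over the two sets, and invoke \cref{thm_AGSL} at both hypothesis-testing parameters. The differences are cosmetic: you fix $\nu=\frac{1-\eps^2}{2}$ and harmlessly drop the nonnegative term $\log\frac{1}{1-\eps^2}$ in the lower bound, and your squeeze makes the existence of the limit explicit, which the paper leaves implicit.
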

\begin{proof}
For any $\rho_n \in  \St^n(\cH, \rho^{\otimes n-r})$ and $\sigma_n \in \cM_n$ we have
\begin{align} \label{eq_cor_step1}
\frac{1}{n} D_{\max}^{\eps}(\rho_n \| \sigma_n)
\overset{\textnormal{\Cshref{eq_buscemi}}}{\leq} \frac{1}{n} D_{H}^{1-\eps^2}(\rho_n \| \sigma_n) + \frac{1}{n} \log \frac{1}{1-\eps^2} 
\end{align}
and hence
\begin{align}
 \lim_{n \to \infty} \frac{1}{n}  D_{\max}^{\eps}\big(\St^n(\cH, \rho^{\otimes n-r})\| \cM_n \big)
 \overset{\textnormal{\Cshref{eq_cor_step1}}}&{\leq}  \lim_{n \to \infty} \frac{1}{n}  D_{H}^{1-\eps^2}\big(\St^n(\cH, \rho^{\otimes n-r})\| \cM_n \big) \\
   \overset{\textnormal{\Cshref{thm_AGSL}}}&{=}  D^\reg(\rho \| \cM)\, .
\end{align}

To see the other direction, note that for any $\rho_n \in  \St^n(\cH, \rho^{\otimes n-r})$, $\sigma_n \in \cM_n$ and $\nu \in (0,1-\eps^2)$ we have
\begin{align} \label{eq_cor_step2}
\frac{1}{n} D_{\max}^{\eps}(\rho_n \| \sigma_n)
\overset{\textnormal{\Cshref{eq_buscemi}}}{\geq} \frac{1}{n} D_{H}^{1-\eps^2-\nu}(\rho_n \| \sigma_n) + \frac{1}{n} \log \frac{1}{1-\eps^2} - \frac{1}{n} \log \frac{4}{\nu^2} 
\end{align}
and therefore
\begin{align}
\lim_{n \to \infty} \frac{1}{n} D_{\max}^{\eps}\big(\St^n(\cH, \rho^{\otimes n-r}) \| \cM_n \big)
 \overset{\textnormal{\Cshref{eq_cor_step2}}}&{\geq}  \lim_{n \to \infty} \frac{1}{n} D_{H}^{1-\eps^2-\nu}\big(\St^n(\cH, \rho^{\otimes n-r})\| \cM_n \big) \\
   \overset{\textnormal{\Cshref{thm_AGSL}}}&{=}  D^\reg(\rho \| \cM)\, .
\end{align}
\end{proof}

\subsection{Robust quantum Stein's lemma}
We next prove a robust version of the quantum Steins's lemma (see~\cref{eq_ASL}), which shows that the relative entropy is not only the optimal error exponent between distinguishing between two iid sources but also for two almost-iid sources.\footnote{This does not follow immediately from~\cref{thm_AGSL} because the set of almost-iid states does not satisfy the Axioms~\eqref{axiom_partial_trace}-\eqref{axiom_tensor_product}.} We first need a technical lemma that ensures that the relative entropy between two almost-iid states asymptotically coincides with the relative entropy of the two iid states.

Before stating and proving the lemma we review some basic properties of almost-iid states which are discussed in more detail in~\cite{MSR26} and will be used in the proofs of~\cref{lem_single_letter_two_almost_iid,thm_ASL}.
By definition of almost-iid states (see~\cref{def_almost_product_state_mixed}), for any $\sigma_{n} \in \St^n(\cH, \sigma^{\otimes n -r})$ there exists a purification $\ket{\theta}_{AE}$ of $\sigma = \sigma_{A}$ and an extension $\sigma_{A^n E^n}$ of $\sigma_n=\sigma_{A^n}$ that can be written as
\begin{align} \label{eq_decomp_psi}
\sigma_{A^n E^n} = \sum_{t,t' \in \cT} \beta_{t,t'} \ket{\Psi_t} \bra{\Psi_{t'}}_{A^n E^n} \, ,
\end{align}
for a family $\{\ket{\Psi_t}_{A^n E^n}\}_{t \in \cT}$ of orthonormal vectors from $\cV(\cH_{AE}^{\otimes n}, \ket{\theta}_{AE}^{\otimes n-r})$ with $\beta_{t,t'} \in \C$ satisfying $\sum_{t \in \cT} \beta_{t,t} =1$ and 
\begin{align} \label{eq_size_T}
\log |\cT| \leq n \, h\Big( \frac{r}{n} \Big) + r \log d_{AE} \, .
\end{align}
Because $\sigma_{A^n E^n}$ given in~\cref{eq_decomp_psi} is a density matrix (and therefore positive semidefinite) and $\{\ket{\Psi_t}_{A^n E^n}\}_{t \in \cT}$ are orthonormal, it follows that $\beta_{t,t} \geq 0$. 
For 
\begin{align} \label{eq_tilde_rho}
\tilde \sigma_{A^n E^n} := \sum_{t \in \cT} \beta_{t,t} \underbrace{\proj{\Psi_t}_{A^n  E^n}}_{=:\tilde \sigma_n^{(t)}}  \, .
\end{align}
It is shown in~\cite[Lemma~4.3]{MSR26} that 
\begin{align} \label{eq_pinching_statement}
\sigma_{A^n} \leq |\cT| \tilde \sigma_{A^n } \, .
\end{align}
The quasi-convexity of the max-divergence~\cite{marco_book} and the structure of $\tilde \sigma_n$ allow us to simplify
\begin{align}
\frac{1}{n} D_{\max}(\tilde \sigma_n  \|  \sigma^{\otimes n} )
\overset{\textnormal{quasi-convexity}}&{\leq} \max_{t \in \cT} \frac{1}{n} D_{\max}(\tilde \sigma^{(t)}_n  \|  \sigma^{\otimes n} ) \\
&= \max_{t \in \cT} \frac{1}{n} D_{\max}(\omega^{(t)}_r  \|  \sigma^{\otimes r} ) \\
&\leq \frac{r}{n} \log \frac{1}{\lambda_{\min}(\sigma)} \label{eq_lam_min_stop} \, ,
\end{align}
where $\omega^{(t)}_r \in \St(\cH^{\otimes r})$ represents the defects in $\sigma^{(t)}_n$.

\begin{lemma} \label{lem_single_letter_two_almost_iid}
Let $\eps \in (0,1)$, $\rho, \sigma \in \St(\cH)$ with $\sigma>0$ and $r=o(n)$. Then
\begin{align}
\lim_{n \to \infty} \frac{1}{n} D\big(\rho^{\otimes n}  \| \St^n(\cH, \sigma^{\otimes n -r}) \big)
 = D(\rho \| \sigma) \, .
\end{align}
\end{lemma}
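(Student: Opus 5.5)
We prove the two inequalities separately. The upper bound is immediate: $\sigma^{\otimes n} \in \St^n(\cH, \sigma^{\otimes n-r})$, since it has zero defects. Hence
\begin{align}
\frac{1}{n} D\big(\rho^{\otimes n} \| \St^n(\cH, \sigma^{\otimes n-r})\big) \leq \frac{1}{n} D(\rho^{\otimes n} \| \sigma^{\otimes n}) = D(\rho\|\sigma) \, ,
\end{align}
using additivity of the relative entropy. If $D(\rho\|\sigma)=\infty$ this is vacuous, but $\sigma>0$ guarantees finiteness.

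For the lower bound we will use the operator inequality chain derived just above the lemma. Fix any $\sigma_n \in \St^n(\cH, \sigma^{\otimes n-r})$ and let $\tilde\sigma_n := \tr_{E^n}[\tilde \sigma_{A^nE^n}]$ be the pinched (dephased) version. By~\cref{eq_pinching_statement} we have $\sigma_n \leq |\cT|\,\tilde\sigma_n$, and by~\cref{eq_lam_min_stop} we have $\tilde\sigma_n \leq \ee^{r\log(1/\lambda_{\min}(\sigma))}\sigma^{\otimes n}$. Together these give
\begin{align}
\sigma_n \leq |\cT|\, \lambda_{\min}(\sigma)^{-r}\, \sigma^{\otimes n} \, .
\end{align}
Operator monotonicity of the logarithm then yields $\log \sigma_n \leq \log|\cT| + r\log\frac{1}{\lambda_{\min}(\sigma)} + \log \sigma^{\otimes n}$. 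Note that $\sigma^{\otimes n}$ is full rank, and on the support of $\rho^{\otimes n}$ the state $\sigma_n$ must have support, otherwise the divergence is infinite and there is nothing to show. Taking the expectation with respect to $\rho^{\otimes n}$, we obtain
\begin{align}
D(\rho^{\otimes n}\|\sigma_n) \geq D(\rho^{\otimes n}\|\sigma^{\otimes n}) - \log|\cT| - r\log\frac{1}{\lambda_{\min}(\sigma)} \, .
\end{align}
One could alternatively phrase this step as $D(\rho\|\sigma_n)\geq D(\rho\|\sigma')-D_{\max}(\sigma_n\|\sigma')$.

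Next we apply~\cref{eq_size_T}: $\log|\cT| \leq n h(r/n) + r\log d_{AE}$, where $d_{AE}$ is a fixed constant because the purification dimension can be taken to be at most $d^2$. Dividing by $n$ and minimizing over $\sigma_n$ gives
\begin{align}
\frac{1}{n} D\big(\rho^{\otimes n}\|\St^n(\cH,\sigma^{\otimes n-r})\big) \geq D(\rho\|\sigma) - h\Big(\frac{r}{n}\Big) - \frac{r}{n}\log\frac{d_{AE}}{\lambda_{\min}(\sigma)} \, .
\end{align}
Since $r=o(n)$, the correction terms vanish as $n\to\infty$. Combined with the upper bound, the limit exists and equals $D(\rho\|\sigma)$.

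The main subtlety is justifying the operator inequality $\sigma_n \leq |\cT|\lambda_{\min}(\sigma)^{-r}\sigma^{\otimes n}$ on the $A^n$ marginal; in particular, we need~\cref{eq_lam_min_stop} to hold at the level of the reduced state $\tilde\sigma_{A^n}$. This follows because each $\proj{\Psi_t}$ reduces to a permuted $\sigma^{\otimes n-r}\otimes\omega_r^{(t)}$, and $\omega_r^{(t)}\leq \id \leq \lambda_{\min}(\sigma)^{-r}\sigma^{\otimes r}$. A further point to check is that the purification dimension, and hence $d_{AE}$, does not grow with $n$.
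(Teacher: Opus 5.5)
Your proof is correct and follows the paper's argument: the paper also combines $\sigma_n \leq |\cT|\tilde\sigma_n$ with \cref{eq_lam_min_stop} and \cref{eq_size_T}. The only difference is that it applies operator monotonicity of the logarithm to the first inequality and then invokes the Christandl--M\"uller-Hermes triangle inequality $D(\rho^{\otimes n}\|\tilde\sigma_n)\geq D(\rho^{\otimes n}\|\sigma^{\otimes n}) - D_{\max}(\tilde\sigma_n\|\sigma^{\otimes n})$, which is your chained operator inequality stated in different words. Your $d_{AE}$ caveat is handled the same way in the paper, which also treats $d_{AE}$ in \cref{eq_size_T} as an $n$-independent constant.
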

\begin{proof}
One direction is simple because $\sigma^{\otimes n} \in \St^n(\cH, \sigma^{\otimes n-r})$ and hence
\begin{align}
\lim_{n \to \infty} \frac{1}{n} D\big(\rho^{\otimes n}  \| \St^n(\cH, \sigma^{\otimes n -r}) \big)
 \leq D(\rho \| \sigma) \, .
\end{align}
Thus, it remains to prove the reverse direction.
The operator monotonicity of the logarithm~\cite[Table~2.2]{Sutter_book} implies
\begin{align}
\frac{1}{n} D(\rho^{\otimes n} \| \sigma_n ) 
\overset{\textnormal{\Cshref{eq_pinching_statement}}}&{\geq} \frac{1}{n} D(\rho^{\otimes n} \|  \tilde \sigma_n ) - \frac{1}{n} \log |\cT| \\
 \overset{\textnormal{\cite[Thm.~3.1]{Christandl2017}}}&{\geq}\frac{1}{n} D(\rho^{\otimes n} \|  \sigma^{\otimes n} ) - \frac{1}{n} D_{\max}(\tilde \sigma_n  \|  \sigma^{\otimes n} ) - \frac{1}{n} \log |\cT| \\
 \overset{\textnormal{\Cshref{eq_size_T}}}&{=} D(\rho \|  \sigma) -   \frac{1}{n} D_{\max}(\tilde \sigma_n  \|  \sigma^{\otimes n} )  -o(1) \, ,   \label{eq_ASL_step1}
\end{align}
where the penultimate step uses a triangle inequality for relative entropies~\cite[Theorem~3.1]{Christandl2017}. Putting everything together yields
\begin{align}
\lim_{n \to \infty} \frac{1}{n} D\big(\rho^{\otimes n}  \| \St^n(\cH, \sigma^{\otimes n -r}) \big)
= \lim_{n \to \infty} \min_{\sigma_n \in \St^n(\cH, \sigma^{\otimes n -r})} \frac{1}{n} D(\rho^{\otimes n} \| \sigma_n) 
\overset{\textnormal{\Cshref{eq_ASL_step1,eq_lam_min_stop}}}&{\geq} D(\rho \|  \sigma) \, ,
\end{align}
which completes the proof.
\end{proof}

\begin{theorem}[Robust quantum Stein's lemma] \label{thm_ASL}
Let $\eps \in (0,1)$, $\rho, \sigma \in \St(\cH)$ with $\sigma>0$ and $r_1=o(n)$, $r_2 =o(n)$. Then
\begin{align}
 \lim_{n \to \infty}\frac{1}{n}   D_H^{\eps}\big( \St^n(\cH, \rho^{\otimes n -r_1}) \| \St^n(\cH, \sigma^{\otimes n -r_2})  \big)
 = D(\rho \| \sigma) \, .
\end{align}
\end{theorem}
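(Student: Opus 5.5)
The plan is to prove the two inequalities separately. The converse direction should follow from the ordinary quantum Stein's lemma. For achievability, I would dominate every almost-iid alternative hypothesis by a multiple $\ee^{o(n)}$ of the iid state $\sigma^{\otimes n}$, which reduces the problem to \cref{thm_AGSL} with the singleton family $\cM_n=\{\sigma^{\otimes n}\}$.

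\emph{Converse.} Since $\rho^{\otimes n}\in\St^n(\cH,\rho^{\otimes n-r_1})$ and $\sigma^{\otimes n}\in\St^n(\cH,\sigma^{\otimes n-r_2})$, the double minimum is at most $D_H^{\eps}(\rho^{\otimes n}\|\sigma^{\otimes n})$. By the (strong converse of the) quantum Stein's lemma \cref{eq_SL}, or directly by the strong-converse half of \cref{thm_AGSL} with $\cM_n=\{\sigma^{\otimes n}\}$ and $r=0$, this gives
\begin{align*}
\limsup_{n\to\infty}\frac1n D_H^{\eps}\big(\St^n(\cH,\rho^{\otimes n-r_1})\|\St^n(\cH,\sigma^{\otimes n-r_2})\big)\le D(\rho\|\sigma).
\end{align*}

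\emph{Achievability.} Fix an arbitrary $\sigma_n\in\St^n(\cH,\sigma^{\otimes n-r_2})$.
\begin{enumerate}[(1)]
\item By \cref{eq_pinching_statement}, $\sigma_n\le|\cT|\,\tilde\sigma_n$.
\item By \cref{eq_lam_min_stop}, $\tilde\sigma_n\le \ee^{r_2\log(1/\lambda_{\min}(\sigma))}\sigma^{\otimes n}$.
\item Combining (1) and (2) with \cref{eq_size_T}, $\sigma_n\le \ee^{\kappa_n}\sigma^{\otimes n}$, where $\kappa_n:=n h(r_2/n)+r_2\log d_{AE}+r_2\log\frac{1}{\lambda_{\min}(\sigma)}=o(n)$. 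Here $\sigma>0$ is used, and $d_{AE}\le d^2$ is a constant.
\item Directly from \cref{eq_def_DH}: if $\sigma_n\le c\,\sigma'$ then $\tr[\sigma_n X]\le c\,\tr[\sigma' X]$ for every test $X$, so $D_H^{\eps}(\rho_n\|\sigma_n)\ge D_H^{\eps}(\rho_n\|\sigma^{\otimes n})-\kappa_n$ for every $\rho_n$.
\end{enumerate}
Minimizing over $\rho_n$ and $\sigma_n$ then yields
\begin{align*}
\frac1n D_H^{\eps}\big(\St^n(\cH,\rho^{\otimes n-r_1})\|\St^n(\cH,\sigma^{\otimes n-r_2})\big)\ge \frac1n D_H^{\eps}\big(\St^n(\cH,\rho^{\otimes n-r_1})\|\{\sigma^{\otimes n}\}\big)-o(1).
\end{align*}

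Next I would invoke \cref{thm_AGSL} with $\cM_n=\{\sigma^{\otimes n}\}$. This family satisfies Axioms~\eqref{axiom_convexity}--\eqref{axiom_tensor_product} by \cref{ex_axioms}, since $\sigma>0$. With $r=r_1=o(n)$, the right-hand side converges to $D^\reg(\rho\|\{\sigma^{\otimes n}\})=\lim_n\frac1n D(\rho^{\otimes n}\|\sigma^{\otimes n})=D(\rho\|\sigma)$ by additivity of the relative entropy. This gives the matching lower bound, so the limit exists and equals $D(\rho\|\sigma)$.

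I expect the main obstacle to be step (3), the domination $\sigma_n\le \ee^{o(n)}\sigma^{\otimes n}$. One has to check that the pinching inequality \cref{eq_pinching_statement} from \cite{MSR26} applies on the $A^n$ marginal. One also has to check that each $\tilde\sigma_n^{(t)}$, after tracing out $E^n$, is exactly $\sigma^{\otimes n-r_2}$ tensored with an $r_2$-site defect state up to a permutation. Together these justify \cref{eq_lam_min_stop} and the bound $D_{\max}(\omega_r^{(t)}\|\sigma^{\otimes r})\le r\log\frac1{\lambda_{\min}(\sigma)}$. Once step (3) holds, the rest is monotonicity of $D_H^\eps$ in its second argument plus the already established \cref{thm_AGSL}. \Cref{lem_single_letter_two_almost_iid} offers an alternative route through relative entropies, but it is not needed on this path.
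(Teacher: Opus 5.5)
Your proposal is correct and follows the paper's proof. The converse uses that $\rho^{\otimes n},\sigma^{\otimes n}$ are admissible together with Stein's lemma, and achievability dominates $\sigma_n\le|\cT|\,\tilde\sigma_n\le \ee^{o(n)}\sigma^{\otimes n}$ via \cref{eq_pinching_statement,eq_size_T,eq_lam_min_stop} before invoking the robust generalized Stein's lemma with $\cM_n=\{\sigma^{\otimes n}\}$. The only difference is that you apply this operator domination directly to $D_H^{\eps}$ and use \cref{thm_AGSL}, whereas the paper detours through $D_{\max}^{\sqrt{1-\eps}}$ via \cref{eq_buscemi}, the smooth-max triangle inequality \cref{eq_triangle_smooth_dmax}, and \cref{cor_RGQSL_Dmax}; your route is slightly more direct.
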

\begin{proof}
One direction is simple because $\rho^{\otimes n} \in \St^n(\cH, \rho^{\otimes n-r_1})$, $\sigma^{\otimes n} \in \St^n(\cH, \sigma^{\otimes n-r_2})$, and hence 
\begin{align}
 \lim_{n \to \infty}\frac{1}{n}   D_H^{\eps}\big( \St^n(\cH, \rho^{\otimes n -r_1}) \| \St^n(\cH, \sigma^{\otimes n -r_2})  \big)
\leq  \lim_{n \to \infty}\frac{1}{n}   D_H^{\eps}\big( \rho^{\otimes n} \| \sigma^{\otimes n} \big) 
 \overset{\textnormal{\Cshref{eq_SL}}}{=} D(\rho \| \sigma) \, .
\end{align}
To see the other direction, we first recall that the smooth max-relative entropy satisfies a triangle-like inequality of the following form: for $\xi,\eta,\tau \in \St(\cH)$
\begin{align} \label{eq_triangle_smooth_dmax}
D_{\max}^{\eps}(\xi \| \eta) 
\leq D_{\max}(\xi' \| \eta)
\overset{\textnormal{\cite[Thm.~3.1]{Christandl2017}}}{\leq} D_{\max}(\xi' \| \tau) + D_{\max}(\tau \| \eta)
= D_{\max}^{\eps}(\xi \| \tau) + D_{\max}(\tau \| \eta) \, ,
\end{align}
where we used $\xi' \in \argmin_{\bar \xi \in \cB_{\eps}(\xi)} D_{\max}(\bar \xi \| \tau)$. Hence, using the notation from the proof of~\cref{lem_single_letter_two_almost_iid} and
\begin{align} \label{eq_optmizers_before_triangle}
(\rho_n, \sigma_n) \in \arg \min_{(\rho_n',\sigma_n')\in(\St^n(\cH, \rho^{\otimes n -r_1}), \St^n(\cH, \sigma^{\otimes n -r_2}))}  D_H^{\eps}(\rho_n' \| \sigma_n') 
\end{align}
yields
\begin{align}
\frac{1}{n}   D_H^{\eps}\big( \St^n(\cH, \rho^{\otimes n -r_1}) \| \St^n(\cH, \sigma^{\otimes n -r_2})  \big)
\overset{\textnormal{\Cshref{eq_optmizers_before_triangle}}}&{=}\frac{1}{n}   D_H^{\eps}(\rho_n \| \sigma_n) \\
\overset{\textnormal{\Cshref{eq_buscemi}}}&{\geq} \frac{1}{n} D_{\max}^{\sqrt{1-\eps}}(\rho_n \| \sigma_n) - \frac{1}{n} \log \frac{1}{\eps} \\
\overset{\textnormal{\Cshref{eq_pinching_statement,eq_size_T}}}&{\geq} \frac{1}{n} D_{\max}^{\sqrt{1-\eps}}(\rho_n \| \tilde \sigma_n) - o(1) \\
\overset{\textnormal{\Cshref{eq_triangle_smooth_dmax}}}&{\geq}\frac{1}{n}D_{\max}^{\sqrt{1-\eps}}(\rho_n \| \sigma^{\otimes n}) -\frac{1}{n} D_{\max}(\tilde \sigma_n \| \sigma^{\otimes n}) - o(1) \\
\overset{\textnormal{\Cshref{eq_lam_min_stop}}}&{\geq}\frac{1}{n}D_{\max}^{\sqrt{1-\eps}}(\rho_n \| \sigma^{\otimes n})  - o(1) \\
\overset{\textnormal{\Cshref{cor_RGQSL_Dmax}}}&{=} D(\rho \| \sigma)  - o(1) \, ,
\end{align}
where in the final step, we applied~\cref{cor_RGQSL_Dmax} to the set $\cM_n = \{ \sigma^{\otimes n} \}$ (see~\cref{ex_axioms}).
Taking the limit $n \to \infty$ completes the proof.
\end{proof}

We can use the known relations between the hypothesis testing relative entropy and the smooth max-relative entropy to rephrase~\cref{thm_ASL} in terms of the smooth max-relative entropy.
\begin{corollary} \label{cor_AEP_almost_iid}
Let $\eps \in (0,1)$, $\rho, \sigma \in \St(\cH)$ with $\sigma>0$ and $r_1=o(n)$, $r_2 =o(n)$. Then
\begin{align}
 \lim_{n \to \infty}\frac{1}{n}   D^{\eps}_{\max}\big( \St^n(\cH, \rho^{\otimes n -r_1}) \| \St^n(\cH, \sigma^{\otimes n -r_2})  \big)
 = D(\rho \| \sigma) \, .
\end{align}
\end{corollary}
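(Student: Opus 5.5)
The plan is to sandwich $D^{\eps}_{\max}$ between two hypothesis testing relative entropies using~\cref{eq_buscemi}, exactly as in the proof of~\cref{cor_RGQSL_Dmax}, and then to apply~\cref{thm_ASL} to both sides. The key observation is that~\cref{eq_buscemi} holds pointwise for every pair $(\rho_n,\sigma_n)$. Its additive error terms, $\frac{1}{n}\log\frac{1}{1-\eps^2}$ and $\frac{1}{n}\log\frac{4}{\nu^2}$, do not depend on the states. We may therefore minimize both sides over $\rho_n \in \St^n(\cH,\rho^{\otimes n-r_1})$ and $\sigma_n \in \St^n(\cH,\sigma^{\otimes n-r_2})$ while preserving the inequalities.

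\emph{Upper bound.} For any admissible pair, the first inequality in~\cref{eq_buscemi} (with $\eps$ replaced by $1-\eps^2$) gives
\begin{align*}
\frac{1}{n}D^{\eps}_{\max}(\rho_n\|\sigma_n) \leq \frac{1}{n}D_H^{1-\eps^2}(\rho_n\|\sigma_n) + \frac{1}{n}\log\frac{1}{\eps^2} .
\end{align*}
Taking the minimum over both almost-iid sets and then $n\to\infty$, \cref{thm_ASL} with error parameter $1-\eps^2\in(0,1)$ yields $\limsup \le D(\rho\|\sigma)$.

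\emph{Lower bound.} Fix $\nu\in(0,1-\eps^2)$. The second inequality in~\cref{eq_buscemi} gives
\begin{align*}
\frac{1}{n}D^{\eps}_{\max}(\rho_n\|\sigma_n) \geq \frac{1}{n}D_H^{1-\eps^2-\nu}(\rho_n\|\sigma_n) - \frac{1}{n}\log\frac{1}{1-\eps^2} - \frac{1}{n}\log\frac{4}{\nu^2} .
\end{align*}
Minimizing over both sets and applying~\cref{thm_ASL} with error parameter $1-\eps^2-\nu\in(0,1)$ gives $\liminf \ge D(\rho\|\sigma)$. The two bounds together show that the limit exists and equals $D(\rho\|\sigma)$.

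There is no real obstacle here; the only care needed is bookkeeping of the smoothing conventions. In~\cref{eq_buscemi} the smoothing parameter of $D_{\max}$ is $\sqrt{\eps'}$, where $1-\eps'$ is the type-I parameter of $D_H$. So one substitutes $\eps'=\eps^2$ and must check that the resulting $D_H$ parameters, $1-\eps^2$ and $1-\eps^2-\nu$, lie strictly inside $(0,1)$, which they do for $\eps\in(0,1)$. One should also note that the minima over the sets are attained: the sets are compact, and~\cref{thm_ASL} is stated for the minimized quantity in any case. The robustness itself is entirely inherited from~\cref{thm_ASL}, through its strong converse valid for every fixed error parameter.
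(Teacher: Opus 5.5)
Your proof is correct and matches the paper's: you sandwich $D^{\eps}_{\max}$ pointwise between $D_H^{1-\eps^2}$ and $D_H^{1-\eps^2-\nu}$ via~\cref{eq_buscemi}, minimize over both almost-iid sets, and apply~\cref{thm_ASL} at these two error parameters. One harmless slip: substituting $\eps^2$ for $\eps$ in the first inequality of~\cref{eq_buscemi} gives the additive term $\frac{1}{n}\log\frac{1}{1-\eps^2}$, not $\frac{1}{n}\log\frac{1}{\eps^2}$ (you mixed the substitutions $\eps\to\eps^2$ and $\eps\to 1-\eps^2$), but any $n$-independent constant vanishes after dividing by $n$, so the argument is unaffected.
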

\begin{proof}
For any $\rho_n \in  \St^n(\cH, \rho^{\otimes n-r_1})$ and $\sigma_n \in  \St^n(\cH, \sigma^{\otimes n-r_2})$ we have
\begin{align} \label{eq_cor_step1_1}
\frac{1}{n} D_{\max}^{\eps}(\rho_n \| \sigma_n)
\overset{\textnormal{\Cshref{eq_buscemi}}}{\leq} \frac{1}{n} D_{H}^{1-\eps^2}(\rho_n \| \sigma_n) + \frac{1}{n} \log \frac{1}{1-\eps^2} 
\end{align}
and hence
\begin{align}
 \lim_{n \to \infty} \! \frac{1}{n}  D^{\eps}_{\max}\big( \St^n(\cH, \rho^{\otimes n -r_1}) \| \St^n(\cH, \sigma^{\otimes n -r_2})  \big)\!\!\!\!
 \overset{\textnormal{\Cshref{eq_cor_step1_1}}}&{\leq} \!\!\!\!\! \lim_{n \to \infty}\! \frac{1}{n} D_{H}^{1-\eps^2}\!\!\big( \St^n(\cH, \rho^{\otimes n -r_1}) \| \St^n(\cH, \sigma^{\otimes n -r_2})  \big) \\
   \overset{\textnormal{\Cshref{thm_ASL}}}&{=}  D(\rho \| \sigma)\, .
\end{align}

To see the other direction, note that for any $\rho_n \in  \St^n(\cH, \rho^{\otimes n-r_1})$, $\sigma_n \in \St^n(\cH, \sigma^{\otimes n-r_2})$ and $\nu \in (0,1-\eps^2)$ we have
\begin{align} \label{eq_cor_step2_2}
\frac{1}{n} D_{\max}^{\eps}(\rho_n \| \sigma_n)
\overset{\textnormal{\Cshref{eq_buscemi}}}{\geq} \frac{1}{n} D_{H}^{1-\eps^2-\nu}(\rho_n \| \sigma_n) + \frac{1}{n} \log \frac{1}{1-\eps^2} - \frac{1}{n} \log \frac{4}{\nu^2} 
\end{align}
and therefore
\begin{align}
&\hspace{-30mm}\lim_{n \to \infty} \frac{1}{n} D_{\max}^{\eps}\big( \St^n(\cH, \rho^{\otimes n -r_1}) \| \St^n(\cH, \sigma^{\otimes n -r_2}) \nonumber \\
\overset{\textnormal{\Cshref{eq_cor_step2_2}}}&{\geq}\lim_{n \to \infty} \frac{1}{n} D_{H}^{1-\eps^2-\nu}\big( \St^n(\cH, \rho^{\otimes n -r_1}) \| \St^n(\cH, \sigma^{\otimes n -r_2}) \\
   \overset{\textnormal{\Cshref{thm_ASL}}}&{=}  D(\rho \| \sigma)\, .
\end{align}
\end{proof}
\Cref{thm_AGSL,cor_RGQSL_Dmax,thm_ASL,cor_AEP_almost_iid} can be viewed as a generalized quantum AEP for the sets $\cM_n$ that satisfy the Axioms~\eqref{axiom_convexity}-\eqref{axiom_tensor_product} and the set of almost-iid states. Our results are related to the setting in~\cite{FFF25}; however, they do not follow from their techniques, as the two sets do not both satisfy the polar assumption.\footnote{The polar assumption is explained in~\cite{FFF25} and is known to not be satisfied for the set of separable states, and whether the set of almost-iid states satisfies this assumption is unknown.}

\begin{remark}[Extension to generalized almost-iid states]
The statements of \Cref{thm_AGSL,thm_ASL} remain valid if you consider \emph{generalized} almost-iid states (as defined below~\cref{def_almost_product_state_mixed}), which are not necessarily permutation invariant.\footnotemark
\end{remark}
\footnotetext{By going through the proofs of~\Cref{thm_AGSL,thm_ASL} one can see that permutation invariance of the almost-iid state is not used.}

\section{Asymptotic superadditivity for states with equal marginals} \label{sec_superadditivity}
The continuity statement of the relative entropy of resource with respect to the Wasserstein distance given in~\cref{prop_continuity_ER_Wasserstein} is useful beyond proving the robust generalized quantum Stein's lemma. 
Superadditivity is a natural property of an entanglement measure $E$ stating that for some density matrix $\rho_{A_1 A_2 B_1 B_2} \in \St(\cH_{A_1 A_2 B_1 B_2})$ we have
\begin{align} \label{eq_superadd_fully_general}
E(A_1 A_2 : B_1 B_2)_{\rho} \geq E(A_1 : B_1 )_{\rho} + E(A_2 : B_2 )_{\rho} \, .
\end{align}
The squashed entanglement is superadditive~\cite[Proposition~4]{CW04}, but the entanglement of formation and the relative entropy of entanglement are not~\cite{hastings09,VW01}.

In the following, we show that for states with equal marginals\footnote{Such states are not necessarily permutation invariant.}, the relative entropy of resource (which contains the relative entropy of entanglement) is asymptotically superadditive. 
Before stating the result, we note that the relative entropy of entanglmement is clearly subadditive for such states. In technical terms, let $n \in \N$, $\tau_A \in \St(\cH_A)$, $\rho_{A^n} \in \St(\cH_A^{\otimes n})$ be such that $\rho_{A_i}=\tau_A$ for all $i \in [n]$ and $\{\cM_n\}_{n \in \N}$ be a family of sets that satisfy the Axiom~\eqref{axiom_tensor_product}, then
\begin{align}
\frac{1}{n}D\big(\rho_{A^n}\| \cM_n \big)
= \frac{1}{n} \min_{\sigma_n \in \cM_n} D\big(\rho_{A^n}\| \sigma_n \big)
\overset{\textnormal{Axiom}~\eqref{axiom_tensor_product}}&{\leq} \frac{1}{n}  D\big(\rho_{A^n}\|(\hat \sigma)^{\otimes n} \big)
=D(\tau \| \hat \sigma) 
=D(\tau \| \cM) \, ,
\end{align}
for $\hat \sigma \in \argmin_{\sigma \in \cM} D(\tau \| \sigma)$.
The more interesting statement is that for this scenario a similar bound into the other direction holds when considering the limit $n \to \infty$.
\begin{proposition}[Asymptotic superadditivity] \label{prop_superadditivity}
Let $\tau_A \in \St(\cH_A)$, $\rho_{A^n} \in \St(\cH_A^{\otimes n})$ be such that $\rho_{A_i}=\tau_A$ for all $i \in [n]$, $\{\cM_n\}_{n \in \N}$ be a family of sets that satisfy the Axioms~\eqref{axiom_convexity}-\eqref{axiom_tensor_product}. Then
\begin{align} \label{eq_superadd_result}
\frac{1}{n}  D\big(\rho_{A^n}\| \cM_n \big)
\geq  D^\reg(\tau \| \cM)  - o(1) \, .
\end{align}
\end{proposition}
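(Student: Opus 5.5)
We plan to reduce to a permutation-invariant state, approximate it by a mixture of (almost-)iid states via a de Finetti argument, and then use convexity of $D^\reg(\cdot\|\cM)$.

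\textbf{Step 1: symmetrize.} By Axioms~\eqref{axiom_convexity} and~\eqref{axiom_PI}, $D(\pi\rho_{A^n}\pi^\dagger\|\cM_n)=D(\rho_{A^n}\|\cM_n)$ for every $\pi\in\cS_n$. The relative entropy of resource is convex, because $D(\cdot\|\cdot)$ is jointly convex and $\cM_n$ is convex. Hence
\begin{align}
D(\rho_{A^n}\|\cM_n)\geq D\big(\mathrm{SYM}_n(\rho_{A^n})\|\cM_n\big) \, .
\end{align}
The state $\bar\rho_n:=\mathrm{SYM}_n(\rho_{A^n})$ is permutation invariant and still has all single-site marginals equal to $\tau_A$. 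It therefore suffices to prove~\eqref{eq_superadd_result} for permutation-invariant states. Tracing out $k=o(n)$ subsystems costs only $o(n)$: by data processing under the partial trace together with Axiom~\eqref{axiom_partial_trace}, $D(\bar\rho_n\|\cM_n)\geq D(\bar\rho_{n-k}\|\cM_{n-k})$.

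\textbf{Step 2: de Finetti approximation.} We apply the exponential de Finetti theorem (\cref{lem_exp_deFinetti_brandao}, or its standard mixed-state form~\cite{renner07}) to $\bar\rho_n$. This gives, after tracing out $k=o(n)$ subsystems, that $\bar\rho_{n-k}$ is close in trace distance to a mixture $\int \rho^{(\sigma)}_{n-k}\,\mu(\di\sigma)$. Here each $\rho^{(\sigma)}_{n-k}\in\St^{n-k}(\cH,\sigma^{\otimes n-k-r'})$ with $r'=o(n)$, and the error is $\ee^{-\Omega(kr'/n)}\poly(n)$. We then discretize $\mu$ on a $\delta$-net of $\St(\cH)$ of cardinality $N=\poly(1/\delta)$, with $\delta=1/\poly(n)$. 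The discretization error in trace distance is $O(n\delta)=o(1)$.

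The marginal constraint must survive this approximation. Since $\bar\rho_{n-k}$ has single-site marginal $\tau$ and trace distance contracts under partial trace, the barycenter $\sum_j p_j \sigma_j$ of the finite mixture is $o(1)$-close to $\tau$. Here we use that the single-site marginal of an almost-iid state along $\sigma$ is close to $\sigma$, which follows from \cref{lem_amost_iid_close_to_iid_W1} after averaging over sites. The asymptotic continuity of the relative entropy of resource in trace distance~\cite[Lemma~7]{win16} then lets us replace $\bar\rho_{n-k}$ by the finite mixture at a cost of $o(n)$.

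\textbf{Step 3: mixture bound and convexity.} We use the standard bound
\begin{align}
D\Big(\sum_j p_j\omega_j\Big\|\cM\Big)\geq \sum_j p_j D(\omega_j\|\cM)-H(p) \, ,
\end{align}
which follows from $H(\sum_j p_j\omega_j)\leq\sum_j p_jH(\omega_j)+H(p)$ together with concavity of $\sigma\mapsto\tr[\omega\log\sigma]$. Here $H(p)\leq\log N=O(\log n)=o(n)$. For each component, \cref{cor_asymptotic_continuity_ER} gives $\frac{1}{n}D(\rho^{(\sigma_j)}_{n-k}\|\cM_{n-k})\geq \frac1n D(\sigma_j^{\otimes n-k}\|\cM_{n-k})-o(1)$. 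Since $D^\reg=\inf_n\frac1n D(\cdot^{\otimes n}\|\cM_n)$, this is at least $D^\reg(\sigma_j\|\cM)-o(1)$, uniformly in $j$, because the error terms of \cref{cor_asymptotic_continuity_ER} depend only on $r'/n$, $d$ and $\lambda_{\min}(\omega)$.

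Finally, $D^\reg(\cdot\|\cM)$ is convex, as a pointwise limit of the convex functions $\frac1n D(\cdot^{\otimes n}\|\cM_n)$ with $\cdot^{\otimes n}$ inserted, using the mixture bound once more. It is also continuous in trace distance, being a uniform limit of uniformly asymptotically continuous functions. Hence
\begin{align}
\sum_j p_j D^\reg(\sigma_j\|\cM)\geq D^\reg\Big(\sum_j p_j\sigma_j\Big\|\cM\Big)=D^\reg(\tau\|\cM)-o(1) \, .
\end{align}
Collecting the $o(n)$ losses from Steps~1--3 yields~\eqref{eq_superadd_result}.

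\textbf{Main obstacle.} The delicate step is Step~2: obtaining a de Finetti approximation that is uniformly $o(1)$ in trace distance for an arbitrary permutation-invariant state, with almost-iid components and a controlled finite mixture. The choice of $k$ and $r'$ must make the error $\poly(n)\,\ee^{-kr'/n}$ vanish while keeping $k,r'=o(n)$. We would first try $k=r'=n^{3/4}$. The de Finetti theorem is usually stated for purifications, so if it produces a bad dimension dependence, we would fall back on the post-selection technique to control the mixture.
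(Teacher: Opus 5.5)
Your plan follows the paper's proof almost step for step. You symmetrize and trace out $k=o(n)$ systems. You approximate by polynomially many almost-iid components via exponential de Finetti, and pay $o(n)$ for the trace-distance error using continuity of $D(\cdot\|\cM_{n-k})$. You then split the mixture at cost $\log J$, apply \cref{cor_asymptotic_continuity_ER} uniformly over the components, and finish with convexity and continuity of $D^{\reg}(\cdot\|\cM)$ plus closeness of the barycenter $\sum_j p_j\sigma_j$ to $\tau$.

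Your local variations are sound:
\begin{itemize}
\item Symmetrizing via convexity and permutation invariance is equivalent to the paper's data-processing step.
\item Controlling the barycenter via \cref{lem_amost_iid_close_to_iid_W1} works once you state $\norm{\omega-\tau}_{W_1}\geq\frac12\sum_i\norm{\omega_i-\tau_i}_1$. This holds because if $\omega-\tau=\sum_i X^{(i)}$ with $\tr_i[X^{(i)}]=0$, then $\omega_j-\tau_j=\tr_{\bar j}[X^{(j)}]$. The paper instead quotes \cite[Prop.~2.6]{MSR26}.
\item Replacing the Carath\'eodory/compactness argument of \cref{lem_caratheodory} by a net is also fine, with two caveats. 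First, you need the mixed-state de Finetti theorem of~\cite{renner07}; \cref{lem_exp_deFinetti_brandao} gives no mixture decomposition. Second, the net should sit on the purifying vectors $\ket{\theta}\in\cH\otimes\cH$. Move each component by $U^{\otimes(n-k)}$ with $U\ket{\theta}=\ket{\theta_j}$ and $\norm{U-\id}_\infty=O(\delta)$. The cell averages then remain almost-iid along $\tr_E[\proj{\theta_j}]$, because permutation-invariant states supported on a fixed subspace form a convex set.
\end{itemize}

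What does not hold as written is your justification of the two properties of $D^{\reg}$.

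\emph{Convexity.} The map $\rho\mapsto\frac1n D(\rho^{\otimes n}\|\cM_n)$ need not be convex, since $\rho\mapsto\rho^{\otimes n}$ is not affine. Your mixture bound also points the wrong way: it lower-bounds $D$ of a mixture, whereas convexity is an upper bound. The correct argument expands $(\sum_j p_j\sigma_j)^{\otimes m}$ into a mixture of products $\sigma_{j_1}\otimes\cdots\otimes\sigma_{j_m}$ and applies convexity of $D(\cdot\|\cM_m)$. Each term is then bounded by $\sum_j D(\sigma_j^{\otimes m_j}\|\cM_{m_j})$ via Axioms~\eqref{axiom_PI} and~\eqref{axiom_tensor_product}, and the law of large numbers for the counts $m_j$ concludes. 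This is \cite[Proposition~13]{MHR02}, which the paper cites.

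\emph{Continuity.} Winter's bound applied to $\rho^{\otimes n},\rho'^{\otimes n}$ has no $n$-uniform modulus, since $\norm{\rho^{\otimes n}-\rho'^{\otimes n}}_1\to 2$. You also have no uniform convergence in $\rho$; as an infimum of continuous functions, $D^{\reg}$ is a priori only upper semicontinuous. A correct proof uses your own tools:
\begin{itemize}
\item If $\frac12\norm{\rho-\rho'}_1=\eps$, write $\gamma=\frac{\rho+\eps\Delta_-}{1+\eps}=\frac{\rho'+\eps\Delta_+}{1+\eps}$ with states $\Delta_\pm$.
\item Convexity and \cref{lem_RER_bounded} give $D^{\reg}(\gamma\|\cM)\leq\frac{1}{1+\eps}D^{\reg}(\rho\|\cM)+\frac{\eps}{1+\eps}\log\frac{1}{\lambda_{\min}(\omega)}$.
\item Expand $\gamma^{\otimes n}$ over subsets and apply your mixture bound, at entropy cost $n\,h(\frac{\eps}{1+\eps})$. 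Then trace out the $\Delta_+$ factors via Axioms~\eqref{axiom_PI} and~\eqref{axiom_partial_trace}, and use $D^{\reg}=\inf_n$. This gives $D^{\reg}(\gamma\|\cM)\geq\frac{1}{1+\eps}D^{\reg}(\rho'\|\cM)-h(\frac{\eps}{1+\eps})$.
\item Combining the two bounds and symmetrizing yields $|D^{\reg}(\rho\|\cM)-D^{\reg}(\rho'\|\cM)|\leq\eps\log\frac{1}{\lambda_{\min}(\omega)}+(1+\eps)h(\frac{\eps}{1+\eps})$, cf.~\cite[Lemma~C.3]{brandao_Stein_10}.
\end{itemize}
With these two facts supplied, your argument is complete.
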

\begin{proof}
For $\hat \sigma_{n} \in \argmin_{\sigma_n \in \cM_n} D(\rho_{A^n} \| \sigma_n)$ we have
\begin{align}
D\big(\rho_{A^n}\| \cM_n \big)
= D\big(\rho_{A^n}\| \hat \sigma_n \big)
\!\overset{\textnormal{DPI}}{\geq}\! D\big(\mathrm{SYM}_n(\rho_{A^n})\|\mathrm{SYM}_n( \hat \sigma_n) \big)
\!\overset{\textnormal{Axioms~\eqref{axiom_convexity},\eqref{axiom_PI}}}{\geq}\!D\big(\mathrm{SYM}_n(\rho_{A^n})\| \cM_n \big)\, . \label{eq_superadd_0}
\end{align}
A key ingredient in this proof is the exponential de Finetti theorem~\cite{renner_phd,renner07} which states that there exist families \smash{$\{\rho^{(j,r)}_{A^{n-k}}\}_{j \in [J]}$} and $\{\tau_j \}_{j \in [J]}$ with \smash{$\rho^{(j,r)}_{A^{n-k}} \in \St^{n-k}(\cH_A,\tau_j^{\otimes n-k-r})$}, $\tau_j \in \St(\cH_A)$, and a probability distribution $\{ p_j \}_{j \in [J]}$ such that
\begin{align} \label{eq_exp_DF_sum}
\norm{\tr_k[\mathrm{SYM}_n(\rho_{A^n})] - \sum_{j=1}^J p_j \rho^{(j,r)}_{A^{n-k}} }_1 \leq 3 k^d \ee^{- \frac{k(r+1)}{n}}=:\eps_{n,k,r} \, ,
\end{align}
where $J= \poly(n)$.
We replaced the integral in the standard formulation of the exponential de Finetti theorem~\cite{renner_phd,renner07} with a finite sum, which can be done thanks to Carath\'eodory's theorem. This is proven in~\cref{lem_caratheodory}.
Choosing the parameters $k=r=n^{\frac{2}{3}}$ such that $\eps_{n,k,r}=o(1)$, and recalling that the relative entropy of resource is continuous~\cite[Lemma~13]{ludo25} yields
\begin{align}
\frac{1}{n}D\big(\mathrm{SYM}_n(\rho_{A^n})\| \cM_n \big)
\overset{\textnormal{DPI } \& \textnormal{ Axiom}\,\eqref{axiom_partial_trace}}&{\geq} \frac{1}{n}D\big(\tr_{k}[\mathrm{SYM}_n(\rho_{A^n})]\| \cM_{n-k} \big) \\
\overset{\textnormal{\Cshref{eq_exp_DF_sum}}}&{\geq} \frac{1}{n}D\Big( \sum_{j=1}^J p_j \rho^{(j)}_{A^{n-k}}\| \cM_{n-k} \Big) - o(1)  \\
\overset{\textnormal{\Cshref{lem_RER_bounded}}}&{=} \frac{1}{n-k}D\Big( \sum_{j=1}^J p_j \rho^{(j)}_{A^{n-k}}\| \cM_{n-k} \Big) - o(1) \\
&=\frac{1}{n-k}D\Big( \sum_{j=1}^J p_j \rho^{(j)}_{A^{n-k}}\| \hat \sigma_{n-k} \Big) - o(1) \, , \label{eq_superadd_1}
\end{align}
for some $ \hat \sigma_{n-k} \in \cM_{n-k}$.

We next observe that the ``almost-convexity" property of the entropy~\cite[Equation~1]{win16} implies
\begin{align}
\frac{1}{n-k}D\Big( \sum_{j=1}^J p_j \rho^{(j)}_{A^{n-k}}\| \hat \sigma_{n-k} \Big)
&= - \frac{1}{n-k} H\Big( \sum_{j=1}^J p_j \rho^{(j)}_{A^{n-k}} \Big)  - \frac{1}{n-k}\sum_{j=1}^J p_j \tr[\rho^{(j)}_{A^{n-k}} \log \hat \sigma_{n-k}  ] \\
\overset{\textnormal{almost-convexity}}&{\geq} \sum_{j=1}^J p_j \frac{1}{n-k}D\Big(  \rho^{(j)}_{A^{n-k}}\| \hat \sigma_{n-k} \Big) - \frac{\log J}{n-k} \\
&=\sum_{j=1}^J p_j \frac{1}{n-k}D\Big(  \rho^{(j)}_{A^{n-k}}\| \hat \sigma_{n-k} \Big) - o(1) \\
&\geq \sum_{j=1}^J p_j \frac{1}{n-k}D\Big(  \rho^{(j)}_{A^{n-k}}\| \cM_{n-k} \Big) - o(1) \, , \label{eq_superadd_2}
\end{align}
where the penultimate step uses $J=\poly(n)$.

We next observe that
\begin{align}
\sum_{j=1}^J p_j \frac{1}{n-k}D\Big(  \rho^{(j)}_{A^{n-k}}\| \cM_{n-k} \Big)
\overset{\textnormal{\Cshref{cor_asymptotic_continuity_ER}}}&{=} \sum_{j=1}^J p_j \frac{1}{n-k}D\Big(  \tau_j^{\otimes n-k}\| \cM_{n-k}  \Big) - o(1) \\
\overset{\textnormal{\Cshref{eq_def_regul_2}}}&{\geq} \sum_{j=1}^J p_j D^{\reg}(  \tau_j\| \cM ) - o(1) \\
\overset{\textnormal{convexity}}&{\geq}  D^{\reg}\Big( \sum_{j=1}^J p_j \tau_j\| \cM \Big) - o(1) \, , \label{eq_superadd_3}
\end{align}
where the convexity of $\rho \mapsto D^{\reg}(\rho \| \cM)$ is proven in~\cite[Proposition~13]{MHR02}.

Putting everything together yields
\begin{align}
\frac{1}{n}D\big(\rho_{A^n}\| \cM_n \big)
\overset{\textnormal{\Cshref{eq_superadd_0,eq_superadd_1,eq_superadd_2,eq_superadd_3}}}{\geq}  D^{\reg}\Big( \sum_{j=1}^J p_j \tau_j\| \cM \Big) - o(1)\, . \label{eq_superadd_4}
\end{align}
The proof can be completed from~\cref{eq_superadd_4} using the continuity of the regularized relative entropy of resource~\cite[Lemma C.3]{brandao_Stein_10}, together with the observation that \smash{$\sum_{j=1}^J p_j \tau_j$} is close to $\tau$ in trace distance. To see this, note that
\begin{align}
\norm{\tau - \sum_{j=1}^J p_j \tau_j}_1
\overset{\textnormal{triangle}}{\leq} \norm{\tau - \sum_{j=1}^J p_j \tr_{n-k-1}[\rho^{(j)}_{A^{n-k}}]}_1 + \norm{\sum_{j=1}^J p_j \tr_{n-k-1}[\rho^{(j)}_{A^{n-k}}] - \sum_{j=1}^J p_j \tau_j}_1 \label{eq_triang_1}
\end{align}
and since the symmetrizing map $\mathrm{SYM}$ preserves the marginals
\begin{align}
\norm{\tau - \sum_{j=1}^J p_j \tr_{n-k-1}[\rho^{(j)}_{A^{n-k}}]}_1
&=\norm{\tr_{n-1}[\mathrm{SYM}_n(\rho_{A^n})] - \sum_{j=1}^J p_j \tr_{n-k-1}[\rho^{(j)}_{A^{n-k}}]}_1 \\
\overset{\textnormal{contractivity~\cite[Thm.~8.16]{wolf_notes}}}&{\leq} \norm{\tr_{k}[\mathrm{SYM}_n(\rho_{A^n})] - \sum_{j=1}^J p_j \rho^{(j)}_{A^{n-k}}}_1 \\
\overset{\textnormal{\Cshref{eq_exp_DF_sum}}}&{\leq} \eps_{n,k,r} \, . \label{eq_triang_2}
\end{align}
In addition, we have
\begin{align}
\norm{\sum_{j=1}^J p_j \tr_{n-k-1}[\rho^{(j)}_{A^{n-k}}] - \sum_{j=1}^J p_j \tau_j}_1
\overset{\textnormal{triangle}}&{\leq} \sum_{j=1}^J p_j \norm{ \tr_{n-k-1}[\rho^{(j)}_{A^{n-k}}] - \tau_j}_1 \\
\overset{\textnormal{\cite[Prop.~2.6]{MSR26}}}&{\leq} 4 \sqrt{\frac{r}{n-k}} =:\xi_n\, . \label{eq_triang_3}
\end{align}
For our choice of parameters $k=r=n^{\frac{2}{3}}$ we find $\eps_{n,k,r}=o(1)$ and $\xi_n = o(1)$ and thus 
\begin{align}
\norm{\tau - \sum_{j=1}^J p_j \tau_j}_1
\overset{\textnormal{\Cshref{eq_triang_1,eq_triang_2,eq_triang_3}}}{\leq}  \eps_{n,k,r} + \xi_n = o(1)\, . \label{eq_triang_final}
\end{align}
Hence, the continuity of the regularized relative entropy of resource~\cite[Lemma C.3]{brandao_Stein_10} implies
\begin{align}
D^{\reg}\Big( \sum_{j=1}^J p_j \tau_j\| \cM \Big) 
\overset{\textnormal{\Cshref{eq_triang_final}}}{\geq} D^{\reg}(\tau \| \cM) - o(1) \, . \label{eq_superadd_5}
\end{align}
Combining~\cref{eq_superadd_4,eq_superadd_5} completes the proof.
\end{proof}

For $\cM_{n} = \mathrm{SEP}(A^n:B^n)$,~\cref{prop_superadditivity} shows that for any $\rho_{A^n B^n} \in \St(\cH_{AB}^{\otimes n})$ such that $\rho_{A_i B_i} = \rho_{A_j B_j}$ for all $i,j \in [n]$ we have
\begin{align} \label{eq_our_subadd}
  \liminf_{n \to \infty} \frac{1}{n}  D\big(\rho_{A^n B^n}\| \mathrm{SEP} \big)
 \geq D^\reg(\rho_{A_i B_i} \| \mathrm{SEP}) \, .
\end{align}
This can be contrasted with the superadditivity result from~\cite{piani09}, which shows that 
\begin{align} \label{eq_piani}
  \liminf_{n \to \infty} \frac{1}{n}  D\big(\rho_{A^n B^n}\| \mathrm{SEP}\big)
 \geq  D_{\mathbb{M}}(\rho_{A_i B_i} \| \mathrm{SEP}) \, ,
\end{align}
where $D_\mathbb{M}$ denotes the measured relative entropy with measurements restricted to $\mathrm{SEP}$ or $\mathrm{LOCC}$. \Cref{eq_our_subadd} is stronger than~\cref{eq_piani} because $D^\reg(\rho_{A_i B_i} \| \mathrm{SEP}) \geq  D_{\mathbb{M}}(\rho_{A_i B_i} \| \mathrm{SEP})$ as ensured by~\cite{piani09}.
\paragraph{Acknowledgements} We thank Fernando Brand\~ao for his talk~\cite{brandao_talk}.
We also thank Giacomo De Palma, Filippo Girardi, Ludovico Lami, and Ryuji Takagi for discussions on the robustness of entanglement measures with respect to the quantum Wasserstein distance. We further acknowledge discussions with Nilanjana Datta, Christophe Piveteau, Joe Renes, Lukas Schmitt, and Marco Tomamichel on the generalized quantum Stein's lemma. GM and RR acknowledge support from the NCCR SwissMAP, the ETH Zurich Quantum Center, the SNSF project No.~20QU-1 225171, and the CHIST-ERA project MoDIC.




\appendix
\section{Technical statements}
A sequence $\{a_n \}_{n \in \N}$ is subadditive if it satisfies $a_{k+m} \leq a_k + a_m$ for all $k,m \in \N$.
\begin{lemma}[{Fekete's subadditivity lemma~\cite{fekete23}}] \label{lem_fekete}
Let $ \{ a_n \}_{n \in \N}$ be a subadditive sequence. Then the limit $\lim_{n \to \infty} \frac{a_n}{n}$ exists and is equal to $\inf_{n \in \N} \frac{a_n}{n}$.
\end{lemma}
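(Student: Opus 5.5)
Fekete's lemma (\cref{lem_fekete}) is proved by the classical argument. Set $L := \inf_{n \in \N} \frac{a_n}{n} \in [-\infty, \infty)$. Since $\frac{a_n}{n} \geq L$ for every $n$, we have $\liminf_{n\to\infty} \frac{a_n}{n} \geq L$. It therefore suffices to show $\limsup_{n \to \infty} \frac{a_n}{n} \leq \frac{a_k}{k}$ for every fixed $k \in \N$. Taking the infimum over $k$ then gives $\limsup \leq L \leq \liminf$, so the limit exists and equals $L$; the case $L = -\infty$ is covered automatically.

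To bound the $\limsup$, fix $k$ and write an arbitrary $n > k$ as $n = qk + s$ with $q \in \N$ and $s \in \{1, \ldots, k\}$. We take $s \in \{1,\dots,k\}$ rather than $\{0,\dots,k-1\}$ because the sequence is only indexed by $\N$ and $a_0$ need not be defined. Iterating subadditivity $q$ times gives
\begin{align}
a_n \leq q\, a_k + a_s \leq q\, a_k + C_k, \qquad C_k := \max_{1 \leq s \leq k} a_s .
\end{align}
Dividing by $n$ yields $\frac{a_n}{n} \leq \frac{qk}{n} \cdot \frac{a_k}{k} + \frac{C_k}{n}$. As $n \to \infty$ with $k$ fixed, $\frac{qk}{n} = 1 - \frac{s}{n} \to 1$ and $\frac{C_k}{n} \to 0$, since $C_k$ is a finite constant. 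Hence $\limsup_{n \to \infty} \frac{a_n}{n} \leq \frac{a_k}{k}$.

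The only point needing care is the limit of $\frac{qk}{n} \cdot \frac{a_k}{k}$ when $a_k$ is negative. Because $\frac{qk}{n}$ converges to exactly $1$, the product converges to $\frac{a_k}{k}$ regardless of sign, so no monotonicity in $\frac{qk}{n}$ is used. This finishes the argument.
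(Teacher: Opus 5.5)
Your proof is correct. It is the classical argument: fix $k$, write $n = qk+s$ with $s\in\{1,\dots,k\}$, bound $\limsup_{n} a_n/n \le a_k/k$, then take the infimum over $k$, and it handles both the indexing from $1$ and the case $\inf_n a_n/n = -\infty$. The paper gives no proof of its own and only cites Fekete's original work, which contains this same standard argument, so there is no different route to compare.
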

This lemma ensures that the regularized relative entropy of resource 
\begin{align} \label{eq_regul_exists}
D^\reg(\rho \| \cM)
:=\lim_{n \to \infty} \frac{1}{n} D(\rho^{\otimes n} \| \cM_n)
=\inf_{n \in \N}\frac{1}{n} D(\rho^{\otimes n} \| \cM_n) 
\end{align}
is well-defined. To see this, note that
\begin{align} \label{eq_justification_fekete_ER}
D(\rho^{\otimes k+m} \| \cM_{k+m}) \leq D(\rho^{\otimes k} \| \cM_k)+D(\rho^{\otimes m} \| \cM_m) \quad \textnormal{for all} \, k,m \in \N \, .
\end{align}
This is true since $\sigma_k \in \cM_k$ and $\sigma_m \in \cM_m$ implies $\sigma_k \otimes \sigma_m \in \cM_{k+m}$ (by Axiom~\eqref{axiom_tensor_product}) and the relative entropy is additive under tensor products~\cite{marco_book}.

\begin{lemma} \label{lem_generalized_almost_iid_satisfies_axioms}
For $n,r\in \N$ with $r\leq n$ and $ \St(\cH) \ni \omega>0$, the set of generalized almost-iid states $\cM_n = \bar \St^n(\cH,\omega^{\otimes n-r})$ satisfies  Axioms~\eqref{axiom_Giulia_1}-\eqref{axiom_Giulia_3}.
\end{lemma}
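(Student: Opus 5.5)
We check the two axioms separately, working directly with \cref{def_almost_product_state_mixed} without condition~\eqref{it_first_def}. Fix a purification $\ket{\theta}_{AE}$ of $\omega$ and write $V:=\mathrm{span}\,\cV(\cH_{AE}^{\otimes n},\ket{\theta}_{AE}^{\otimes n-r})$. Then $\rho\in\bar\St^n(\cH,\omega^{\otimes n-r})$ means that some extension $\rho_{A^nE^n}$ of $\rho$ has support in $V$. Since all purifications of $\omega$ are related by an isometry on $E$, we may take the purifying system large enough and work with a single fixed $\ket{\theta}$. The set $\cV$ is permutation-closed, so it makes no difference which $\ket\theta$ appears in the definition.

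\textbf{Axiom~\eqref{axiom_Giulia_1}.} For convexity, take $\rho^{(1)},\rho^{(2)}$ with extensions $\rho^{(j)}_{A^nE^n}$ supported in $V$ and let $p\in[0,1]$. Then $p\rho^{(1)}_{A^nE^n}+(1-p)\rho^{(2)}_{A^nE^n}$ is an extension of $p\rho^{(1)}+(1-p)\rho^{(2)}$, and its support lies in $V$ because $V$ is a subspace. For closedness, take a sequence $\rho^{(k)}\to\rho$. The extensions lie in the compact set of density operators supported in $V$, which is a closed subspace of a finite-dimensional space. Pass to a convergent subsequence of extensions; the limit is still supported in $V$, and by continuity of the partial trace it extends $\rho$.

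\textbf{Axiom~\eqref{axiom_Giulia_3}.} Given $\sigma_n$ with extension $\sigma_{A^nE^n}$ supported in $V$, we need an extension of $\cR^{(\omega)}_i(\sigma_n)=\omega_{A_i}\otimes\tr_{A_i}[\sigma_n]$ with support in $V$. The natural candidate is
\begin{align}
\proj{\theta}_{A_iE_i}\otimes\tr_{A_iE_i}[\sigma_{A^nE^n}] \, ,
\end{align}
which clearly has the right $A^n$-marginal.

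The main step is to show that this candidate is supported in $V$. Decompose $\sigma_{A^nE^n}=\sum_k p_k\proj{v_k}$ with each $\ket{v_k}\in V$. Each $\ket{v_k}$ is a linear combination of vectors $\pi(\ket\theta^{\otimes n-r}\otimes\ket\Omega)$.
\begin{enumerate}[(i)]
\item It suffices to show that $\proj\theta_i\otimes\tr_i[\proj{v}]$ is supported in $V$ for every $\ket v\in V$.
\item Fix an orthonormal basis $\{\ket{e_j}\}$ of $\cH_{AE}$ at site $i$. Then $\tr_i[\proj v]=\sum_j \proj{v_j}$ with $\ket{v_j}=(\bra{e_j}_i\otimes\id)\ket v$.
\item Hence the support of $\proj\theta_i\otimes\tr_i[\proj v]$ is spanned by the vectors $\ket\theta_i\otimes\ket{v_j}$.
\item Each $\ket{v_j}$ is a linear combination of vectors $(\bra{e_j}_i\otimes\id)\,\pi(\ket\theta^{\otimes n-r}\otimes\ket\Omega)$.
\item Each such vector is a product-type vector on the remaining $n-1$ sites with at least $n-r-1$ copies of $\ket\theta$, because site $i$ held either a $\theta$ or a defect.
\item Tensoring with $\ket\theta_i$ yields a vector with at least $n-r$ copies of $\ket\theta$, i.e.\ an element of $\cV$, so the span lies in $V$.
\end{enumerate}
If site $i$ was a $\theta$-site, the count stays $\geq n-r$; if it was a defect site, it increases. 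Some care is needed with the entangled defect part $\ket\Omega$: contracting one site of $\ket\Omega$ leaves a general vector on the remaining defect sites, which is allowed. The only hypothesis used is that $\ket\theta$ purifies $\omega$, and the positivity $\omega>0$ is not needed beyond the fact that $\omega$ is the full-support state required by the axiom.
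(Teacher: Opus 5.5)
Your proof is correct and follows essentially the paper's route. The paper also treats $\cR^{(\omega)}_i(\rho_{A^n})=\omega\otimes\tr_i[\rho_{A^n}]$ as ``trace out site $i$, then tensor in $\omega$''. It cites \cite[Remark~2.4]{MSR26} for three facts: convexity and closedness of $\bar\St^n(\cH,\omega^{\otimes n-r})$, that $\tr_i$ maps it into $\bar\St^{n-1}(\cH,\omega^{\otimes n-1-r})$, and that tensoring with $\omega$ maps back into $\bar\St^n(\cH,\omega^{\otimes n-r})$. You verify exactly these facts by hand on the purified extension $\proj{\theta}_{A_iE_i}\otimes\tr_{A_iE_i}[\sigma_{A^nE^n}]$.

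One point needs tightening. The choice of purification is immaterial because $(\id_A\otimes U_E)^{\otimes n}$ maps $\mathrm{span}\,\cV$ for one purification onto the span for the other and leaves the $A^n$-marginal unchanged, not because $\cV$ is permutation-closed. This requires the purifying space $E$ to be fixed, as in the definition, and your compactness argument for closedness needs that too.
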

\begin{proof}
It follows from~\cite[Remark 2.4 (g) and (h)]{MSR26} that Axiom~\eqref{axiom_Giulia_1} is satisfied. To see that Axiom~\eqref{axiom_Giulia_3} is satisfied, we need to show that for any $\rho_{A^n} \in \cM_n$ we have \smash{$\cR^{(\omega)}_i(\rho_{A^n}) = \omega \otimes \tr_{i}[\rho_{A^n}] \in \cM_n$}. We recall that~\cite[Remark 2.4 (d)]{MSR26} ensures that $\bar \rho_{A^{n-1}}:=\tr_{i}[\rho_{A^n}] \in \bar \St^{n-1}(\cH,\omega^{\otimes n-1-r})$.\footnote{Following the proof of~\cite[Lemma~B.3]{MSR26} we see that the statement remains valid for generalized almost-iid states.} Then, because of~\cite[Remark 2.4 (f)]{MSR26}, we directly find that $\omega \otimes \tr_{i}[\rho_{A^n}] \in \cM_n$.

\end{proof}

\begin{lemma}[DPI for smooth max-relative entropy] \label{lem_DPI_Dmax_eps}
Let $\eps \in (0,1)$, $\rho,\sigma \in \St(\cH)$, and $\cE$ be a trace-preserving completely positive map. Then
\begin{align}
D^\eps_{\max}\big( \cE(\rho) \| \cE(\sigma) \big) \leq D^\eps_{\max}(\rho \| \sigma) \, .
\end{align}
\end{lemma}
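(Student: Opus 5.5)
The plan is to reduce the statement to the data-processing inequality for the unsmoothed max-relative entropy, together with the data-processing inequality for the fidelity (equivalently, for the purified distance).

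\textbf{Step 1 (unsmoothed DPI).} First we would establish the DPI for $D_{\max}$. Suppose $\rho' \leq \ee^{\lambda}\sigma$. Since $\cE$ is positive and linear, it preserves operator inequalities, so
\begin{align*}
\cE(\rho') \leq \ee^{\lambda}\cE(\sigma).
\end{align*}
Taking the infimum over admissible $\lambda$ gives $D_{\max}(\cE(\rho')\|\cE(\sigma)) \leq D_{\max}(\rho'\|\sigma)$. Here only positivity of $\cE$ is used.

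\textbf{Step 2 (smoothing ball).} Next we would show that the smoothing ball is mapped into the corresponding ball. Let $\rho' \in \cB_\eps(\rho)$ be an optimizer for $D^\eps_{\max}(\rho\|\sigma)$; it exists because the ball is compact and the objective is lower semicontinuous. Since $\cE$ is trace preserving and completely positive, $\cE(\rho')$ is again a density matrix. The DPI for the fidelity~\cite{marco_book} gives $F(\cE(\rho'),\cE(\rho)) \geq F(\rho',\rho)$. Hence
\begin{align*}
P(\cE(\rho'),\cE(\rho)) \leq P(\rho',\rho) \leq \eps,
\end{align*}
so $\cE(\rho') \in \cB_\eps(\cE(\rho))$.

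\textbf{Step 3 (combining).} Combining the two steps,
\begin{align*}
D^\eps_{\max}(\cE(\rho)\|\cE(\sigma)) \leq D_{\max}(\cE(\rho')\|\cE(\sigma)) \leq D_{\max}(\rho'\|\sigma) = D^\eps_{\max}(\rho\|\sigma).
\end{align*}

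There is no real obstacle here. The only point needing care is that the ball is defined with normalized states. Trace preservation of $\cE$ keeps $\cE(\rho')$ normalized, so no renormalization argument is required. If an optimizer were not assumed to exist, we would instead take a near-optimal $\rho'$ and let the slack go to zero.
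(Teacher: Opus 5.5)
Your proof is correct and follows the paper's argument exactly. You take an optimizer $\bar\rho\in\cB_\eps(\rho)$, use monotonicity of the purified distance (via the fidelity DPI) to get $\cE(\bar\rho)\in\cB_\eps(\cE(\rho))$, and then apply the DPI for the unsmoothed $D_{\max}$. The only difference is that you derive the unsmoothed DPI directly from positivity of $\cE$, whereas the paper cites it from Tomamichel's book.
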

\begin{proof}
Let $\bar \rho \in \cB_{\eps}(\rho)$ be such that $D^\eps_{\max}(\rho \| \sigma) = D_{\max}(\bar \rho \| \sigma)$. The monotonicity of the purified distance under quantum channels~\cite[Equation~3.49]{marco_book} gives
\begin{align} \label{eq_ball_DPI_Dmax}
P\big(\cE(\bar \rho), \cE(\rho) \big) \leq P(\bar \rho , \rho) \leq \eps \, .
\end{align}
Hence, we have
\begin{align}
D^\eps_{\max}\big( \cE(\rho) \| \cE(\sigma) \big)
&=\min_{\tilde \rho \in \cB_{\eps}(\cE(\rho))}  D_{\max}\big(\tilde \rho \| \cE(\sigma) \big) \\
\overset{\textnormal{\Cshref{eq_ball_DPI_Dmax}}}&{\leq} D_{\max}\big(\cE(\bar \rho) \| \cE(\sigma) \big) \\
\overset{\textnormal{\cite[Prop~4.7]{marco_book}}}&{\leq} D_{\max}(\bar \rho \| \sigma) \\
&=D^\eps_{\max}(\rho \| \sigma) \, .
\end{align}
\end{proof}

We next recall two modifications of Uhlmann's theorem~\cite{Uhl76}.
\begin{lemma}\label{lem_uhlmann_mixed}
Let $\rho_{AB} \in \St(\cH_{AB})$ and $\sigma_A \in \St(\cH_A)$. Then there exists an extension $\sigma_{AB} \in \St(\cH_{AB})$ of $\sigma_{A}$ such that
\begin{align}
F(\rho_A , \sigma_A) = F(\rho_{AB} , \sigma_{AB}) \, .
\end{align}
\end{lemma}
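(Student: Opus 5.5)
The plan is to derive this from the standard Uhlmann theorem via purifications.

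First, fix a purification $\ket{\psi}_{ABR}$ of $\rho_{AB}$. Viewing $BR$ as the purifying system, this vector is also a purification of $\rho_A$.

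Second, invoke Uhlmann's theorem~\cite{Uhl76}, which gives
\begin{align*}
F(\rho_A,\sigma_A)=\max_{\ket{\phi}}|\spr{\psi}{\phi}|^2 ,
\end{align*}
where the maximum runs over all purifications $\ket{\phi}_{ABR}$ of $\sigma_A$ on the same system $BR$. Let $\ket{\phi}_{ABR}$ be a maximizer; it exists because the set of such purifications is compact (they are related by unitaries on $BR$). For this step to be valid, the purifying space $\cH_B\otimes\cH_R$ must be large enough to purify $\sigma_A$, i.e.\ $\dim(\cH_{BR})\geq \mathrm{rank}(\sigma_A)$. We arrange this by choosing $R$ large, e.g.\ $\dim \cH_R\geq \dim\cH_A$, padding with extra dimensions if necessary; since $\ket{\psi}$ only needs to be a purification of $\rho_{AB}$, enlarging $R$ is harmless.

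Third, define the candidate extension $\sigma_{AB}:=\tr_R[\proj{\phi}_{ABR}]$. Its $A$-marginal is $\sigma_A$, so it is indeed an extension. Monotonicity of the fidelity under the partial trace $\tr_R$ (DPI for the fidelity~\cite{marco_book}) yields
\begin{align*}
F(\rho_{AB},\sigma_{AB})\geq F(\proj{\psi},\proj{\phi})=|\spr{\psi}{\phi}|^2=F(\rho_A,\sigma_A).
\end{align*}
The reverse inequality, $F(\rho_{AB},\sigma_{AB})\leq F(\rho_A,\sigma_A)$, follows from the same monotonicity applied to $\tr_B$. Together these give equality.

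There is no real obstacle here. The only point needing care is the dimension of the purifying system in the Uhlmann step, which is handled by taking $R$ sufficiently large as described.
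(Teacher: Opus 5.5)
Your proof is correct and is essentially the paper's: purify $\rho_{AB}$ on $ABR$, use Uhlmann's theorem to pick a purification $\ket{\phi}_{ABR}$ of $\sigma_A$ attaining $F(\rho_A,\sigma_A)$, set $\sigma_{AB}:=\tr_R[\proj{\phi}_{ABR}]$, and sandwich the fidelity with the data-processing inequality for $\tr_R$ and for $\tr_B$. Your remark about enlarging $R$ so that $BR$ can purify $\sigma_A$ is a valid detail that the paper leaves implicit.
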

\begin{proof}
Let $\ket{\psi}_{ABR}$ be an arbitrary purification of $\rho_{AB}$ and $\ket{\phi}_{ABR}$ be a purification of $\sigma_A$ such that $F(\rho_{A} , \sigma_{A}) = |\spr{\psi}{\phi}|^2 $, which exists by Uhlmann's theorem~\cite{Uhl76}. For $\sigma_{AB}:= \tr_R [\proj{\phi}_{ABR}]$ we have
\begin{align}
F(\rho_{AB} , \sigma_{AB}) 
\overset{\textnormal{DPI}}{\leq} F(\rho_{A} , \sigma_{A}) 
\overset{\textnormal{Uhlmann's theorem}}{=}  |\spr{\psi}{\phi}|^2 
 \overset{\textnormal{DPI}}{\leq} F(\rho_{AB} , \sigma_{AB}) \, .
\end{align}
\end{proof}

\begin{lemma}[{\cite[Lemma~III.4]{brandao_Stein_10}}] \label{lem_uhlmann_PI}
Let $\rho \in \St(\cH)$ and $\rho_n \in \St(\cH^{\otimes n})$ be permutation invariant. Then there exists a purification $\ket{\theta} \in \cH \otimes \cH$ of $\rho$ and a permutation-invariant purification $\ket{\Psi_n} \in (\cH \otimes \cH)^{\otimes n}$ of $\rho_n$ such that
\begin{align}
 |\spr{\Psi_n}{\theta^{\otimes n}}|^2 = F(\rho_n,\rho^{\otimes n}) \, .
\end{align}
\end{lemma}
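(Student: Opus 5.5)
Fix a purification $\ket{\theta}\in\cH\otimes\cH$ of $\rho$ and a permutation-invariant purification $\ket{\Psi}$ of $\rho_n$ with the largest possible overlap. Then show by Uhlmann's theorem that this overlap equals $F(\rho_n,\rho^{\otimes n})$.

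\textbf{Step 1: the purification of $\rho$.} Write the spectral decomposition $\rho=\sum_x \lambda_x\proj{x}$ and take the canonical purification $\ket{\theta}=\sum_x\sqrt{\lambda_x}\ket{x}\ket{x}=(\sqrt{\rho}\otimes\id)\ket{\Gamma}$, with $\ket{\Gamma}=\sum_x\ket{x}\ket{x}$. Then
\[
\ket{\theta}^{\otimes n}=(\sqrt{\rho}^{\otimes n}\otimes\id)\ket{\Gamma}^{\otimes n},
\]
where $\ket{\Gamma}^{\otimes n}$ is the unnormalized maximally entangled vector on $\cH^{\otimes n}\otimes\cH^{\otimes n}$. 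This vector is invariant under simultaneous permutations of the pairs $(A_i,E_i)$, since $\pi\otimes\pi$ fixes it.

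\textbf{Step 2: the purification of $\rho_n$.} Set
\[
\ket{\Psi_n}:=(\sqrt{\rho_n}\otimes U^{T})\ket{\Gamma}^{\otimes n},
\]
where $U$ is a unitary on $\cH^{\otimes n}$ to be chosen. This is a purification of $\rho_n$ for every unitary $U$. Using $(X\otimes\id)\ket{\Gamma}^{\otimes n}=(\id\otimes X^T)\ket{\Gamma}^{\otimes n}$, the overlap is
\[
\spr{\theta^{\otimes n}}{\Psi_n}=\tr\big[\sqrt{\rho^{\otimes n}}\,\sqrt{\rho_n}\,U\big],
\]
up to transposition conventions. Take the polar decomposition $\sqrt{\rho_n}\sqrt{\rho^{\otimes n}}=V\,|\sqrt{\rho_n}\sqrt{\rho^{\otimes n}}|$ and choose $U=V^\dagger$ (adjusted for the transpose convention). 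Then $|\spr{\theta^{\otimes n}}{\Psi_n}|=\norm{\sqrt{\rho_n}\sqrt{\rho^{\otimes n}}}_1$, so
\[
|\spr{\Psi_n}{\theta^{\otimes n}}|^2=F(\rho_n,\rho^{\otimes n}).
\]
The reverse inequality $|\spr{\Psi_n}{\theta^{\otimes n}}|^2\le F(\rho_n,\rho^{\otimes n})$ holds for any pair of purifications by Uhlmann's theorem, so only the construction itself needs checking.

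\textbf{Step 3: permutation invariance (the main obstacle).} The task is to choose $U$ so that $\ket{\Psi_n}$ is invariant under $\pi\otimes\pi$. Since $\rho_n$ and $\rho^{\otimes n}$ both commute with every permutation operator $\pi$, so does $M:=\sqrt{\rho_n}\sqrt{\rho^{\otimes n}}$.

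The polar unitary $V$ is unique only on the support of $|M|$. I will fix it canonically as the partial isometry $V_0=M|M|^{-1}$, where $|M|^{-1}$ is the pseudo-inverse on $\supp |M|$. On the complement I extend it by a map between $\ker M$ and $\ker M^\dagger$ that commutes with permutations. Such an extension exists because these kernels are invariant subspaces of the representation $\pi\mapsto\pi$, and they have equal dimension within each isotypic component (by rank-nullity restricted to the commutant). For example, the extension can be built using Schur–Weyl decomposition blockwise. Each ingredient (the functional calculus of $M^\dagger M$, and $M$ itself) commutes with $\pi$, so $V$ does too.

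Finally, check that $(\pi\otimes\pi)\ket{\Psi_n}=\ket{\Psi_n}$. This follows from
\[
(\pi\otimes\pi)(X\otimes Y)\ket{\Gamma}^{\otimes n}=(\pi X\pi^\dagger\otimes \pi Y\pi^\dagger)\ket{\Gamma}^{\otimes n},
\]
together with the fact that $\pi^T=\bar\pi=\pi$ for permutation matrices in a product basis. The equality then follows from Step 2.
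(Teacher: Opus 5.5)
Your argument is correct. Note that the paper gives no proof of this lemma: it is quoted from \cite{brandao_Stein_10}. Your canonical-purification and polar-decomposition route is therefore a self-contained proof, not an alternative to an argument in the text.

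There is one slip to fix in Step~2. With your formula $\spr{\theta^{\otimes n}}{\Psi_n}=\tr[M^\dagger U]$ and $M=V\lvert M\rvert$, you get $\tr[M^\dagger U]=\tr[\lvert M\rvert V^\dagger U]$, so the maximizer is $U=V$. The choice $U=V^\dagger$ gives $\tr[\lvert M\rvert (V^\dagger)^2]$, which in general is not $\norm{M}_1$. Since $V$ commutes with all permutations iff $V^\dagger$ does, nothing downstream changes.

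You correctly identified the delicate point: completing the polar isometry on $\ker M$, where an arbitrary completion need not commute with the permutations. Your blockwise argument handles it. Concretely, write $\cH^{\otimes n}\cong\bigoplus_\lambda Q_\lambda\otimes P_\lambda$ with $\cS_n$ acting irreducibly on $P_\lambda$. Schur's lemma then gives $M=\bigoplus_\lambda N_\lambda\otimes\id_{P_\lambda}$. Hence $\ker M=\bigoplus_\lambda \ker N_\lambda\otimes P_\lambda$ and $\ker M^\dagger=\bigoplus_\lambda \ker N_\lambda^\dagger\otimes P_\lambda$. Rank--nullity for the square matrices $N_\lambda$ gives unitaries $w_\lambda:\ker N_\lambda\to\ker N_\lambda^\dagger$. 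Then $W=\bigoplus_\lambda w_\lambda\otimes\id_{P_\lambda}$ is the required permutation-commuting completion.

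Abstractly, this is the fact that in a finite-dimensional von Neumann algebra, here the commutant of the permutation action, every element has a polar decomposition whose unitary factor lies in the algebra.
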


\begin{lemma} \label{lem_new_fidelity_almost_iid}
Let $n,r \in \N$ be such that $n-r  \in \N$, $\rho \in \St(\cH)$, $\rho_n \in \St^n(\cH, \rho^{\otimes n-r})$, $\omega_n \in \St(\cH^{\otimes n})$ be permutation invariant and $d=\dim \cH$. Then
\begin{align}
F(\omega_n, \rho_n) \leq F\big(\omega_{n-r},\rho^{\otimes n-r}\big) |\cT|^2 \, ,
\end{align} 
for $\log |\cT| \leq n h(r/n) + 2r \log d$.
\end{lemma}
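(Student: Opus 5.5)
The plan is to pass from $\rho_n$ to the block-diagonal ("pinched") state $\tilde \rho_n$ at a cost of one factor $|\cT|$. Then I would split $\tilde\rho_n$ into its individual branches at a cost of a second factor $|\cT|$. Finally, on each branch the defects are traced out, leaving exactly $\rho^{\otimes n-r}$.

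\textbf{Decomposition and pinching.} By \cref{def_almost_product_state_mixed}, I write an extension of $\rho_n$ as $\rho_{A^nE^n}=\sum_{t,t'\in\cT}\beta_{t,t'}\ket{\Psi_t}\bra{\Psi_{t'}}$ with orthonormal $\ket{\Psi_t}\in\cV(\cH_{AE}^{\otimes n},\ket{\theta}^{\otimes n-r})$. As in \cref{eq_size_T}, the family can be chosen with $\log|\cT|\le n h(r/n)+r\log d_{AE}$. Since $d_{AE}\le d^2$ for a purification of $\rho$, this gives $\log|\cT|\le nh(r/n)+2r\log d$. Next I set $\tilde\rho_n:=\sum_t\beta_{t,t}\,\tr_{E^n}\proj{\Psi_t}$, for which \cref{eq_pinching_statement} (i.e.~\cite[Lemma~4.3]{MSR26}) gives $\rho_n\le|\cT|\tilde\rho_n$. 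Because $X\mapsto\sqrt X$ is operator monotone, $\sqrt{F(\omega,\cdot)}=\norm{\sqrt\omega\sqrt{\cdot}}_1$ is monotone in the operator order; one can see this via $\sqrt{F(\omega,X)}=\max_U|\tr[U\sqrt{\omega}\sqrt X]|$, or via the Uhlmann/SDP characterization. Since $F(\omega,cX)=cF(\omega,X)$, it follows that
\begin{align}
F(\omega_n,\rho_n)\le |\cT|\,F(\omega_n,\tilde\rho_n)\,.
\end{align}

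\textbf{Splitting the mixture.} Write $\rho^{(t)}_n:=\tr_{E^n}\proj{\Psi_t}$ and $p_t:=\beta_{t,t}$. I would use the bound $\sqrt{F(\omega,\sum_tp_t\rho_t)}\le\sum_t\sqrt{p_t}\sqrt{F(\omega,\rho_t)}$ and prove it by Uhlmann's theorem. A purification of the mixture is $\sum_t\sqrt{p_t}\ket{\psi_t}\ket{t}$, where the $\ket{\psi_t}$ purify the $\rho_t$. I take an optimal purification $\ket{\phi}$ of $\omega$ in the same space and expand it as $\ket\phi=\sum_t\ket{\phi_t}\ket t$. The overlap is then bounded by $\sum_t\sqrt{p_t}|\spr{\phi_t}{\psi_t}|$, and each term satisfies $|\spr{\phi_t}{\psi_t}|\le\sqrt{F(\omega,\rho_t)}$. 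This last step holds because $\tr_{\mathrm{pur}}\proj{\phi_t}\le\omega$ (compress the reference register onto $\ket t$, a local operation on the purifying side) together with the operator monotonicity from the first step. Cauchy--Schwarz with $\sum_tp_t=1$ then yields
\begin{align}
F(\omega_n,\tilde\rho_n)\le|\cT|\max_{t\in\cT}F(\omega_n,\rho^{(t)}_n)\,.
\end{align}
I expect this step to be the main technical point. It needs the careful justification that the sub-normalized pieces of the purification of $\omega$ do not exceed $\omega$, so that the single-branch fidelities are bounded correctly.

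\textbf{Removing the defects.} For each $t$, $\ket{\Psi_t}=\pi_t(\ket\theta^{\otimes n-r}\otimes\ket{\Omega_t})$, so $\rho^{(t)}_n=\pi_t(\rho^{\otimes n-r}\otimes\omega^{(t)}_r)\pi_t^\dagger$. I trace out the $r$ defect sites $\pi_t(\{n-r+1,\dots,n\})$; by the data-processing inequality for the fidelity~\cite{marco_book}, partial trace can only increase $F$. This gives $F(\omega_n,\rho^{(t)}_n)\le F(\tr_{\text{defects}}\omega_n,\rho^{\otimes n-r})$. Since $\omega_n$ is permutation invariant, its marginal on any $n-r$ sites equals $\omega_{n-r}$, so the right-hand side equals $F(\omega_{n-r},\rho^{\otimes n-r})$ independently of $t$. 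Chaining the three displays gives $F(\omega_n,\rho_n)\le|\cT|^2F(\omega_{n-r},\rho^{\otimes n-r})$ with the stated bound on $\log|\cT|$.
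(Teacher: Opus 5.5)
Your proposal is correct and follows the paper's proof essentially step by step: you pinch to gain one factor $|\cT|$, split the mixture via Uhlmann's theorem and Cauchy--Schwarz to gain the second factor, and strip the defects with data processing and the permutation invariance of $\omega_n$. The only difference is where this happens: the paper first lifts $\omega_n$ to a fidelity-preserving extension $\omega_{A^nE^n}$ (\cref{lem_uhlmann_mixed}) and pinches on $A^nE^n$, so the branches $\proj{\Psi_t}$ are pure and the single-branch bound is immediate; you stay on $A^n$ and instead rely on your (valid) observation that the sub-normalized branches of the purification of $\omega_n$ are dominated by $\omega_n$.
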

\begin{proof}
To simplify the notation, let $\cH = \cH_A$.
Because $\rho_n = \rho_{A^n}$ is an almost-iid state, there exist a purification $\ket{\theta}_{AE}$ of $\rho=\rho_A$ and an extension $\rho_{A^n E^n}$ such that
\begin{align} \label{eq_pinching_new_lemma}
\rho_{A^n E^n} \overset{\textnormal{\cite[Lemma~4.3]{MSR26}}}{\leq} |\cT| \tilde \rho_{A^n E^n} = |\cT| \sum_{t \in \cT} p_{t} \proj{\Psi_t}_{A^n E^n} \, ,
\end{align}
for a family $\{\ket{\Psi_t}_{A^n E^n}\}_{t \in \cT}$ of orthonormal vectors from $\cV(\cH_{AE}^{\otimes n}, \ket{\theta}_{AE}^{\otimes n-r})$ with $p_t \in \R$ satisfying $\sum_{t \in \cT} p_{t} =1$ and
\begin{align} \label{eq_size_T2}
\log |\cT| 
\overset{\textnormal{\cite[Remark~2.4]{MSR26}}}{\leq} n h\Big( \frac{r}{n}\Big) + 2r \log d \, ,
\end{align}
for $d=\dim \cH$.
Let  $\omega_{A^n E^n}$ be an extension of $\omega_n = \omega_{A^n}$ such that
\begin{align}
F(\omega_{A^n} , \rho_{A^n}) 
\overset{\textnormal{\Cshref{lem_uhlmann_mixed}}}{=}F(\omega_{A^n E^n} , \rho_{A^n E^n}) 
\overset{\textnormal{\Cshref{eq_pinching_new_lemma}}}{\leq} F(\omega_{A^n E^n} , \tilde \rho_{A^n E^n}) |\cT| \, , \label{eq_step_1_new_uhlmann}
\end{align}
where the final step uses the monotonicity of the matrix square root~\cite[Table~2.2]{Sutter_book}.
Let 
\begin{align} \label{eq_purification_Uhl}
\ket{\xi}_{A^n E^n R} = \sum_{t \in \cT} \sqrt{p_{t}} \ket{\Psi_t} \otimes \ket{t}_R \, 
\end{align}
be a purification of $\tilde \rho_{A^n E^n}$ and let $\ket{\Omega}_{A^n E^n R}$ be a purification of $\omega_{A^n E^n}$. By Uhlmann's theorem~\cite{Uhl76} we thus find
\begin{align}
\sqrt{F(\omega_{A^n E^n} , \tilde \rho_{A^n E^n}) }
&= \sup_{U_R} | \bra{\Omega}(\id_{A^n E^n} \otimes U_R)\ket{\xi}| \\
\overset{\textnormal{\Cshref{eq_purification_Uhl}}}&{\leq} \sup_{U_R}  \sum_{t \in \cT} \sqrt{p_{t}} |\bra{\Omega}(\id_{A^n E^n} \otimes U_R) \ket{\Psi_t} \otimes \ket{t} |   \\
& \leq \max_{t \in \cT} \sup_{U_R} |\bra{\Omega}(\id_{A^n E^n} \otimes U_R) \ket{\Psi_t} \otimes \ket{t} | \sum_{t \in \cT} \sqrt{p_{t}}
\\
\overset{\textnormal{Cauchy-Schwarz}}&{\leq} \sqrt{|\cT|} \max_{t \in \cT} \sup_{U_R} |\bra{\Omega}(\id_{A^n E^n} \otimes U_R) \ket{\Psi_t} \otimes \ket{t} | \\
& =\sqrt{|\cT|}  \max_{t \in \cT} \sqrt{F(\omega_{A^n E^n} , \proj{\Psi_t}_{A^n E^n})} \, , \label{eq_step_2_new_uhlmann}
\end{align}
where the final step also uses Uhlmann's theorem.
For any $t \in \cT$ we can utilize the structure of $\ket{\Psi_t}_{A^n E^n}$ to write
\begin{align}
F(\omega_{A^n E^n} , \proj{\Psi_t}_{A^n E^n})
\overset{\textnormal{DPI}}&{\leq}F(\omega_{A^n} , \tr_{E^n}[\proj{\Psi_t}_{A^n E^n}]) \\
\overset{\textnormal{DPI $\&$ permutation invariance}}&{\leq} F\Big(\omega_{n-r} , \rho^{\otimes n-r}\Big) \, .  \label{eq_step_3_new_uhlmann}
\end{align}
Combining~\cref{eq_step_1_new_uhlmann,eq_step_2_new_uhlmann,eq_step_3_new_uhlmann} proves the assertion.
\end{proof}

The next lemma is a version of the exponential de Finetti theorem~\cite{renner_phd,renner07}. This result is crucial in the proof of the generalized quantum Stein's lemma as it relates permutation-invariant states to almost-iid states.
\begin{lemma}[{\cite[Lemma~III.5]{brandao_Stein_10}}] \label{lem_exp_deFinetti_brandao}
Let $n,m,r \in \N$ such that $m \leq n$ and $r \leq n-m$, $\ket{\Psi_n} \in \cH^{\otimes n}$ be a permutation-invariant state, and $\ket{\theta} \in \cH$. Then there exist states $\ket{\Upsilon_{n-m}} \in \cH^{\otimes n-m}$ and $\ket{\Phi_{n-m}} \in \mathrm{Sym}(\cH^{\otimes n-m},\ket{\theta}^{\otimes n-m-r})$ such that
\begin{align} \label{eq_fernando_dF_1}
\proj{\Upsilon_{n-m}} \leq \frac{1}{|\spr{\Psi_n}{\theta^{\otimes n}}|^2} \tr_{m}[\proj{\Psi_n}]
\end{align}
and
\begin{align} \label{eq_fernando_dF_2}
\norm{\proj{\Upsilon_{n-m}}-\proj{\Phi_{n-m}}}_1 \leq \frac{2 \sqrt{2}}{|\spr{\Psi_n}{\theta^{\otimes n}}|} \ee^{-\frac{mr}{2n}} \, .
\end{align}
\end{lemma}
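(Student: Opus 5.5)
The plan is to construct $\ket{\Upsilon_{n-m}}$ by projecting the $m$ discarded systems of $\ket{\Psi_n}$ onto $\ket{\theta}^{\otimes m}$. Then $\ket{\Phi_{n-m}}$ is obtained by projecting $\ket{\Upsilon_{n-m}}$ onto the symmetric subspace with at most $r$ defects. This follows Renner's exponential de Finetti argument as used in~\cite{brandao_Stein_10}.

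\textbf{Constructing $\ket{\Upsilon_{n-m}}$.} Define the unnormalized vector $\ket{\tilde\Upsilon} := (\bra{\theta}^{\otimes m}\otimes \id_{n-m})\ket{\Psi_n}$ and set $\ket{\Upsilon_{n-m}} := \ket{\tilde\Upsilon}/\norm{\tilde\Upsilon}$. Since $\spr{\theta^{\otimes n}}{\Psi_n} = \bra{\theta}^{\otimes n-m}\ket{\tilde\Upsilon}$, Cauchy--Schwarz gives
\begin{align}
\norm{\tilde\Upsilon}^2 \geq |\spr{\Psi_n}{\theta^{\otimes n}}|^2 .
\end{align}
Moreover, $\proj{\tilde\Upsilon} = \tr_m[(\proj{\theta}^{\otimes m}\otimes\id)\proj{\Psi_n}] \leq \tr_m[\proj{\Psi_n}]$, because $\proj{\theta}^{\otimes m}\leq \id$. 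Dividing by $\norm{\tilde\Upsilon}^2$ yields~\cref{eq_fernando_dF_1}. Permutation invariance of $\ket{\Psi_n}$ implies that $\ket{\tilde\Upsilon}$ is symmetric on the remaining $n-m$ systems.

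\textbf{Defect decomposition.} Decompose $\id = \proj{\theta} + (\id-\proj{\theta})$ on every site. This gives $\ket{\Psi_n} = \sum_{k=0}^n \ket{\Psi^{(k)}}$, where $\ket{\Psi^{(k)}}$ collects the components with exactly $k$ sites orthogonal to $\ket{\theta}$. The $\ket{\Psi^{(k)}}$ are mutually orthogonal, each is permutation invariant, and $\ket{\Psi^{(k)}}$ further splits orthogonally over defect-position sets $S\subseteq[n]$ with $|S|=k$. By permutation invariance, all $\binom{n}{k}$ such pieces have equal norm. Projecting the first $m$ sites onto $\ket{\theta}$ keeps exactly the pieces with $S\cap[m]=\emptyset$. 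Hence
\begin{align}
\norm{(\bra{\theta}^{\otimes m}\otimes\id)\ket{\Psi^{(k)}}}^2 = \frac{\binom{n-m}{k}}{\binom{n}{k}}\norm{\Psi^{(k)}}^2 \leq \Big(1-\frac{m}{n}\Big)^k \norm{\Psi^{(k)}}^2 \leq \ee^{-mk/n}\norm{\Psi^{(k)}}^2 .
\end{align}

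\textbf{Constructing $\ket{\Phi_{n-m}}$.} Let $Q$ be the projector onto the span of vectors on $n-m$ sites with at most $r$ sites orthogonal to $\ket{\theta}$. $Q$ commutes with permutations, so $\ket{\Phi_{n-m}} := Q\ket{\Upsilon_{n-m}}/\norm{Q\Upsilon_{n-m}}$ lies in $\mathrm{Sym}(\cH^{\otimes n-m},\ket{\theta}^{\otimes n-m-r})$. The part of $\ket{\tilde\Upsilon}$ removed by $Q$ comes only from $k>r$, so
\begin{align}
\norm{(\id-Q)\tilde\Upsilon}^2 \leq \sum_{k>r}\ee^{-mk/n}\norm{\Psi^{(k)}}^2 \leq \ee^{-mr/n}.
\end{align}
After normalization, $\bra{\Upsilon_{n-m}}(\id-Q)\ket{\Upsilon_{n-m}} \leq \ee^{-mr/n}/|\spr{\Psi_n}{\theta^{\otimes n}}|^2 =: \delta$.

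\textbf{Conclusion and main obstacle.} The gentle measurement lemma (or the pure-state formula $\norm{\proj{a}-\proj{b}}_1 = 2\sqrt{1-|\spr{a}{b}|^2}$ with $|\spr{\Upsilon}{\Phi}|^2 = 1-\delta$) gives a trace-norm bound of order $2\sqrt{\delta}$. This is within the constant $2\sqrt2\,\ee^{-mr/(2n)}/|\spr{\Psi_n}{\theta^{\otimes n}}|$ of~\cref{eq_fernando_dF_2}. If $\delta\geq 1$ the bound is trivial, since the trace distance of two states is at most $2$. The main obstacle is the counting step in the defect decomposition: one must verify carefully that permutation invariance forces equal weight on all defect-position sets of a given size $k$. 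This is where symmetry of $\ket{\Psi_n}$ (rather than just of its reduced state) is essential.
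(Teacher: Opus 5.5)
The paper gives no proof of its own here: it imports the lemma from \cite[Lemma~III.5]{brandao_Stein_10}, which rests on Renner's exponential de Finetti argument. Your proof is correct and is essentially that standard argument: project the $m$ traced-out sites onto $\ket{\theta}^{\otimes m}$, use symmetry to damp the $k$-defect weight by $\binom{n-m}{k}/\binom{n}{k}\leq \ee^{-mk/n}$, then project onto the symmetric subspace with at most $r$ defects.

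The step you flag as the main obstacle is immediate. Conjugating the defect projector of a position set $S$ by a permutation $\pi$ gives the defect projector of $\pi(S)$, and $\pi\ket{\Psi_n}=\ket{\Psi_n}$.

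Also, $|\spr{\Upsilon_{n-m}}{\Phi_{n-m}}|^2=\norm{Q\ket{\Upsilon_{n-m}}}^2\geq 1-\delta$ is an inequality rather than an equality. It actually yields the sharper constant $2$ in place of $2\sqrt{2}$.

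The degenerate cases $\spr{\Psi_n}{\theta^{\otimes n}}=0$ or $Q\ket{\Upsilon_{n-m}}=0$ are handled trivially by any choice such as $\ket{\Phi_{n-m}}=\ket{\theta}^{\otimes n-m}$.
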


\begin{lemma} \label{lem_RER_bounded}
Let $\rho_n \in \St(\cH^{\otimes n})$ and $\{\cM_n\}_{n \in \N}$ be a family of sets that satisfy the Axioms~\eqref{axiom_convexity}-\eqref{axiom_tensor_product}. Then
\begin{align}
\frac{1}{n} D(\rho_n \| \cM_n) \leq \log \frac{1}{\lambda_{\min}(\omega)} \, .
\end{align}
\end{lemma}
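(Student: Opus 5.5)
The plan is to bound the relative entropy of resource by comparing $\rho_n$ with a single, explicitly constructed element of $\cM_n$, namely the product state $\omega^{\otimes n}$, where $\omega > 0$ is the full-rank state from Axiom~\eqref{axiom_full_rank}.

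First we check that $\omega^{\otimes n} \in \cM_n$. We have $\omega \in \cM_1$ by Axiom~\eqref{axiom_full_rank}. Repeated use of Axiom~\eqref{axiom_tensor_product} then gives $\omega^{\otimes k} \otimes \omega \in \cM_{k+1}$, so by induction $\omega^{\otimes n} \in \cM_n$. Since $D(\rho_n\|\cM_n)$ is a minimum over $\cM_n$, this yields $D(\rho_n \| \cM_n) \leq D(\rho_n \| \omega^{\otimes n})$.

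Next we bound the relative entropy against a full-rank product state. Write
\begin{align*}
D(\rho_n\|\omega^{\otimes n}) = -H(\rho_n) - \tr[\rho_n \log \omega^{\otimes n}] \, .
\end{align*}
We have $H(\rho_n)\geq 0$. Moreover, $\omega^{\otimes n} \geq \lambda_{\min}(\omega)^n \id$, and the logarithm is operator monotone~\cite[Table~2.2]{Sutter_book}, so $-\log \omega^{\otimes n} \leq n \log \frac{1}{\lambda_{\min}(\omega)} \id$. Taking the expectation in $\rho_n$ gives $D(\rho_n\|\omega^{\otimes n}) \leq n\log\frac{1}{\lambda_{\min}(\omega)}$.

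An alternative route is $D \leq D_{\max}$ together with $\rho_n \leq \id \leq \lambda_{\min}(\omega)^{-n}\omega^{\otimes n}$, which gives the same bound. Dividing by $n$ completes the argument.

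There is no real obstacle here. The only point needing care is the support condition: it holds automatically because $\omega^{\otimes n}$ has full support, so the relative entropy is finite and the formula above applies.
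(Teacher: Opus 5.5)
Your proof is correct and follows essentially the same route as the paper. You compare with $\omega^{\otimes n}\in\cM_n$ (from the full-rank and tensor-product axioms), use $\omega^{\otimes n}\geq\lambda_{\min}(\omega)^n\id$ with operator monotonicity of the logarithm, and drop the nonnegative entropy term; your explicit induction for $\omega^{\otimes n}\in\cM_n$ and the alternative $D\leq D_{\max}$ argument are both valid.
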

\begin{proof}
Using the fact that the logarithm is operator monotone~\cite[Table~2.2]{Sutter_book} we find
\begin{align}
\frac{1}{n} D(\rho_n \| \cM_n) 
\overset{\textnormal{Axioms~\eqref{axiom_full_rank} \& \eqref{axiom_tensor_product}}}&{\leq}\frac{1}{n} D(\rho_n \| \omega^{\otimes n}) \\
&\leq D\big(\rho_n \| \lambda_{\min}(\omega^{\otimes n}) \id_n\big) \\
&= - \frac{1}{n} H(\rho_n) - \frac{1}{n} \tr[\rho_n \log (\lambda_{\min}(\omega^{\otimes n}) \id_n)] \\
&\leq - \log \lambda_{\min}(\omega) \, ,
\end{align}
where the second step uses $\omega^{\otimes n} \geq \lambda_{\min}(\omega^{\otimes n}) \id_n$ and the final step uses the nonnegativity of the entropy and $\lambda_{\min}(\omega^{\otimes n}) = \lambda_{\min}(\omega)^n$.
\end{proof}

\begin{lemma} \label{lem_limits}
Let $f:\N \to \R$ be such that $\lim_{n \to \infty} f(n) = \hat f <\infty$ exists. For $s(n) \in \N$ such that $\lim_{n\to \infty} s(n) = \infty$, we have
\begin{align}
\lim_{n\to \infty} | f(n)-f\big(s(n) \big) | = 0 \, .
\end{align}
\end{lemma}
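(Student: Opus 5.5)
This is a standard epsilon-argument from the definition of the limit, combined with the triangle inequality. Write $\hat f := \lim_{n\to\infty} f(n)$ and fix $\delta>0$. By convergence of $f$ there is $N_\delta\in\N$ such that $|f(k)-\hat f|\le \delta/2$ for all $k\ge N_\delta$.

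Next, use the assumption $s(n)\to\infty$. This gives an $N'_\delta$ such that $s(n)\ge N_\delta$ for all $n\ge N'_\delta$. Note that $s(n)$ must be integer-valued, so that $f(s(n))$ is defined; this is part of the hypothesis.

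Now take $n\ge \max\{N_\delta,N'_\delta\}$. Both $n\ge N_\delta$ and $s(n)\ge N_\delta$ hold, so the triangle inequality gives
\begin{align}
|f(n)-f(s(n))| \le |f(n)-\hat f| + |\hat f - f(s(n))| \le \delta \, .
\end{align}
Since $\delta>0$ was arbitrary, $|f(n)-f(s(n))|\to 0$.

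There is no real obstacle. The only points to watch are that finiteness of $\hat f$ is needed so the difference $|f(n)-\hat f|$ makes sense, and that $s$ need not be monotone. Only the divergence $s(n)\to\infty$ is used, which the argument above respects.

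In the application in the proof of \cref{thm_AGSL}, $s(n)=n-r-m$ with $r,m=o(n)$, so indeed $s(n)\to\infty$.
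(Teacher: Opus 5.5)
Your proof is correct and uses the same $\varepsilon$-argument as the paper: choose a threshold for the convergence of $f$, then a threshold ensuring $s(n)$ exceeds it. The paper only shows $f(s(n))\to\hat f$ and leaves implicit the final triangle-inequality step $|f(n)-f(s(n))|\le|f(n)-\hat f|+|\hat f-f(s(n))|$, which you state explicitly.
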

\begin{proof}
Let $\eps>0$. Since $\lim_{n \to \infty} f(n) = \hat f$ there exists $\hat s \in \N$ such that for all $s\geq \hat s$ we have $|\hat f - f(s(n))| < \eps$. Since $\lim_{n \to \infty} s(n) = \infty$ there exists $\hat n \in \N$ such that for all $n \geq \hat n$ we have $s(n) \geq \hat s$. Hence, for all $n \geq \hat n$ it is true that $|\hat f - f(s(n))| < \eps$ and therefore $\lim_{n \to \infty} f(s(n)) = \hat f$.
\end{proof}

\begin{lemma}[Exponential de Finetti theorem with finite sum] \label{lem_caratheodory} Let $n,k,r\in\N$ such that $k+r \leq n$, and $d=\dim(\cH)$. Let $\rho_n\in \St(\cH^{\otimes n})$ be a permutation-invariant state. Then, there exist families \smash{$\{\rho^{(j,r)}_{{n-k}}\}_{j \in [J]}$} and $\{\tau_j \}_{j \in [J]}$ with \smash{$\rho^{(j,r)}_{{n-k}} \in \St^{n-k}(\cH,\tau_j^{\otimes n-k-r})$}, $\tau_j \in \St(\cH)$, and a probability distribution $\{ p_j \}_{j \in [J]}$ such that
\begin{align} 
\norm{\tr_k[\rho_{n}] - \sum_{j=1}^J p_j \rho^{(j,r)}_{{n-k}} }_1 \leq 3 k^d \ee^{- \frac{k(r+1)}{n}} \, ,
\end{align}
where $J=\binom{n-k+d^2 -1}{n-k}^2 = \poly(n)$.
\end{lemma}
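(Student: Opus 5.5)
The plan is to start from the standard exponential de Finetti theorem of~\cite{renner_phd,renner07} and then replace its integral by a finite convex combination using Carath\'eodory's theorem.

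\textbf{Step 1: de Finetti representation.} For a permutation-invariant $\rho_n$, the exponential de Finetti theorem provides a probability measure $\mu$ on pure states $\ket{\phi}\in\cH\otimes\cH'$, with $\cH'$ a copy of $\cH$, and almost-iid states $\rho^{(\phi,r)}_{n-k}\in\St^{n-k}(\cH,\tau_\phi^{\otimes n-k-r})$. Here $\tau_\phi=\tr_{\cH'}[\proj{\phi}]$. It also gives
\begin{align*}
\Big\|\tr_k[\rho_n]-\int \rho^{(\phi,r)}_{n-k}\,\di\mu(\phi)\Big\|_1\le 3k^d\ee^{-\frac{k(r+1)}{n}}.
\end{align*}
Concretely, one symmetrically purifies $\rho_n$ in $\mathrm{Sym}^n(\cH\otimes\cH')$ and inserts the resolution of identity $\int \proj{\phi}^{\otimes (n-k)}\di\phi$, up to normalization, on $n-k$ copies. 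This yields the mixture of projections of the post-measurement vectors onto $\mathrm{span}\,\cV(\cdot,\ket{\phi}^{\otimes n-k-r})$. The bound and the constant $3k^d$ are taken verbatim from~\cite{renner07}, so that the only new content is finiteness.

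\textbf{Step 2: reduction to a finite-dimensional convex hull.} The integrand $\phi\mapsto\rho^{(\phi,r)}_{n-k}$ does not live in a space of small dimension, so Carath\'eodory's theorem cannot be applied to it directly. Instead I exploit the structure of the de Finetti approximation. The approximating state is the reduction of an operator supported on the symmetric subspace $\mathrm{Sym}^{n-k}(\cH\otimes\cH')$, whose dimension is $\binom{n-k+d^2-1}{n-k}$. Both the exact state $\tr_k[\rho_n]$ and all integrands $\rho^{(\phi,r)}_{n-k}$ are reductions of operators acting on this symmetric subspace, and the map $X\mapsto\tr_{\cH'^{\otimes n-k}}[X]$ is linear. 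The averaged operator $\int \Theta_\phi\,\di\mu(\phi)$, with $\Theta_\phi$ the (normalized) extension of $\rho^{(\phi,r)}_{n-k}$, is therefore a point in the convex hull of the compact set $\{\Theta_\phi\}$. This set lies in the real vector space of Hermitian operators on the symmetric subspace, of dimension $D^2$ with $D=\binom{n-k+d^2-1}{n-k}$.

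\textbf{Step 3: Carath\'eodory.} Carath\'eodory's theorem, in its version for integrals over compact sets (equivalently, barycenters lie in the convex hull, which is closed by compactness and continuity of $\phi\mapsto\Theta_\phi$), expresses the average as $\sum_{j=1}^J p_j\Theta_{\phi_j}$ with $J\le D^2+1$. Using the trace constraint to drop one dimension gives $J=D^2$. Tracing out $\cH'$ then yields $\sum_j p_j\rho^{(j,r)}_{n-k}$ with $\rho^{(j,r)}_{n-k}\in\St^{n-k}(\cH,\tau_j^{\otimes n-k-r})$ and $\tau_j=\tau_{\phi_j}$. The error bound is unchanged because the mixture is identical. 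Since $D=\poly(n)$ for fixed $d$, we get $J=\poly(n)$.

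\textbf{Main obstacle.} The hardest step is Step 2, namely making sure that the de Finetti mixture genuinely lives in this bounded-dimensional space. If the theorem's extensions are not symmetric, I would first symmetrize them with $\mathrm{SYM}_{n-k}$. This is harmless for two reasons. First, the target state is permutation invariant. Second, $\St^{n-k}$ is preserved under symmetrization, because permutation invariance is required in Definition~\ref{def_almost_product_state_mixed}, and by~\cite[Remark~2.4]{MSR26} the set is convex and permutation-closed.
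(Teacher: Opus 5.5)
Your overall architecture is the paper's: symmetric purification of $\rho_n$ into $\mathrm{Sym}^n(\cH\otimes\cH')$, then the exponential de Finetti integral of pure states on $\mathrm{Sym}^{n-k}(\cH\otimes\cH',\ket{\phi}^{\otimes n-k-r})$. Next comes Carath\'eodory in the $(D^2-1)$-dimensional affine space of unit-trace Hermitian operators on $\mathrm{Sym}^{n-k}(\cH\otimes\cH')$, giving $J=D^2$, and finally a partial trace over $\cH'$.

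Your Step 2 is actually the easy part. The paper handles it in one line, because the de Finetti states come from the symmetric purification and so already live on $\mathrm{Sym}^{n-k}(\cH\otimes\cH')$. Your fallback would also not work as stated: $\mathrm{SYM}_{n-k}$ of a pure state generally puts weight outside the symmetric subspace, e.g.\ $\mathrm{SYM}_2(\proj{01})$ has weight $1/2$ on the antisymmetric vector.

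The genuine gap is in Step 3, which is exactly where the paper does its real work. Carath\'eodory needs the integral to lie in the convex hull of a set, not merely in its closure. You obtain this from ``compactness and continuity of $\phi\mapsto\Theta_\phi$'', but the de Finetti theorem, used as a black box, only provides a measurable family. In the standard construction, $\Theta_\phi$ is the normalization of a projected post-measurement vector $P_{\phi,r}(\id\otimes\bra{\phi}^{\otimes k})\ket{\Psi}$. That normalization is undefined, and must be assigned arbitrarily, wherever this vector vanishes. So the image set $\{\Theta_\phi\}_\phi$ has no reason to be closed, and your argument only places the integral in $\overline{\mathrm{conv}}\,\{\Theta_\phi\}_\phi$.

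The paper closes this by enlarging the set to $B^{n-k}_r$: all pure states lying in $\mathrm{Sym}^{n-k}(\cH\otimes\cH,\ket{\theta}^{\otimes n-k-r})$ for \emph{some} unit vector $\ket{\theta}$. It then proves $B^{n-k}_r$ is compact in three moves:
\begin{itemize}
\item it builds a continuous map $\ket{\theta}\mapsto P^{\ket{\theta}}_{\ell,r}$ to the projectors onto these subspaces, via an explicit rotation unitary $U_{\ket{\theta}}$ carrying a reference vector to $\ket{\theta}$;
\item it writes $B^{n-k}_r$ as the coordinate projection of the closed set $\{(\ket{\theta},\proj{\Psi}): P^{\ket{\theta}}_{\ell,r}\ket{\Psi}=\ket{\Psi}\}$ inside a compact product space;
\item it uses that the convex hull of a compact set in finite dimensions is compact.
\end{itemize}

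Alternatively, you can keep your own image set and drop compactness entirely. Use the closure-free fact that the barycenter $b$ of a probability measure concentrated on an arbitrary set $S\subseteq\R^m$ lies in $\mathrm{conv}(S)$. If $b\notin\mathrm{conv}(S)$, take a hyperplane $H$ through $b$ with $\mathrm{conv}(S)$ on one closed side. This forces the measure to be concentrated on $S\cap H$, and induction on dimension gives $b\in\mathrm{conv}(S\cap H)\subseteq\mathrm{conv}(S)$, a contradiction. One of these two arguments has to replace your continuity claim.

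A minor slip in Step 1: the resolution of identity is inserted on the $k$ discarded systems, not on the $n-k$ retained ones.
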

\begin{proof}
For any $n\in \N $ and $\ket{\theta}\in \cH$, let $ \mathrm{Sym}^{n}(\cH, \ket{\theta}^{\otimes n-r}):= \mathrm{Sym}^n(\cH)\cap  \mathrm{span}\,\cV(\cH^{\otimes n}, \ket{\theta}^{\otimes n-r}) $. As shown in~\cite[Lemma 4.2.2.]{renner_phd}, there exists a symmetric purification $\ket{\Phi^{(n)}} \in \mathrm{Sym}^n(\cH \otimes \cH)$ of $\rho_{n}$. The exponential de Finetti theorem~\cite{renner_phd,renner07} then states that there exists a probability measure $\nu$ on the unit sphere $\cB(\cH\otimes \cH)$ and a family \smash{$\{\ket{\Psi^{(n-k)}_{r,\theta}}\}_{\theta}$} of states such that \smash{$\ket{\Psi^{(n-k)}_{r,\theta}} \in \mathrm{Sym}^{n-k}(\cH\otimes \cH, \ket{\theta}^{\otimes n-k-r})$} and
\begin{align} \label{lem_eq-deFinetti-1}
\norm{\tr_{k}[\proj{\Phi^{(n)}}] - \int \proj{\Psi^{(n-k)}_{r,\theta}}  \nu(\di \theta)}_1 \leq 3 k^d \ee^{-\frac{k(r+1)}{n+k}} \, .
\end{align}
Now, consider the set $B^{n-k}_r$ defined by 
\begin{align} \label{eq_setB}
    B^{n-k}_r :=&\, \Big\{\proj{\Psi}\,:\,\exists\, \ket{\theta} \in \cB(\cH\otimes \cH)\ \mathrm{s.t.} \ \ket{\Psi} \in \mathrm{Sym}^{n-k}(\cH\otimes \cH, \ket{\theta}^{\otimes n-k-r}) \Big\} \,.
\end{align}
It turns out (and will be shown below) that $B^{n-k}_r $ is a compact set. Therefore, and since $\cH$ is finite-dimensional, it follows that the convex hull \smash{$\mathrm{conv}(B^{n-k}_r)$} is also compact~\cite[Corollary~2.4]{barvinok2002course}, and consequently~\cite[Proposition~1.2.12]{analysis_banach}
\begin{align}
    \sigma_{n-k} := \int \proj{\Psi^{(n-k)}_{r,\theta}}  \nu(\di \theta)\,\in\,\overline{\mathrm{conv}(B^{n-k}_r)} = \mathrm{conv}(B^{n-k}_r)\,. \label{eq_caratheo1}
\end{align}
On the other hand, we have $B^{n-k}_r \subseteq \mathrm{Herm}(\mathrm{Sym}^{n-k}(\cH \otimes \cH))$ by definition, which further implies that $\mathrm{conv}(B^{n-k}_r) \subseteq \mathrm{Herm}(\mathrm{Sym}^{n-k}(\cH \otimes \cH)) $. For \smash{$D := \dim(\mathrm{Sym}^{n-k}(\cH \otimes \cH)) = \binom{n-k+d^2-1}{n-k}$}, it follows that the space of Hermitian operators on $\mathrm{Sym}^{n-k}(\cH \otimes \cH)$ with unit trace has \emph{real} dimension $D^2-1$. As a consequence, Carath\'{e}odory's theorem then states that $\sigma_{n-k}$ is the convex sum of at most $J =D^2$ elements of $B^{n-k}_r$. This means that
\begin{align} \label{lem_eq-carath}
\sigma_{n-k} \,=\, \sum_{j=1}^J p_j \proj{\tilde{\Psi}^{(n-k)}_{r,\theta_j}}  \quad \textnormal{for} \quad  J = \binom{n-k+d^2-1}{n-k}^2\,,
\end{align}
where $\{ p_j \}_{j \in [J]}$ defines a probability distribution and $\ket{\tilde{\Psi}^{(n-k)}_{r,\theta_j}} \in \mathrm{Sym}^{n-k}(\cH\otimes \cH, \ket{\theta_j}^{\otimes n-k-r})$ for some $\ket{\theta_j} \in \cB(\cH\otimes \cH)$ for all $j\in [J]$. By taking the partial trace over the second copy $\cH$, and due to the contractivity of the trace distance~\cite[Theorem~8.16]{wolf_notes}, we find 
\begin{align}
\norm{\tr_k[\rho_{n}] - \sum_{j=1}^J p_j \rho^{(j,r)}_{{n-k}} }_1 
&\leq \norm{\tr_{k}[\proj{\Phi^{(n)}}] - \sum_{j=1}^J p_j \proj{\tilde{\Psi}^{(n-k)}_{r,\theta_j}} }_1 \\ \overset{\textnormal{\Cshref{lem_eq-deFinetti-1,eq_caratheo1,lem_eq-carath}}}&{\leq} 3 k^d \ee^{-\frac{k(r+1)}{n+k}} \,,
\end{align}
where we have defined $\rho^{(j,r)}_{{n-k}} := \tr_{\cH} \big[\proj{\tilde{\Psi}^{(n-k)}_{r,\theta_j}}\big]$, implying that $\rho^{(j,r)}_{{n-k}} \in \St^{n-k}(\cH,\tau_j^{\otimes n-k-r})$ for $\tau_j := \tr_{\cH}[\proj{\theta_j}]\in \St(\cH)$ for all $j\in[J]$.

It remains to be shown that the set $B^{n-k}_r$ defined in~\cref{eq_setB} is compact. To simplify the notation, let $\cK := \cH \otimes \cH$ and $\ell := n-k$. For any $\ket{\theta}\in \cK$, let $\{\ket{x}\}_{x\in\cX}$ be an orthonormal basis of $\cK$ with $\ket{\bar{x}} = \ket{\theta}$ for some $\bar{x}\in \cX$. For any type $Q \in \cP_\ell(\cX)$ on $\cX$ with denominator $\ell$, we define the normalized state
\begin{align} \label{eq_big_Theta}
    \ket{\Theta^Q_\ell}\,:=\,\frac{1}{\sqrt{|\cT_\ell^Q|}} \sum_{(x_1,\dots,x_\ell) \in \cT_\ell^Q} \ket{x_1}\otimes \cdots \otimes \ket{x_\ell}\,,
\end{align}
where $\cT_\ell^Q$ denotes the type class of $Q$. In~\cite[Lemma 4.1.5.]{renner_phd}, it was shown that $\{\ket{\Theta^Q_\ell}\}_{Q\,:\, \ell Q(\bar{x}) \geq \ell-r}$ is an orthonormal basis of \smash{$ \mathrm{Sym}^{\ell}(\cK, \ket{\theta}^{\otimes \ell-r})$}. This implies that
\begin{align} \label{lem_eq-sym-proj}
    P_{\ell,r}^{\ket{\theta}} \,:=\, \sum_{{Q\,:\, \ell Q(\bar{x}) \geq \ell-r}} \proj{\Theta^Q_\ell}
\end{align}
is an orthogonal projector onto \smash{$ \mathrm{Sym}^{\ell}(\cK, \ket{\theta}^{\otimes \ell-r})$}. Next, we argue that there exists a mapping \smash{$\ket{\theta}\mapsto P_{\ell,r}^{\ket{\theta}}$} that is continuous. To this end, it suffices to construct a continuous mapping $\ket{\theta}\mapsto \ket{\Theta^Q_\ell}$. For this purpose, let $\ket{\xi}\in \cK$ denote some fixed (reference) state and let \smash{$\{\ket{x}\}_{x\in\cX}$} be an orthonormal basis of $\cK$ corresponding to $\ket{\xi}$ as defined above. Furthermore, let \smash{$Q \in \cP_\ell(\cX)$} and let \smash{$ \ket{\Xi^Q_\ell}$} be the normalized state defined in~\cref{eq_big_Theta} corresponding to $\ket{\xi}$. Now, let $\ket{\theta} \in \cK$ be arbitrary. The idea is to construct a unitary that maps the orthonormal basis \smash{$\{\ket{x}\}_{x\in\cX}$} corresponding to $\ket{\xi}$ to a new orthonormal basis corresponding to $\ket{\theta}$ in a continuous way. To do so, first note that since \smash{$ \mathrm{Sym}^{\ell}(\cK, \ket{\theta}^{\otimes \ell-r}) = \mathrm{Sym}^{\ell}(\cK, \ket{\theta'}^{\otimes \ell-r})$} for any $\ket{\theta'} = \ee^{\ci \alpha} \ket{\theta}$ with $\alpha \in \R$, we may assume that $\spr{\xi}{\theta} \geq 0 $ without loss of generality. Then, we define two orthonormal states
\begin{align} \label{lem_onb-elements}
   \ket{e_1} :=  \ket{\xi}\,,\quad \ket{e^\theta_2} := \frac{\ket{\theta} - \spr{\xi}{\theta}\ket{\xi}}{\sqrt{1-|\spr{\xi}{\theta}|^2}} \quad \implies \quad \spr{e_1}{e_1} = \spr{e^\theta_2}{e^\theta_2} = 1\,,\ \spr{e_1}{e^\theta_2} = 0\,,
\end{align}
and, setting $\cos(\varphi) := \spr{\xi}{\theta}$, we construct a unitary
\begin{align} \label{lem_unitary-def}
U_{\ket{\theta}} := \underbrace{\id - \proj{e_1} - \proj{e^\theta_2}}_{=:\, P^\theta_\perp} \,+\, \underbrace{\cos(\varphi)\big(\proj{e_1} + \proj{e^\theta_2}\big) + \sin(\varphi)\big(\ket{e^\theta_2}\!\bra{e_1} - \ket{e_1}\!\bra{e^\theta_2} \big)}_{=: \, U^\theta_{2D}}\,.
\end{align}
Here, $P^\theta_\perp$ denotes a projector onto the space $\mathrm{span}\{\ket{\xi},\ket{\theta}\}^\perp$ and $U^\theta_{2D}$ corresponds to a rotation in the two-dimensional space $\mathrm{span}\{\ket{\xi},\ket{\theta}\}$. In particular, $U_{\ket{\theta}}$ is a unitary since $\smash{U_{\ket{\theta}}^\dagger U_{\ket{\theta}} = U_{\ket{\theta}} U_{\ket{\theta}}^\dagger} = P^\theta_\perp + \id_{2D}  = \id$, and it satisfies $U_{\ket{\theta}} \ket{\xi} = \ket{\theta}$. By further inserting~\cref{lem_onb-elements} into~\cref{lem_unitary-def}, the expression for the unitary $U_{\ket{\theta}}$ simplifies (after a straightforward calculation) to
\begin{align}
U_{\ket{\theta}} = \id +\big(\ket{\theta}\!\bra{\xi} -\ket{\xi}\!\bra{\theta}  \big) + \frac{1}{1+\spr{\xi}{\theta}}\big(\ket{\theta}\!\bra{\xi} -\ket{\xi}\!\bra{\theta}  \big)^2\,.
\end{align}
In particular, $U_{\ket{\theta}}$ is now well-defined for all $\ket{\theta}$ such that $\spr{\xi}{\theta} \geq 0$, and the map $\ket{\theta}\mapsto U_{\ket{\theta}}$ is clearly continuous since complex conjugation and the inner product are continuous operations. With that, we define the map
\begin{align}
 \ket{\theta}\,\mapsto \,\ket{\Theta^Q_\ell} := U_{\ket{\theta}}^{\otimes \ell}  \ket{\Xi^Q_\ell} \,,
\end{align}
where \smash{$\ket{\Theta^Q_\ell} \in  \mathrm{Sym}^{\ell}(\cK, \ket{\theta}^{\otimes \ell-r}) $} since \smash{$U_{\ket{\theta}} \ket{\xi} = \ket{\theta}$}. Taking the tensor product and applying a linear map to a vector are continuous operations, implying that the map \smash{$ \ket{\theta}\mapsto \ket{\Theta^Q_\ell}$} is continuous as desired. Finally, note that \smash{$\{\ket{\Theta^Q_\ell}\}_{Q\,:\, \ell Q(\bar{x}) \geq \ell-r}$} is an orthonormal basis of \smash{$ \mathrm{Sym}^{\ell}(\cK, \ket{\theta}^{\otimes \ell-r})$} because \smash{$U_{\ket{\theta}}$} is unitary. This shows that $ P_{\ell,r}^{\ket{\theta}} $, as defined in~\cref{lem_eq-sym-proj}, is an orthogonal projector onto \smash{$ \mathrm{Sym}^{\ell}(\cK, \ket{\theta}^{\otimes \ell-r})$}, and the resulting map \smash{$\ket{\theta}\mapsto  P_{\ell,r}^{\ket{\theta}}  $} is continuous by construction.

To finish the proof, we will use this continuity to show that $B^\ell_r$ is compact. In fact, the set $B^\ell_r$ defined in~\cref{eq_setB} can be rewritten as
\begin{align}
    B^\ell_r \,=\,\Big\{\proj{\Psi} \in \St(\cK^{\otimes \ell}) \,:\,\exists\, \ket{\theta} \in \cB(\cK)\ \mathrm{s.t.} \ P_{\ell,r}^{\ket{\theta}}\ket{\Psi} = \ket{\Psi} \Big\}\,.
\end{align}
Consider the product space \smash{$X^\ell := K_\theta \times K_\Psi $} where \smash{$K_\theta = \cB(\cK) $} is the unit sphere in $\cK$ and $K_\Psi$ denotes the set of all pure states $\proj{\Psi}$ in $\St(\cK^{\otimes \ell})$. As both sets $K_\theta$ and $K_\Psi $ are compact (since $\cK$ is finite dimensional), also $X^\ell$ is compact~\cite[Theorem 26.7]{munkres_book}. Furthermore, consider the function
\begin{align}
    F^\ell_r : X^\ell \to \R\,,\quad \big(\ket{\theta},\proj{\Psi} \big) \mapsto \norm{\big(\id - P_{\ell,r}^{\ket{\theta}}\big)\ket{\Psi}}^2_2 = 1 - \bra{\Psi}P_{\ell,r}^{\ket{\theta}}\ket{\Psi}\,, 
\end{align}
which is continuous since the maps \smash{$\ket{\theta}\mapsto P_{\ell,r}^{\ket{\theta}}$} and $(P, \ket{\Psi}) \mapsto P\ket{\Psi}$ are both continuous~\cite[Theorem 19.6]{munkres_book}. Therefore, the set 
\begin{align}
    Z^\ell \,:=\, (F^\ell_r )^{-1}\big(\{0\} \big) \,=\,\Big\{\big(\ket{\theta},\proj{\Psi} \big) \in X^\ell \,:\, P_{\ell,r}^{\ket{\theta}}\ket{\Psi} = \ket{\Psi} \Big\}
\end{align}
is a closed subset $Z^\ell \subseteq X^\ell$, and since $X^\ell$ is compact, $Z^\ell$ is also compact~\cite[Theorem 26.2]{munkres_book}. But since $B^\ell_r = \pi_{\Psi}(Z^\ell)$ where $\pi_{\Psi}$ is the projection of $Z^\ell$ onto the second component (which is also continuous), we find that $B^\ell_r$ is compact~\cite[Theorem 26.5]{munkres_book}.

\end{proof}

\bibliographystyle{arxiv_no_month}
\bibliography{bibliofile}

\end{document}

%% file: iid_HT.tex
\tikzset{meter/.append style={draw, inner sep=10, rectangle, font=\vphantom{A}, minimum width=30, line width=.8, white,
 path picture={\draw[black] ([shift={(.1,.3)}]path picture bounding box.south west) to[bend left=50] ([shift={(-.1,.3)}]path picture bounding box.south east);\draw[black,-latex] ([shift={(0,.1)}]path picture bounding box.south) -- ([shift={(.3,-.1)}]path picture bounding box.north);}}}

\begin{tikzpicture}
\def \x{0.55};
\node[rotate=90,cloud, draw,cloud puffs=10,cloud puff arc=120, aspect=2, inner ysep=1.0em] at (0,0) {};
\draw[Thistle,fill=Thistle] (0,+0.4) circle (0.35);
 \node at (0,0.4) {$\rho^{\otimes n}$};
\draw[BlueGreen,fill=BlueGreen] (0,-0.4) circle (0.35);
 \node at (0,-0.4) {$\sigma^{\otimes n}$}; 
 \node at (0,1.4) {iid source};

\node[meter,scale=0.65] at (3+\x,0) {};
 \draw[thick] (1,0.3) rectangle (3.34+\x,-0.3);
\node at (1.7+0.45,0) {$\{M_n, \id-M_n \}$};
\node at (0.5+0.5*3.34+0.5*\x,0.6) {measurement};

\draw[-latex] (0.3,0.3) -- (1,0.1);
\draw[-latex] (0.3,-0.3) -- (1,-0.1);
\draw (3.35+\x,0.03) -- (3.6+\x,0.03);
\draw (3.35+\x,-0.03) -- (3.6+\x,-0.03);

\node at (4.5+\x,0.4) {guess};
\draw[Thistle,fill=Thistle] (4.0+\x,0) circle (0.25);
 \node at (4.0+\x,0) {$\rho$};
\node at (4.5+\x,0) {or};
\draw[BlueGreen,fill=BlueGreen] (5+\x,0) circle (0.25);
 \node at (5+\x,0) {$\sigma$}; 
  
\end{tikzpicture}

%% file: non_iid_HT.tex
\tikzset{meter/.append style={draw, inner sep=10, rectangle, font=\vphantom{A}, minimum width=30, line width=.8, white,
 path picture={\draw[black] ([shift={(.1,.3)}]path picture bounding box.south west) to[bend left=50] ([shift={(-.1,.3)}]path picture bounding box.south east);\draw[black,-latex] ([shift={(0,.1)}]path picture bounding box.south) -- ([shift={(.3,-.1)}]path picture bounding box.north);}}}

\begin{tikzpicture}
\def \x{0.55};
\node[rotate=90,cloud, draw,cloud puffs=10,cloud puff arc=120, aspect=2, inner ysep=1.0em] at (0,0) {};
\draw[Thistle!50,fill=Thistle!50] (0,+0.4) circle (0.35);
 \node at (0,0.4) {$\rho_{n}$};
\draw[BlueGreen!30,fill=BlueGreen!30] (0,-0.4) circle (0.35);
 \node at (0,-0.4) {$\sigma_{n}$}; 
 \node at (0.5,1.4) {almost-iid source};

\node[meter,scale=0.65] at (3+\x,0) {};
 \draw[thick] (1,0.3) rectangle (3.34+\x,-0.3);
\node at (1.7+0.45,0) {$\{M_n, \id-M_n \}$};
\node at (0.5+0.5*3.34+0.5*\x,0.6) {measurement};

\draw[-latex] (0.3,0.3) -- (1,0.1);
\draw[-latex] (0.3,-0.3) -- (1,-0.1);
\draw (3.35+\x,0.03) -- (3.6+\x,0.03);
\draw (3.35+\x,-0.03) -- (3.6+\x,-0.03);

\node at (4.5+\x,0.4) {guess};
\draw[Thistle,fill=Thistle] (4.0+\x,0) circle (0.25);
 \node at (4.0+\x,0) {$\rho$};
\node at (4.5+\x,0) {or};
\draw[BlueGreen,fill=BlueGreen] (5+\x,0) circle (0.25);
 \node at (5+\x,0) {$\sigma$}; 
  
\end{tikzpicture}


 


  

%% file: iid_HT_ET.tex
\tikzset{meter/.append style={draw, inner sep=10, rectangle, font=\vphantom{A}, minimum width=30, line width=.8, white,
 path picture={\draw[black] ([shift={(.1,.3)}]path picture bounding box.south west) to[bend left=50] ([shift={(-.1,.3)}]path picture bounding box.south east);\draw[black,-latex] ([shift={(0,.1)}]path picture bounding box.south) -- ([shift={(.3,-.1)}]path picture bounding box.north);}}}

\begin{tikzpicture}
\def \x{0.55};
\node[rotate=90,cloud, draw,cloud puffs=10,cloud puff arc=120, aspect=2, inner ysep=1.0em] at (0,0) {};
\draw[Thistle,fill=Thistle] (0,+0.4) circle (0.35);
 \node at (0,0.4) {$\rho_{AB}^{\otimes n}$};
\draw[BlueGreen!30,fill=BlueGreen!30] (0,-0.4) circle (0.35);
 \node at (0,-0.4) {$\mathrm{SEP}$}; 
 \node at (0,1.4) {source};

\node[meter,scale=0.65] at (3+\x,0) {};
 \draw[thick] (1,0.3) rectangle (3.34+\x,-0.3);
\node at (1.7+0.45,0) {$\{M_n, \id-M_n \}$};
\node at (0.5+0.5*3.34+0.5*\x,0.6) {measurement};

\draw[-latex] (0.3,0.3) -- (1,0.1);
\draw[-latex] (0.3,-0.3) -- (1,-0.1);
\draw (3.35+\x,0.03) -- (3.6+\x,0.03);
\draw (3.35+\x,-0.03) -- (3.6+\x,-0.03);

\node at (4.7+\x,0.5) {guess};
\draw[Thistle,fill=Thistle] (4.1+\x,0) circle (0.35);
 \node at (4.1+\x,0) {$\rho_{AB}$};
\node at (4.7+\x,0) {or};
\draw[BlueGreen!30,fill=BlueGreen!30] (5.3+\x,0) circle (0.35);
 \node at (5.3+\x,0) {$\mathrm{SEP}$}; 
  
\end{tikzpicture}


 


  

%% file: non_iid_HT_ET.tex
\tikzset{meter/.append style={draw, inner sep=10, rectangle, font=\vphantom{A}, minimum width=30, line width=.8, white,
 path picture={\draw[black] ([shift={(.1,.3)}]path picture bounding box.south west) to[bend left=50] ([shift={(-.1,.3)}]path picture bounding box.south east);\draw[black,-latex] ([shift={(0,.1)}]path picture bounding box.south) -- ([shift={(.3,-.1)}]path picture bounding box.north);}}}

\begin{tikzpicture}
\def \x{0.55};
\node[rotate=90,cloud, draw,cloud puffs=10,cloud puff arc=120, aspect=2, inner ysep=1.0em] at (0,0) {};
\draw[Thistle!50,fill=Thistle!50] (0,+0.4) circle (0.35);
 \node at (0,0.4) {$\rho_n$};
\draw[BlueGreen!30,fill=BlueGreen!30] (0,-0.4) circle (0.35);
 \node at (0,-0.4) {$\mathrm{SEP}$}; 
 \node at (0,1.4) {source};

\node[meter,scale=0.65] at (3+\x,0) {};
 \draw[thick] (1,0.3) rectangle (3.34+\x,-0.3);
\node at (1.7+0.45,0) {$\{M_n, \id-M_n \}$};
\node at (0.5+0.5*3.34+0.5*\x,0.6) {measurement};

\draw[-latex] (0.3,0.3) -- (1,0.1);
\draw[-latex] (0.3,-0.3) -- (1,-0.1);
\draw (3.35+\x,0.03) -- (3.6+\x,0.03);
\draw (3.35+\x,-0.03) -- (3.6+\x,-0.03);

\node at (4.7+\x,0.5) {guess};
\draw[Thistle,fill=Thistle] (4.1+\x,0) circle (0.35);
 \node at (4.1+\x,0) {$\rho_{AB}$};
\node at (4.7+\x,0) {or};
\draw[BlueGreen!30,fill=BlueGreen!30] (5.3+\x,0) circle (0.35);
 \node at (5.3+\x,0) {$\mathrm{SEP}$}; 
  
\end{tikzpicture}


 


  

%% file: overview_results.tex
\begin{tikzpicture}
\def \xrec{6.2};
\def \yrec{6};

\def \xsrec{5.9};
\def \ysrec{1.5};

\def \xspace{9} 

\def \xs{0.15}; 
\def \ys{0.15}; 

\def \yFontBox{0.29}; 
\def \yFontInBox{0.48}; 

\def \xarrow{0.8}; 

\fill[teal!5,rounded corners] (0,0) rectangle (\xrec,\yrec);
\fill[blue!5,rounded corners] (\xspace,0) rectangle (\xspace+\xrec,\yrec);

\fill[gray!15,rounded corners] (\xs,\ys) rectangle (\xs+\xsrec,\ys+\ysrec);
\node at (\xs+1/2*\xsrec,\ys+\ysrec-\yFontBox) {\small{\textbf{Robust Stein's lemma} (\Cshref{eq_ASL})}}; 
\node at (\xs+1/2*\xsrec,\ys+\ysrec-\yFontBox-\yFontInBox) {\small{$\rho^{\textnormal{almost-iid}}$ vs.~$\sigma^{\textnormal{almost-iid}}$}};
\node at (\xs+1/2*\xsrec,\ys+\ysrec-\yFontBox-2*\yFontInBox) {\small{\cref{thm_ASL}}};

\fill[gray!15,rounded corners] (\xs,\yrec-\ysrec-\ys) rectangle (\xs+\xsrec,\yrec-\ys);
\node at (\xs+1/2*\xsrec,\yrec-\ys-\yFontBox) {\small{\textbf{Stein's lemma} (\Cshref{eq_SL})}}; 
\node at (\xs+1/2*\xsrec,\yrec-\ys-\yFontBox-\yFontInBox) {\small{$\rho^{\textnormal{iid}}$ vs.~$\sigma^{\textnormal{iid}}$}};
\node at (\xs+1/2*\xsrec,\yrec-\ys-\yFontBox-2*\yFontInBox) {\small{\cite{PH91,ogawa00}}};

\fill[gray!15,rounded corners] (\xspace+\xs,\ys) rectangle (\xspace+\xs+\xsrec,\ys+\ysrec);
\node at (\xspace+\xs+1/2*\xsrec,\ys+\ysrec-\yFontBox) {\small{\textbf{Robust gen.~Stein's lemma} (\Cshref{eq_AGSL})}}; 
\node at (\xspace+\xs+1/2*\xsrec,\ys+\ysrec-\yFontBox-\yFontInBox) {\small{$\rho^{\textnormal{almost-iid}}$ vs.~$\mathrm{SEP}$}};
\node at (\xspace+\xs+1/2*\xsrec,\ys+\ysrec-\yFontBox-2*\yFontInBox) {\small{\cref{thm_AGSL}}};

\fill[gray!15,rounded corners] (\xspace+\xs,\yrec-\ysrec-\ys) rectangle (\xspace+\xs+\xsrec,\yrec-\ys);
\node at (\xspace+\xs+1/2*\xsrec,\yrec-\ys-\yFontBox) {\small{\textbf{Generalized Stein's lemma} (\Cshref{eq_generalized_quantum_stein_intro})}}; 
\node at (\xspace+\xs+1/2*\xsrec,\yrec-\ys-\yFontBox-\yFontInBox) {\small{$\rho^{\textnormal{iid}}$ vs.~$\mathrm{SEP}$}};
\node at (\xspace+\xs+1/2*\xsrec,\yrec-\ys-\yFontBox-2*\yFontInBox) {\small{\cite{brandao_Stein_10,haya_stein_25,ludo25}}}; 

\draw[very thick,-latex] (\xarrow,\ys+\ysrec+0.5*\yFontBox) -- (\xarrow,\yrec-\ysrec-\ys-0.5*\yFontBox);
\node[rotate=90] at (\xarrow-0.2,0.5*\yrec-0.15) {\footnotesize{special case}};

\draw[very thick,-latex] (\xspace+\xrec-\xarrow,\ys+\ysrec+0.5*\yFontBox) -- (\xspace+\xrec-\xarrow,\yrec-\ysrec-\ys-0.5*\yFontBox);
\node[rotate=-90] at (\xspace+\xrec-\xarrow+0.2,0.5*\yrec-0.15) {\footnotesize{special case}};

\draw[very thick,-latex] (\xspace-0.2*\xarrow,\yrec-0.5*\ysrec-\ys) -- (\xrec+0.2*\xarrow,\yrec-0.5*\ysrec-\ys);
 \node at (0.5*\xspace+0.5*\xrec,\yrec-0.5*\ysrec-\ys+0.8*\yFontBox) {\footnotesize{special case}};

\draw[very thick,-latex,dashed] (\xspace-0.2*\xarrow,\ys+0.5*\ysrec) -- (\xrec+0.2*\xarrow,\ys+0.5*\ysrec);

\draw[very thick,-latex,dashed] (0.5*\xspace+0.5*\xrec+1.3,0.5*\yrec -0.15) -- (0.5*\xspace+0.5*\xrec+2.5,0.5*\yrec -0.15);

\draw[very thick,-latex,dashed] (0.5*\xspace+0.5*\xrec-1.3,0.5*\yrec -0.15) -- (0.5*\xspace+0.5*\xrec-2.5,0.5*\yrec -0.15);

\node[teal] at (0.45*\xrec,0.5*\yrec) {\small{error}};
\node[teal] at (0.45*\xrec,0.5*\yrec-0.3) {\small{exponent}};
\node[teal] at (0.45*\xrec+1.37,0.5*\yrec-0.15) {\small{$D(\rho \| \sigma)$}};

\node[BlueViolet] at (\xspace+2,0.5*\yrec) {\small{error}};
\node[BlueViolet] at (\xspace+2,0.5*\yrec-0.3) {\small{exponent}};
\node[BlueViolet] at (\xspace+2+1.8,0.5*\yrec-0.15) {\small{$D^{\reg}(\rho \| \mathrm{SEP})$}};

\fill[magenta!10] (0.5*\xspace+0.5*\xrec, 0.5*\yrec -0.15) ellipse (1.2cm and 0.8cm);
\node[magenta] at (0.5*\xspace+0.5*\xrec, 0.5*\yrec +0.2) {\small{Wasserstein}};
\node[magenta] at (0.5*\xspace+0.5*\xrec, 0.5*\yrec -0.15) {\small{continuity}};
\node[magenta] at (0.5*\xspace+0.5*\xrec, 0.5*\yrec -0.5) {\small{\Cshref{prop_continuity_ER_Wasserstein}}};
\end{tikzpicture}

%% file: main.bbl
\begin{thebibliography}{10}

\bibitem{anshu19}
A.~Anshu, M.~Berta, R.~Jain, and M.~Tomamichel.
\newblock {A minimax approach to one-shot entropy inequalities}.
\newblock {\em Journal of Mathematical Physics}, 60(12):122201, 2019.
\newblock
  \texttt{\href{http://dx.doi.org/10.1063/1.5126723}{DOI:\,10.1063/1.5126723}}.

\bibitem{barvinok2002course}
A.~Barvinok.
\newblock {\em A Course in Convexity}.
\newblock Graduate studies in mathematics. American Mathematical Society, 2002.
\newblock
  \texttt{\href{http://dx.doi.org/10.1090/gsm/054}{DOI:\,10.1090/gsm/054}}.

\bibitem{Berta2023gapinproofof}
M.~Berta, F.~G. S.~L. Brand{\~{a}}o, G.~Gour, L.~Lami, M.~B. Plenio, B.~Regula,
  and M.~Tomamichel.
\newblock On a gap in the proof of the generalised quantum {S}tein's lemma and
  its consequences for the reversibility of quantum resources.
\newblock {\em {Quantum}}, 7:1103, 2023.
\newblock
  \texttt{\href{http://dx.doi.org/10.22331/q-2023-09-07-1103}{DOI:\,10.22331/q-2023-09-07-1103}}.

\bibitem{stein_nature_24}
M.~Berta, F.~G. S.~L. Brand{\~a}o, G.~Gour, L.~Lami, M.~B. Plenio, B.~Regula,
  and M.~Tomamichel.
\newblock The tangled state of quantum hypothesis testing.
\newblock {\em Nature Physics}, 20(2):172--175, 2024.
\newblock
  \texttt{\href{http://dx.doi.org/10.1038/s41567-023-02289-9}{DOI:\,10.1038/s41567-023-02289-9}}.

\bibitem{brandao_talk}
F.~G. S.~L. Brand{\~a}o.
\newblock A reversible theory of resources for almost-iid states, 2023.
\newblock Available online:
  \url{https://youtu.be/Um1_7qeA0Uo?si=CvWr6aJ7nGcEFv4v}.
\newblock Talk at the quantum resources conference, Singapore.

\bibitem{brandao_Stein_10}
F.~G. S.~L. Brand{\~a}o and M.~B. Plenio.
\newblock A generalization of quantum {S}tein's lemma.
\newblock {\em Communications in Mathematical Physics}, 295(3):791--828, 2010.
\newblock
  \texttt{\href{http://dx.doi.org/10.1007/s00220-010-1005-z}{DOI:\,10.1007/s00220-010-1005-z}}.

\bibitem{buscemi10}
F.~{Buscemi} and N.~{Datta}.
\newblock The quantum capacity of channels with arbitrarily correlated noise.
\newblock {\em IEEE Transactions on Information Theory}, 56(3):1447--1460,
  2010.
\newblock
  \texttt{\href{http://dx.doi.org/10.1109/TIT.2009.2039166}{DOI:\,10.1109/TIT.2009.2039166}}.

\bibitem{Christandl2017}
M.~Christandl and A.~M{\"u}ller-Hermes.
\newblock Relative entropy bounds on quantum, private and repeater capacities.
\newblock {\em Communications in Mathematical Physics}, 353(2):821--852, 2017.
\newblock
  \texttt{\href{http://dx.doi.org/10.1007/s00220-017-2885-y}{DOI:\,10.1007/s00220-017-2885-y}}.

\bibitem{CW04}
M.~Christandl and A.~Winter.
\newblock “{S}quashed entanglement”: An additive entanglement measure.
\newblock {\em Journal of Mathematical Physics}, 45(3):829--840, 2004.
\newblock
  \texttt{\href{http://dx.doi.org/10.1063/1.1643788}{DOI:\,10.1063/1.1643788}}.

\bibitem{datta09}
N.~Datta.
\newblock Min- and max-relative entropies and a new entanglement monotone.
\newblock {\em IEEE Transactions on Information Theory}, 55(6):2816--2826,
  2009.
\newblock
  \texttt{\href{http://dx.doi.org/10.1109/TIT.2009.2018325}{DOI:\,10.1109/TIT.2009.2018325}}.

\bibitem{PMTL21}
G.~De~Palma, M.~Marvian, D.~Trevisan, and S.~Lloyd.
\newblock The quantum {W}asserstein distance of order 1.
\newblock {\em IEEE Transactions on Information Theory}, pages 1--1, 2021.
\newblock
  \texttt{\href{http://dx.doi.org/10.1109/TIT.2021.3076442}{DOI:\,10.1109/TIT.2021.3076442}}.

\bibitem{PT_23}
G.~De~Palma and D.~Trevisan.
\newblock The {W}asserstein distance of order 1 for quantum spin systems on
  infinite lattices.
\newblock {\em Annales Henri Poincar{\'e}}, 24(12):4237--4282, 2023.
\newblock
  \texttt{\href{http://dx.doi.org/10.1007/s00023-023-01340-y}{DOI:\,10.1007/s00023-023-01340-y}}.

\bibitem{MHR02}
M.~J. Donald, M.~Horodecki, and O.~Rudolph.
\newblock The uniqueness theorem for entanglement measures.
\newblock {\em Journal of Mathematical Physics}, 43(9):4252--4272, 2002.
\newblock
  \texttt{\href{http://dx.doi.org/10.1063/1.1495917}{DOI:\,10.1063/1.1495917}}.

\bibitem{dupuis12}
F.~Dupuis, L.~Kraemer, P.~Faist, J.~M. Renes, and R.~Renner.
\newblock {Generalized Entropies}.
\newblock In {\em Proc. XVIIth International Congress on Mathematical Physics},
  pages 134--153, Aalborg, Denmark, 2012.
\newblock
  \texttt{\href{http://dx.doi.org/10.1142/9789814449243{\_}0008}{DOI:\,10.1142/9789814449243{\_}0008}}.

\bibitem{FFF25}
K.~Fang, H.~Fawzi, and O.~Fawzi.
\newblock Generalized quantum asymptotic equipartition, 2025.
\newblock Available online: \url{https://arxiv.org/abs/2411.04035}.

\bibitem{fekete23}
M.~Fekete.
\newblock {\"U}ber die {V}erteilung der {W}urzeln bei gewissen algebraischen
  {G}leichungen mit ganzzahligen {K}oeffizienten.
\newblock {\em Mathematische Zeitschrift}, 17(1):228--249, 1923.

\bibitem{Giacomo_private}
F.~Girardi, G.~D. Palma, and L.~Lami.
\newblock New approaches to almost i.i.d.~information theory, 2026.
\newblock Available online: \url{https://arxiv.org/abs/2605.15114}.

\bibitem{hastings09}
M.~B. Hastings.
\newblock Superadditivity of communication capacity using entangled inputs.
\newblock {\em Nature Physics}, 5(4):255--257, 2009.
\newblock
  \texttt{\href{http://dx.doi.org/10.1038/nphys1224}{DOI:\,10.1038/nphys1224}}.

\bibitem{haya_stein_25}
M.~Hayashi and H.~Yamasaki.
\newblock The generalized quantum {S}tein's lemma and the second law of quantum
  resource theories.
\newblock {\em Nature Physics}, 21(12):1988--1993, 2025.
\newblock
  \texttt{\href{http://dx.doi.org/10.1038/s41567-025-03047-9}{DOI:\,10.1038/s41567-025-03047-9}}.

\bibitem{PH91}
F.~Hiai and D.~Petz.
\newblock The proper formula for relative entropy and its asymptotics in
  quantum probability.
\newblock {\em Communications in Mathematical Physics}, 143(1):99--114, 1991.
\newblock
  \texttt{\href{http://dx.doi.org/10.1007/BF02100287}{DOI:\,10.1007/BF02100287}}.

\bibitem{analysis_banach}
T.~Hyt\"onen, J.~van Neerven, M.~Veraar, and L.~Weis.
\newblock {\em Analysis in Banach Spaces}.
\newblock Volume I: Martingales and Littlewood-Paley Theory. Springer, 2016.
\newblock
  \texttt{\href{http://dx.doi.org/10.1007/978-3-319-48520-1}{DOI:\,10.1007/978-3-319-48520-1}}.

\bibitem{Ludo25_2}
L.~Lami.
\newblock A doubly composite {C}hernoff-{S}tein lemma and its applications,
  2025.
\newblock Available online: \url{https://arxiv.org/abs/2510.06342}.

\bibitem{ludo25}
L.~Lami.
\newblock A solution of the generalized quantum {S}tein’s lemma.
\newblock {\em IEEE Transactions on Information Theory}, 71(6):4454--4484,
  2025.
\newblock
  \texttt{\href{http://dx.doi.org/10.1109/TIT.2025.3543610}{DOI:\,10.1109/TIT.2025.3543610}}.

\bibitem{LRT26}
L.~Lami, B.~Regula, and R.~Takagi.
\newblock Universal quantum resource distillation via composite generalised
  quantum {S}tein's lemma, 2026.
\newblock Available online: \url{https://arxiv.org/abs/2605.15174}.

\bibitem{MSR26}
G.~Mazzola, D.~Sutter, and R.~Renner.
\newblock Almost-iid information theory, 2026.
\newblock Available online: \url{https://arxiv.org/abs/2603.15792}.

\bibitem{MLDSFT13}
M.~M\"{u}ller-Lennert, F.~Dupuis, O.~Szehr, S.~Fehr, and M.~Tomamichel.
\newblock On quantum {R}\'enyi entropies: A new generalization and some
  properties.
\newblock {\em Journal of Mathematical Physics}, 54(12), 2013.
\newblock
  \texttt{\href{http://dx.doi.org/http://dx.doi.org/10.1063/1.4838856}{DOI:\,http://dx.doi.org/10.1063/1.4838856}}.

\bibitem{munkres_book}
J.~R. Munkres.
\newblock {\em Topology}.
\newblock Prentice Hall, second edition, 2000.

\bibitem{ogawa00}
T.~{Ogawa} and H.~{Nagaoka}.
\newblock Strong converse and {S}tein's lemma in quantum hypothesis testing.
\newblock {\em IEEE Transactions on Information Theory}, 46(7):2428--2433,
  2000.
\newblock
  \texttt{\href{http://dx.doi.org/10.1109/18.887855}{DOI:\,10.1109/18.887855}}.

\bibitem{piani09}
M.~Piani.
\newblock Relative entropy of entanglement and restricted measurements.
\newblock {\em Phys. Rev. Lett.}, 103:160504, 2009.
\newblock
  \texttt{\href{http://dx.doi.org/10.1103/PhysRevLett.103.160504}{DOI:\,10.1103/PhysRevLett.103.160504}}.

\bibitem{BLD26}
B.~Regula, L.~Lami, and N.~Datta.
\newblock Tight relations and equivalences between smooth relative entropies.
\newblock {\em IEEE Transactions on Information Theory}, 72(5):3051--3073,
  2026.
\newblock
  \texttt{\href{http://dx.doi.org/10.1109/TIT.2026.3661711}{DOI:\,10.1109/TIT.2026.3661711}}.

\bibitem{renner_phd}
R.~Renner.
\newblock Security of quantum key distribution.
\newblock {\em PhD thesis, ETH Zurich}, 2005.
\newblock available at
  \texttt{arXiv:\href{http://arxiv.org/abs/quant-ph/0512258}{quant-ph/0512258}}.

\bibitem{renner07}
R.~Renner.
\newblock Symmetry of large physical systems implies independence of
  subsystems.
\newblock {\em Nature Physics}, 3(9):pp. 645--649, 2007.
\newblock Available online:
  \url{http://www.nature.com/nphys/journal/v3/n9/suppinfo/nphys684_S1.html}.

\bibitem{Sutter_book}
D.~Sutter.
\newblock {\em Approximate Quantum Markov Chains}, pages 75--100.
\newblock Springer International Publishing, Cham, 2018.
\newblock
  \texttt{\href{http://dx.doi.org/10.1007/978-3-319-78732-9\_5}{DOI:\,10.1007/978-3-319-78732-9\_5}}.

\bibitem{marco_book}
M.~Tomamichel.
\newblock {\em Quantum Information Processing with Finite Resources}, volume~5
  of {\em SpringerBriefs in Mathematical Physics}.
\newblock Springer, 2015.
\newblock
  \texttt{\href{http://dx.doi.org/10.1007/978-3-319-21891-5}{DOI:\,10.1007/978-3-319-21891-5}}.
\newblock See \url{https://arxiv.org/abs/1504.00233} for the precise
  references.

\bibitem{Uhl76}
A.~Uhlmann.
\newblock The ``transition probability'' in the state space of a {*}-algebra.
\newblock {\em Reports on Mathematical Physics}, 9(2):273 -- 279, 1976.
\newblock
  \texttt{\href{http://dx.doi.org/10.1016/0034-4877(76)90060-4}{DOI:\,10.1016/0034-4877(76)90060-4}}.

\bibitem{VPRK97}
V.~Vedral, M.~B. Plenio, M.~A. Rippin, and P.~L. Knight.
\newblock Quantifying entanglement.
\newblock {\em Phys. Rev. Lett.}, 78:2275--2279, 1997.
\newblock
  \texttt{\href{http://dx.doi.org/10.1103/PhysRevLett.78.2275}{DOI:\,10.1103/PhysRevLett.78.2275}}.

\bibitem{VW01}
K.~G.~H. Vollbrecht and R.~F. Werner.
\newblock Entanglement measures under symmetry.
\newblock {\em Phys. Rev. A}, 64:062307, 2001.
\newblock
  \texttt{\href{http://dx.doi.org/10.1103/PhysRevA.64.062307}{DOI:\,10.1103/PhysRevA.64.062307}}.

\bibitem{WR12}
L.~Wang and R.~Renner.
\newblock One-shot classical-quantum capacity and hypothesis testing.
\newblock {\em Phys. Rev. Lett.}, 108:200501, 2012.
\newblock
  \texttt{\href{http://dx.doi.org/10.1103/PhysRevLett.108.200501}{DOI:\,10.1103/PhysRevLett.108.200501}}.

\bibitem{wilde_book}
M.~M. Wilde.
\newblock {\em Quantum Information Theory}.
\newblock Cambridge University Press, 2013.
\newblock
  \texttt{\href{http://dx.doi.org/10.1017/9781316809976}{DOI:\,10.1017/9781316809976}}.

\bibitem{WWY14}
M.~M. Wilde, A.~Winter, and D.~Yang.
\newblock Strong converse for the classical capacity of entanglement-breaking
  and {H}adamard channels via a sandwiched {R}\'enyi relative entropy.
\newblock {\em Communications in Mathematical Physics}, 331(2):593--622, 2014.
\newblock
  \texttt{\href{http://dx.doi.org/10.1007/s00220-014-2122-x}{DOI:\,10.1007/s00220-014-2122-x}}.

\bibitem{winter99}
A.~Winter.
\newblock Coding theorem and strong converse for quantum channels.
\newblock {\em IEEE Transactions on Information Theory}, 45(7):2481--2485,
  1999.
\newblock
  \texttt{\href{http://dx.doi.org/10.1109/18.796385}{DOI:\,10.1109/18.796385}}.

\bibitem{win16}
A.~Winter.
\newblock Tight uniform continuity bounds for quantum entropies: Conditional
  entropy, relative entropy distance and energy constraints.
\newblock {\em Communications in Mathematical Physics}, 347(1):291--313, 2016.
\newblock
  \texttt{\href{http://dx.doi.org/10.1007/s00220-016-2609-8}{DOI:\,10.1007/s00220-016-2609-8}}.

\bibitem{wolf_notes}
M.~M. Wolf.
\newblock Quantum channels \& operations: Guided tour, 2012.
\newblock Lecture notes available at
  \url{https://www-m5.ma.tum.de/foswiki/pub/M5/Allgemeines/MichaelWolf/QChannelLecture.pdf}.

\end{thebibliography}
